\definecolor{mygreen}{rgb}{0,0.6,0}
\definecolor{mygray}{rgb}{0.5,0.5,0.5}
\definecolor{mymauve}{rgb}{0.58,0,0.82}
\ttfamily\color{mygreen}\bfseries,
\newcommand{\ignore}[1]{}
\newcommand{\F}{\mathbb F} % a field
\newcommand{\batya}[1]{{\texttt{\color{blue} Batya: [{#1}]}}}
\newcommand*{\rom}[1]{\expandafter\@slowromancap\romannumeral #1@}
\newcommand{\RNum}[1]{\uppercase\expandafter{\romannumeral #1\relax}}
\newtheorem{defn}{Definition}[section]
\newcommand{\eat}[1]{}
\newtheorem{corollary}[defn]{Corollary}
\newtheorem{proposition}[defn]{Proposition}
\newcommand{\sel}[1]{{\sigma}}
\newcommand{\cut}[1]{}
\def\set#1{\mathord{\{#1\}}}
\def\eqdef{\mathrel{\stackrel{\textsf{\tiny def}}{=}}}
\def\e#1{\emph{#1}}
\newenvironment{citedtheorem}[1]
{\begin{theorem}{\it\e{(#1)}}\,\,}
	{\end{theorem}}
\newenvironment{citedlemma}[1]
{\begin{lemma}{\it\e{(#1)}}\,\,}
	{\end{lemma}}
\newenvironment{citeddefinition}[1]
{\begin{definition}{\it\e{(#1)}}\,\,}
	{\end{definition}}
\newenvironment{repeatresult}[2]
{\vskip0.5em\par\textsc{#1} #2.\em}
{\vskip1em}
\newenvironment{reptheorem}[1]{\begin{repeatresult}{Theorem}{#1}}{\end{repeatresult}}
\newenvironment{replemma}[1]{\begin{repeatresult}{Lemma}{#1}}{\end{repeatresult}}
\newenvironment{repcorollary}[1]{\begin{repeatresult}{Corollary}{#1}}{\end{repeatresult}}
\def\appendix{\par
	\section*{APPENDIX}
	\setcounter{section}{0}
	\setcounter{subsection}{0}
	\def\thesection{\Alph{section}} }
\def\eqdef{\mathrel{\stackrel{\textsf{\tiny def}}{=}}}
\def\e#1{\emph{#1}}
\newcommand{\algname}[1]{{\sf #1}}
\def\myrulewidth{2.80in}
\def\therule{\rule{\myrulewidth}{0.2pt}}
\def\myrulewidthwide{4in}
\def\therulewide{\rule{\myrulewidthwide}{0.2pt}}
\newenvironment{algserieswide}[2]
{\centering\begin{figure}[#1]\begin{center}\def\thecaption{\caption{#2}}
			\begin{tabular}{p{\myrulewidthwide}}\therulewide\end{tabular}\vskip0.2em}
		{\thecaption\end{center}\end{figure}}
\newenvironment{insidealgwide}[2]
{\normalsize\begin{insidecodewide}{#1}{#2}{Algorithm}}
	{\end{insidecodewide}}
\newenvironment{insidecode}[3]
{
	%\small
	\begin{tabular}{p{\myrulewidth}}
		%\toprule
		\multicolumn{1}{c}{\rule{0mm}{3mm}{\bf #3} $\algname{#1}(\mbox{#2})$\vspace{-0.6em}}\\
		\therule\vskip-0.8em\therule
		\vspace{-1em}
		\begin{algorithmic}[1]}
		{\end{algorithmic}
		\vskip-0.4em\therule
\end{tabular}}
\newenvironment{insidecodewide}[3]
{
	%\small
	\begin{tabular}{p{\myrulewidthwide}}
		%\toprule
		\multicolumn{1}{c}{\rule{0mm}{3mm}{\bf #3} $\algname{#1}(\mbox{#2})$\vspace{-0.6em}}\\
		\therulewide\vskip-0.8em\therulewide
		\vspace{-1em}
		\begin{algorithmic}[1]}
		{\end{algorithmic}
		\vskip-0.3em\therulewide
\end{tabular}}
\newcommand{\minlsep}[2]{\mathcal{S}_{#1}(#2)}
\newcommand{\minlsepst}[1]{\mathcal{S}_{s,t}(#1)}
\newcommand{\minsepst}[1]{\mathcal{L}_{s,t}(#1)}
\def\minsep{\mathcal{L}}
\def\sat{\mathrm{Sat}}
\newcommand{\edges}{\texttt{E}}
\newcommand{\nodes}{\texttt{V}}
\definecolor{mygreen}{rgb}{0,0.6,0}
\definecolor{mygray}{rgb}{0.5,0.5,0.5}
\definecolor{mymauve}{rgb}{0.58,0,0.82}
\definecolor{cadmiumgreen}{rgb}{0.0, 0.42, 0.24}
\def\gq1{{\geq}1}
\def\cc{\mathcal{C}}
\newcommand{\minstVertices}[1]{\mathcal{U}^{s,t}_{min}(#1)}
\newif \ifnonplanar
\newcommand{\sminus}{\scalebox{0.75}[1.0]{$-$}}
\newcommand{\splus}{\scalebox{0.75}[1.0]{$+$}}
\begin{document}
\title{Connectivity-Preserving Minimum Separator in AT-free Graphs}
%
%\titlerunning{Abbreviated paper title}
% If the paper title is too long for the running head, you can set
% an abbreviated paper title here
%
\author{Batya Kenig\institute{Technion, Israel Institute of Technology}}

\eat{
\author{First Author\inst{1}\orcidID{0000-1111-2222-3333} \and
Second Author\inst{2,3}\orcidID{1111-2222-3333-4444} \and
Third Author\inst{3}\orcidID{2222--3333-4444-5555}}
}
\authorrunning{Batya Kenig}
% First names are abbreviated in the running head.
% If there are more than two authors, 'et al.' is used.
%
\institute{Technion, Israel Institute of Technology \email{batyak@technion.ac.il}}
\eat{
\institute{Princeton University, Princeton NJ 08544, USA \and
Springer Heidelberg, Tiergartenstr. 17, 69121 Heidelberg, Germany
\email{lncs@springer.com}\\
\url{http://www.springer.com/gp/computer-science/lncs} \and
ABC Institute, Rupert-Karls-University Heidelberg, Heidelberg, Germany\\
\email{\{abc,lncs\}@uni-heidelberg.de}}
}
\maketitle              % typeset the header of the contribution
\begin{abstract}
	Let $A$ and $B$ be disjoint, non-adjacent vertex-sets in an undirected, connected graph $G$, whose vertices are associated with positive weights. 
We address the problem of identifying a minimum-weight subset of vertices $S\subseteq \nodes(G)$ that, when removed, disconnects $A$ from $B$ while preserving the internal connectivity of both $A$ and $B$.
We call such a subset of vertices a \e{connectivity-preserving}, or \e{safe} minimum $A,B$-separator. Deciding whether a safe $A,B$-separator exists is NP-hard by reduction from the \textsc{2-disjoint connected subgraphs} problem~\cite{van_t_hof_partitioning_2009}, and remains NP-hard even for restricted graph classes that include planar graphs~\cite{gray_removing_2012}, and $P_\ell$-free  graphs if $\ell\geq 5$~\cite{van_t_hof_partitioning_2009}.
In this work, we show that if $G$ is AT-free then in polynomial time we can find a safe $A,B$-separator of minimum weight, or establish that no safe $A,B$-separator exists.

\keywords{Graph Separation \and 2-disjoint-connected subgraphs \and AT-Free.}
\end{abstract}
\section{Introduction}
\label{sec:introduction}
Let $G$ be a simple, undirected, connected graph, whose vertices are associated with positive weights, and let $A,B\subseteq \nodes(G)$ be disjoint, non-adjacent subsets of $\nodes(G)$. That is, $G$ does not contain an edge that has an endpoint in $A$ and an endpoint in $B$. 
An $A,B$-separator is a subset of vertices $S\subseteq \nodes(G){\setminus} (A\cup B),$ such that in the graph $G\sminus S$ that results from $G$ by removing $S$ and its adjacent edges, $A$ and $B$ are disconnected; for every pair $a\in A$ and $b\in B$ there is no path in $G\sminus S$ between $a$ and $b$. We say that $S$ is a \e{minimal} $A,B$-separator if no strict subset of $S$ is an $A,B$-separator. 
The objective is to compute an $A,B$-separator with minimum weight to disconnect $A$ from $B$, while preserving the connectivity of $A$ and $B$ in $G\sminus S$. That is,
the graph $G\sminus S$ has two distinct, connected components $A_1$ and $B_1$ such that $A\subseteq A_1$ and $B\subseteq B_1$. For brevity, we call such a connectivity preserving $A,B$-separator a \e{safe} $A,B$-separator. 
In this paper, we study the problem of finding a safe $A,B$-separator of minimum weight, if one exists, or determine that no safe $A,B$-separator exists. We refer to this problem as \textsc{Min Safe Separator}.

Finding minimum separators, and minimum (edge) separators, under connectivity constraints is crucial in various domains, and has been studied extensively~\cite{DBLP:journals/corr/DuanJAX14,2b4537a3-d86e-332e-8839-fbf48c650a0f,DBLP:journals/networks/CullenbineWN13,DBLP:journals/jcss/DuanX14,DBLP:conf/icalp/BentertDFGK24}. In network security, for instance, a common challenge during denial-of-service attacks is to isolate compromised nodes from the rest of the network while preserving the connectivity among un-compromised nodes~\cite{DBLP:journals/corr/DuanJAX14,DBLP:journals/networks/CullenbineWN13}.
Connectivity preserving minimum separators are also used to model clustering problems with constraints enforcing clustering of certain objects together~\cite{DBLP:journals/jcss/DuanX14}. 
The connectivity preserving minimum vertex (edge) separator problem is defined as follows~\cite{DBLP:journals/jcss/DuanX14}. 
The input is a connected graph $G$ with positive vertex (or edge) weights, and three vertices $s,t,v$. The objective is to compute a vertex (edge) separator of minimum weight to disconnect $s$ from $t$, while preserving the connectivity of $s$ and $v$~\cite{DBLP:journals/jcss/DuanX14}. There are two variants of this problem, one that seeks a connectivity preserving minimum edge separator, and one that seeks a connectivity preserving minimum vertex separator. The latter corresponds to \textsc{min safe separator} where $|A|=2$ and $|B|=1$. It was shown in~\cite{DBLP:journals/jcss/DuanX14}, that even if $|A|=2$ and $|B|=1$, it is NP-complete to approximate the minimum safe vertex separator within $\alpha \log(|\nodes(G)|)$, for any constant $\alpha$. Interestingly, Duan and Xu~\cite{DBLP:journals/jcss/DuanX14} showed that the connectivity preserving minimum edge separator can be solved in polynomial time in planar graphs. Recently, Bentert et al.~\cite{DBLP:conf/icalp/BentertDFGK24} extended this result and presented a randomized algorithm that finds a connectivity preserving minimum-cardinality $A,B$-edge separator in planar graphs, in time $2^{|A|+|B|}\cdot |\nodes(G)|$.
\eat{
\textsc{Min Safe Separator} is closely related to the \textsc{Network Diversion} problem, which has been extensively studied by the networks and operations research communities~\cite{2b4537a3-d86e-332e-8839-fbf48c650a0f,DBLP:journals/networks/CullenbineWN13,DBLP:journals/jcss/DuanX14,DBLP:journals/corr/DuanJAX14}. The input to this problem is an undirected graph $G$, two terminal vertices $s,t\in \nodes(G)$, a distinguished vertex $v$, and a weight threshold $W$. The goal is to decide whether there exists a minimal $s,t$-separator, whose weight is at most $W$, that contains $v$~\cite{DBLP:journals/networks/CullenbineWN13}. Observe that a polynomial-time algorithm for the special case of $\textsc{min safe separator}$ where $|A|=|B|=2$, implies a polynomial time algorithm for \textsc{network diversion}. Indeed, $v$ belongs to a minimal $s,t$-separator $S$ if and only if there exist two non-adjacent neighbors of $v$, denoted $a$ and $b$, so that $S$ is a safe separator where $A=\set{s,a}$, and $B=\set{t,b}$. Iterating over the non-adjacent pairs of neighbors of $v$, and computing the appropriate minimum safe separator provides us with the minimum safe separator that contains $v$.
}

\textsc{Min Safe Separator} is the natural optimization variant of the intensively studied \textsc{2-disjoint connected subgraphs} problem~\cite{cygan_solving_2014,gray_removing_2012,paulusma_partitioning_2011,van_t_hof_partitioning_2009,DBLP:conf/wg/TelleV13,kern_disjoint_2022,DBLP:journals/tcs/GolovachKP13}.
The  \textsc{2-disjoint connected subgraphs} problem receives as input an undirected graph $G$, together with two disjoint subsets of vertices $A,B\subseteq \nodes(G)$. The goal is to decide whether there exist two disjoint subsets $A_1,B_1\subseteq \nodes(G)$, such that $A\subseteq A_1$, $B\subseteq B_1$ and $A_1$ and $B_1$ are connected in $G$.
The \textsc{2-disjoint connected subgraphs} problem remains NP-complete even in very restricted settings where $|A|=2$, $|B|$ is unbounded, and the input graph is a line graph (a subclass of claw-free graphs)~\cite{kern_disjoint_2022}. It is also NP-complete for planar graphs~\cite{gray_removing_2012}, and in many other settings~\cite{van_t_hof_partitioning_2009,kern_disjoint_2022}.
In general graphs, \textsc{2-Disjoint Connected Subgraphs} is NP-Complete even if $|A|=|B|=2$ by reduction from the \textsc{Induced Disjoint Paths Problem}~\cite{DBLP:conf/ipco/KawarabayashiK08}. 
It is easy to show that 
deciding whether a safe $A,B$-separator even exists is NP-complete by reduction from \textsc{2-Disjoint Connected Subgraphs}, and the formal proof is deferred to Section~\ref{sec:hardnessOfMinSafeSep} of the Appendix.

We consider \textsc{Min Safe Separator} in the class of \e{asteroidal triple-free graphs}, also known as \e{AT-free} graphs. An \e{asteroidal triple} is a set of three mutually non-adjacent vertices, such that every pair of vertices from this triple is joined by a path that avoids the neighborhood of the third. AT-free graphs are exactly those graphs that contain no such triple. AT-free graphs are intensively studied and include as a subclass the set of cobipartite graphs, cocomparability graphs, cographs, interval graphs, permutation graphs, and trapezoid graphs~\cite{DBLP:journals/siamdm/CorneilOS97,DBLP:journals/tcs/GolovachKP13}. In previous work, Golovach, Kratsch, and Paulusma~\cite{DBLP:journals/tcs/GolovachKP13} presented a dynamic programming algorithm for \textsc{$k$-Disjoint Connected Subgraphs} in AT-free graphs, where $k$ is fixed. For the case of $k=2$ their algorithm involves ``guessing'' at least 12 vertices and has a runtime of $O(n^{15})$ for an $n$-vertex, AT-free graph  (cf.~\cite{DBLP:journals/tcs/GolovachKP13}). Our algorithm goes well beyond deciding whether a safe $A,B$-separator exists, and actually finds a minimum, safe $A,B$-separator (or decides that none exist) orders of magnitude faster.
Towards this goal, we prove new properties of minimal separators in AT-free graphs that may be of independent interest, and show how questions regarding the existence of disjoint connected subgraphs can be translated to questions regarding the existence of certain minimal separators in the graph.
\eat{
Our techniques also differ from those of~\cite{DBLP:journals/tcs/GolovachKP13}, and use the properties of minimal separators in AT-free graphs.}
\begin{theorem}
	Let $G$ be a simple, undirected, connected, weighted AT-free graph, and let $A,B\subseteq \nodes(G)$ be a pair of non-empty, disjoint, vertex-sets. There is an algorithm that finds a safe $A,B$-separator of minimum weight, or establishes that no safe $A,B$-separator exists in time $O(n^4 \cdot T(n,m))$ where $n=|\nodes(G)|,m=|\edges(G)|$, and $T(n,m)$ is the time to find a minimum $s,t$-separator in $G$ for some pair of vertices $s,t \in \nodes(G)$.
\end{theorem}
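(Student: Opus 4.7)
The plan is to exploit the structural richness of AT-free graphs to reduce the search for a minimum safe $A,B$-separator to polynomially many instances of a standard minimum $s,t$-separator computation. First I would observe that any safe $A,B$-separator $S$ partitions $G \setminus S$ into components $A_1 \supseteq A$ and $B_1 \supseteq B$, so that $S$ is in particular a minimal separator between the (connected) vertex-sets $A_1$ and $B_1$. The leverage of AT-freeness should be a structural lemma stating that every minimal separator between two connected vertex sets in an AT-free graph is ``captured'' by a constant number of anchor vertices on each side: intuitively, choosing two well-placed vertices in $A_1$ and two in $B_1$ pins down the separator enough to force any minimum $s,t$-separator between them to respect the internal connectivity of $A$ and of $B$. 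The $n^4$ factor in the runtime matches this intuition.

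The algorithm would iterate over all ordered $4$-tuples $(a_1,a_2,b_1,b_2)$ of vertices of $G \setminus (A \cup B)$, or possibly vertices of $G$ outside the separator candidate region. For each tuple, I would build an auxiliary weighted graph in which $A \cup \{a_1,a_2\}$ is merged into a single source $s$ and $B \cup \{b_1,b_2\}$ into a single sink $t$, assigning these merged vertices weight $+\infty$ so they are never chosen into the separator, and then compute a minimum-weight $s,t$-separator in time $T(n,m)$. Among the $O(n^4)$ candidates thereby produced, I would return the one of smallest weight that actually is a safe $A,B$-separator, which can be verified by a BFS on $G \setminus S$ checking that $A$ lies in one component and $B$ in another; if none of the candidates passes this check I would report that no safe $A,B$-separator exists.

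Soundness is immediate: each candidate $S$ keeps $A \cup \{a_1,a_2\}$ and $B \cup \{b_1,b_2\}$ in opposite components of $G \setminus S$, so the post-hoc connectivity check exactly filters the safe ones. The substantial content lies in completeness: I must show that whenever a safe $A,B$-separator $S^*$ with components $A_1^*, B_1^*$ exists, there is some 4-tuple of anchors $a_1,a_2 \in A_1^*$ and $b_1,b_2 \in B_1^*$ for which the minimum $s,t$-separator in the auxiliary graph weighs at most $w(S^*)$ and is itself safe. The natural choice of anchors is via ``extremal'' vertices in $A_1^*$ and $B_1^*$, obtained from the Corneil--Olariu--Stewart dominating pair theorem and its refinements for AT-free graphs; the hope is that any minimum $s,t$-separator between such anchors cannot detour through $A$ or $B$ and internally disconnect them, because doing so would produce an asteroidal triple together with a third vertex on the opposite side.

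The main obstacle will be exactly this completeness step: ruling out min-cut ``shortcut'' separators that slice through $A$ (respectively $B$) and destroy the ambient connectivity required for safety. Establishing this requires new structural lemmas on minimal separators in AT-free graphs, showing that with the right choice of four anchors no minimum $s,t$-separator between them can cross the neighborhood of $A$ or $B$ in a way that internally disconnects them. I expect these lemmas are precisely the ``new properties of minimal separators in AT-free graphs'' flagged in the introduction, and that the bulk of the paper is devoted to proving them; once they are in hand, correctness and the $O(n^4 \cdot T(n,m))$ runtime follow directly from the enumeration-plus-min-cut template sketched above.
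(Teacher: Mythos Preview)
Your high-level template---enumerate $O(n^4)$ auxiliary instances, solve a min $s,t$-separator in each, return the best safe one---matches the paper's outer loop, but the \emph{source} of the $n^4$ enumeration is completely different, and the completeness lemma you are hoping the paper supplies is not the one it actually proves.

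The paper does not use anchor vertices or dominating pairs. Instead, it fixes $s\in A$, $t\in B$ and introduces the notion of a minimal $s,t$-separator \emph{close to $sA$}: one with $A\subseteq C_s(G\sminus S)$ and with $C_s(G\sminus S)$ inclusion-minimal subject to that constraint (Definition~\ref{def:closeToA}). The set $\F_{sA}(G)$ of such separators plays the role of your anchors on the $A$-side, and symmetrically $\F_{tB}(G)$ on the $B$-side. The core technical theorem (Theorem~\ref{thm:mainThmATFree}) shows that in AT-free graphs $|\F_{sA}(G)|\le n^2$ and the set can be computed in $O(n^2m)$ time; this is where AT-freeness is used, via a hierarchy lemma (Lemma~\ref{lem:heirarchical}) on neighborhoods of components of $G\sminus T_s$ and an asteroidal-triple argument (Claim~4 in the proof of Theorem~\ref{thm:mainThmATFree}). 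Given $S_A\in\F_{sA}(G)$ and $S_B\in\F_{tB}(G)$, the algorithm contracts $C_s(G\sminus S_A)$ to $s$ and $C_t(G\sminus S_B)$ to $t$ and computes a min $s,t$-separator there. Completeness is then a direct consequence of Lemma~\ref{lem:FsaExists}: any safe minimal $A,B$-separator $S^*$ has some $S_A\in\F_{sA}(G)$ with $C_s(G\sminus S_A)\subseteq C_s(G\sminus S^*)$, so $S^*$ survives in the contracted graph $G(S_A,S_B)$, and by Lemma~\ref{lem:contract} \emph{every} minimal $s,t$-separator of $G(S_A,S_B)$ is automatically safe---no post-hoc filtering is needed.

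The gap in your plan is precisely the step you flag: you need that for some choice of four anchor vertices, the min $s,t$-separator after merging is itself safe. Merging $A\cup\{a_1,a_2\}$ into a super-source does not force $A$ to stay connected in $G\sminus S$; a cheap cut could still pass between elements of $A$. You hope the paper's structural lemmas rule this out, but they do not address this scenario at all---they bound the number of close separators, which is a different object. Your approach might be salvageable, but it would require proving a new anchor-type lemma that the paper neither states nor needs; as written, the completeness argument is a conjecture rather than a proof.
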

The problem of finding a minimum $s,t$-separator in an undirected graph can be reduced, by standard techniques~\cite{DBLP:books/daglib/0032640}, to the problem of finding a minimum $s,t$-cut, or maximum flow from $s$ to $t$. 
Following a sequence of improvements to max-flow algorithms in the past few years~\cite{DBLP:conf/stoc/LiuS20,DBLP:conf/focs/KathuriaLS20,DBLP:conf/stoc/BrandLLSS0W21}, the current best running time is $O(m^{1+o(1)})$~\cite{Chen2022,DBLP:conf/focs/Brand0PKLGSS23}. 
\newline

\eat{

Let $s,t \in \nodes(G)$ be non-adjacent vertices. A subset $S\subseteq \nodes(G)$ is a \e{minimum $s,t$-separator} in $G$, if it is an $s,t$-separator, and for every other $s,t$-separator $S'$, it holds that $|S'|\geq |S|$.
In this paper, we study the \textsc{Min Safe Separator} problem under the assumption that the input vertex-sets $A$ and $B$ have the following property: there exists a pair of vertices $s\in A$ and $t\in B$, such that every vertex $v\in A\cup B\setminus \set{s,t}$ belongs to some minimum $s,t$-separator $T_v$. 
Under this assumption, we show that \textsc{min safe separator} can be solved in polynomial time. That is, we can find a safe $A,B$-separator of minimum size, or determine that no safe $A,B$-separator exists, in polynomial time. As a consequence,
under the assumption that the vertex-sets $A$ and $B$ have the required property, we can also solve the \textsc{2-Disjoint Connected Subgraphs} problem in polynomial time.
Specifically, in section~\ref{sec:hardnessOfMinSafeSep} of the Appendix, we show that an instance of the \textsc{2-Disjoint Connected Subgraphs} $\texttt{2Dis}(G,A,B)$ has a solution if and only if there exists a safe $A,B$-separator in the graph $G'$ that results from $G$ by \e{subdividing} its edges.

Importantly, we show that the assumption we make on the input vertex-sets $A$ and $B$ can be verified in polynomial time. Concretely, we show that
given a pair of vertices $s,t\in \nodes(G)$, there is a simple, polynomial time algorithm for deciding  whether there exists a minimum $s,t$-separator $T_v$ that contains a distinguished vertex $v\in \nodes(G)$ (Lemma~\ref{lem:vertexInclude} in Section~\ref{sec:minseps}). This makes our algorithm for the \textsc{Min Safe Separator} problem both efficient, and quite simple (see Figure~\ref{alg:minSafeSep}); the main technical challenge lies in proving its correctness. 
\begin{theorem}
	Let $G$ be an undirected, connected graph, and let $A,B\subseteq \nodes(G)$ be a pair of disjoint, non-adjacent vertex-sets. Let $s\in A$, $t\in B$, such that for every $v\in A\cup B\setminus \set{s,t}$, there exists a minimum $s,t$-separator $T_v$ such that $v\in T_v$. There is an algorithm that finds a safe $A,B$-separator of minimum size, or establishes that no safe $A,B$-separator exists, in time $O(n\cdot T(n,m))$ where $n=|\nodes(G)|,m=|\edges(G)|$, and $T(n,m)$ is the time to find a minimum $s,t$-separator in $G$. 
\end{theorem}
The problem of finding a minimum $s,t$-separator in an undirected graph can be reduced, by standard techniques~\cite{DBLP:books/daglib/0032640}, to the problem of finding a minimum $s,t$-cut, or maximum flow from $s$ to $t$. 
Following a sequence of improvements to max-flow algorithms in the past few years~\cite{DBLP:conf/stoc/LiuS20,DBLP:conf/focs/KathuriaLS20,DBLP:conf/stoc/BrandLLSS0W21}, the current best running time is $O^*(m+n^{1.5})$\footnote{$O^*$ hides poly-logarithmic factors.}~\cite{DBLP:conf/stoc/BrandLLSS0W21}. 
\newline

\textbf{Related work.} 
The \textsc{2-Disjoint Connected Subgraphs} problem remains NP-hard even under very restrictive conditions. It is NP-complete if one of the input vertex-sets contains only two vertices, or if the input graph is a $P_5$-free\footnote{$P_\ell$-free graphs are those that do not contain an induced path over $\ell$ vertices.} split graph; the problem can be solved in polynomial time for co-graphs (i.e., $P_4$-free graphs)~\cite{van_t_hof_partitioning_2009}. 
Motivated by an application in computational-geometry, Gray et al.~\cite{gray_removing_2012} show that the \textsc{2-Disjoint Connected Subgraphs} problem is NP-complete for the class of planar graphs. More recently, Kern et al.~\cite{kern_disjoint_2022} studied the \textsc{2-Disjoint Connected Subgraphs} problem on $H$-free graphs (i.e., all graphs that do not contain the graph $H$ as an induced subgraph).
They show that \textsc{2-Disjoint Connected Subgraphs} remains NP-hard on graphs that contain very simple graphs as an induced subgraph. In particular, they show that \textsc{2-Disjoint Connected Subgraphs} is NP-hard for graphs that contain a $P_4$ as an induced subgraph.

Since no polynomial time algorithm exists even for very limited classes of graphs, previous research has searched for exponential-time exact algorithms.
A na\"ive brute-force algorithm that tries all $2$-partitions of the vertices in $\nodes(G)\setminus (A\cup B)$ runs in time $O(2^nn^{O(1)})$.
Cygan et al.~\cite{cygan_solving_2014} were the first to present an exponential time algorithm for general graphs that is faster than the trivial $O(2^nn^{O(1)})$ algorithm, and runs in time $O^*(1.933^n)$. This result was later improved by Telle and Villanger~\cite{telle_connecting_2013}, that presented an enumeration-based algorithm that runs in time $O^*(1.7804^n)$. For $P_\ell$-free graphs, it was shown that \textsc{2-Disjoint Connected Subgraphs} can be solved in time $O((2-\varepsilon(\ell))^n)$ for $P_\ell$-free graphs. In subsequent work~\cite{paulusma_partitioning_2011}, the authors show that the \textsc{2-Disjoint Connected Subgraphs} problem can be solved in time $O(1.2501^n)$ on $P_6$-free graphs. 

While the approach of restricting the input to \textsc{2-Disjoint Connected Subgraphs} to special graph classes has improved our understanding of the complexity of this problem, and has led to exponential-time exact algorithms, it has not discovered cases where the problem can be solved in polynomial time (i.e., aside from the case where the input graph is $P_4$-free). 
To the best of our knowledge, ours is the first that makes no assumptions on the input graph, but establishes tractability by restricting the vertex-sets that make up the input to this problem. 
This marks a new island of tractability for this important problem, and further contributes to our understanding of its complexity.

The \textsc{$H$-Minor Containment} problem asks whether a given graph $G$ contains $H$ as a minor. In a celebrated result, Robertson and Seymoure  established that  \textsc{$H$-Minor Containment} can be solved in cubic time for every fixed pattern graph $H$~\cite{DBLP:journals/jct/RobertsonS95b}. This result is obtained by an algorithm that solves (a generalization of) \textsc{2-Disjoint Connected Subgraphs} in cubic time when $|A|+|B|$ is fixed.
Contrary to this, deciding whether a safe separator exists remains NP-hard even if $|A|=|B|=2$. We show this by reduction from
the \textsc{Induced disjoint paths} problem. The input to this problem is an undirected graph $G$ and a collection of $k$ vertex pairs $\set{(s_1,t_1),\dots,(s_k,t_k)}$ where $s_i\neq t_i$ and $k\geq 2$. The goal is to determine whether $G$ has a set of $k$ paths that are mutually induced (i.e., they have neither common vertices nor adjacent vertices). The \textsc{Induced disjoint paths} problem remains NP-hard even if $k=2$~\cite{BIENSTOCK199185}. However, when $G$ is a planar graph and $k=2$, the problem can be solved in polynomial time, as shown by 
Kawarabayashi and Kobayashi~\cite{DBLP:conf/ipco/KawarabayashiK08}.
}
\textbf{Organization.} The rest of this paper is organized as follows. Following preliminaries in Section~\ref{sec:Preliminaries}, we establish some basic results on minimal $s,t$-separators, and on minimal separators between vertex sets, in Section~\ref{sec:minimalSeparators}. In Section~\ref{sec:AlgOverview},
we give an overview of the algorithm, high-level pseudo-code, and map the results that need to be proved to establish its correctness and runtime guarantee.
In Section~\ref{sec:charMinlSeps} we prove several results about minimal $s,t$-separators and minimal $s,t$-separators in AT-free graphs in particular. The main theorem behind the algorithm is proved in Section~\ref{sec:MainThm}, where we also present the pseudo-code of the main component. Due to space restrictions, some of the proofs and technical details are deferred to the Appendix.

\section{Preliminaries and Notation}
\label{sec:Preliminaries}
Let $G$ be an undirected graph with nodes $\nodes(G)$ and edges $\edges(G)$, where $n=|\nodes(G)|$, and $m=|\edges(G)|$. We assume a positive weight function on the vertices $w: \nodes(G) \rightarrow \mathbb{Z}^+$. 
We also assume, without loss of generality, that $G$ is connected. 
For $A,B\subseteq \nodes(G)$, we abbreviate $AB\eqdef A\cup B$; for $v\in \nodes(G)$ we abbreviate $vA\eqdef \set{v}\cup A$. Let $v\in V$. We denote by $N_G(v)\eqdef\set{u\in\nodes(G) : (u,v)\in \edges(G)}$ the neighborhood of $v$, and by $N_G[v]\eqdef N_G(v)\cup \set{v}$ the \e{closed} neighborhood of $v$.
For a subset of vertices $T\subseteq \nodes(G)$, we denote by $N_G(T)\eqdef \bigcup_{v\in T}N_G(v){\setminus}T$, and $N_G[T]\eqdef N_G(T)\cup T$.
We denote by $G[T]$ the subgraph of $G$ induced by $T$. Formally, $\nodes(G[T])=T$, and $\edges(G[T])=\set{(u,v)\in \edges(G): \set{u,v}\subseteq T}$. For a subset $S\subseteq \nodes(G)$, we abbreviate $G\sminus S\eqdef G[\nodes(G){\setminus} S]$; for $v\in \nodes(G)$, we abbreviate $G\sminus v\eqdef G\sminus \set{v}$. \eat{For a pair of vertices $x,y\in \nodes(G)$, we denote by $G\splus(x,y)$ the graph that results from $G$ by adding the edge $(x,y)$. For an edge $(x,y)\in \edges(G)$, we denote by $G\sminus (x,y)$ the graph that results from $G$ by removing the edge $(x,y)$.}
We say that $G'$ is a \e{subgraph} of $G$ if it results from $G$ by removing vertices and edges; formally, $\nodes(G')\subseteq \nodes(G)$ and $\edges(G')\subseteq \edges(G)$. In that case, we also say that $G$ is a \e{supergraph} of $G'$.
\eat{
Let $T\subseteq \nodes(G)$, and $t\in \nodes(G)$. 
By \e{merging} $T$ into  vertex $t\in \nodes(G)$, we refer to the operation that adds an edge between $t$ and every vertex in $N_G[T]$. Formally, merging $T$ to $t$ results in the graph $G'$ where: 
\begin{align}
	\nodes(G')\eqdef \nodes(G) && \edges(G')\eqdef \edges(G)\cup \set{(t,u):u\in N_G[T]} \label{eq:mergeDef}
\end{align}
We note the distinction from \e{vertex contraction} or \e{vertex identification}\footnote{\href{https://mathworld.wolfram.com/VertexContraction.html}{https://mathworld.wolfram.com/VertexContraction.html}} where the vertices of $T$ are replaced by a single vertex $t$ that is made adjacent to $N_G(T)$.}
\eat{
Let $u\in \nodes(G)$. We denote by $\sat(G,u)$ the graph that results from $G$ by adding all edges between vertices in $N_G(u)$. Formally:
\begin{align*}
	\label{eq:SAT}
	\nodes(\sat(G,u))=\nodes(G) &&\mbox{ and }&& \edges(\sat(G,u))=\edges(G) \cup\set{(x,y): x,y \in N_G(u)}
\end{align*}
}

Let $(u,v)\in \edges(G)$. \e{Contracting the edge $(u,v)$ to vertex $u$} results in a new graph $G'$ where:
\begin{align*}
	\nodes(G')=\nodes(G){\setminus}\set{v} && \mbox{ and } && \edges(G')=\edges(G\sminus v)\cup \set{(u,x):x\in N_G(v)}
\end{align*}
Let $u,v \in \nodes(G)$. A \e{simple path} between $u$ and $v$, called a $u,v$-path, is a finite sequence of distinct vertices $u=v_1,\dots,v_k=v$ where, for all $i\in [1,k-1]$, $(v_{i},v_{i+1})\in \edges(G)$, and whose ends are $u$ and $v$. A $u,v$-path is \e{chordless} or \e{induced} if $(v_i,v_j)\notin \edges(G)$ whenever $|i-j|>1$. 

A subset of vertices $A\subseteq \nodes(G)$ is said to be \e{connected} in $G$ if $G[A]$ contains a  path between every pair of vertices in $A$. A subset of vertices $A\subseteq \nodes(G)$ is said to be a \e{connected component} of $G$ if $A$ is connected, and $A'$ is not connected for every subset of vertices $A\subset A'\subseteq \nodes(G)$ that properly contains $A$.
Let $A\subseteq \nodes(G)$ and $u\in \nodes(G){\setminus}A$, where $G[uA]$ is connected. \e{Contracting the connected vertex-set $uA$ to vertex $u$} results in a new graph $G'$ where $\nodes(G')=\nodes(G){\setminus}A$ and $ \edges(G')=\edges(G\sminus A)\cup \set{(u,a):a\in N_G(A)}$. It is easy to see that contracting a connected vertex-set $uA$ to $u$ is equivalent to multiple edge contractions. 
\eat{
\begin{definition}
	\label{def:merge}
	Let $T\subseteq \nodes(G)$, and let $t\in \nodes(G)$. By \e{merging} $T$ to vertex $t$ we refer to the graph $G'$ that results from $G$ by adding all edges between $t$ and $N_G[T]$. Formally:
	\begin{align*}
		\nodes(G')\eqdef \nodes(G) && \edges(G')\eqdef \edges(G)\cup \set{(t,u):u\in N_G[T]}
	\end{align*}
\end{definition}
}
\eat{
In this paper, we will denote by $G^{A,B}$ the graph that results from $G$ by merging the vertex-set $A\subseteq \nodes(G)$ to vertex $s$, and $B\subseteq \nodes(G)$ to vertex $t$.}
 \eat{
In case $T=\set{u,t}$ where $e=(u,t)\in \edges(G)$, this process is called \e{contracting} $e$ in $G$, and the resulting graph is referred to as $G/e$.
We observe that if $A\subseteq \nodes(G)$ such that $G[A]$ is connected, then merging all vertices in $A$ (to vertex $a$) is equivalent to contracting all edges in $G[A]$.
}
\eat{A graph $H$ is a \e{minor} of $G$ if it can be obtained from $G$ by a series of edge deletions, vertex deletions, and edge contractions.}

\eat{
A graph $G$ is \e{planar} if it can be embedded in the plane, i.e., it can be drawn on the plane in such a way that its edges intersect only at their endpoints. A well known characterization due to Wagner is that a graph $G$ is planar if and only if it does not have $K_5$ or $K_{3,3}$ as a minor, where $K_5$ is the complete graph on $5$ vertices, and $K_{3,3}$ is the complete bipartite graph $(V_1,V_2,E)$ where $|V_1|=|V_2|=3$. In this paper, we consider the strict superset of planar graphs that do not contain $K_{3,3}$ as a minor, but may contain a $K_5$ as a minor.
}

 Let $V_1,V_2\subseteq \nodes(G)$ denote two disjoint vertex subsets of $\nodes(G)$. We say that $V_1$ and $V_2$ are adjacent if there is at least one pair of adjacent vertices $v_1 \in V_1$ and $v_2\in V_2$. We say that there is a path between $V_1$ and $V_2$ if there exist vertices $v_1 \in V_1$ and $v_2\in V_2$ such that there is a path between $v_1$ and $v_2$. 

 Three mutually non-adjacent vertices of a graph form an \e{asteroidal triple} if every two of them are connected by a path avoiding the neighborhood of the third. A graph is \e{AT-free} if it does not contain any asteroidal triple. By this definition, if $G$ is AT-free, then every induced subgraph of $G$ is AT-free.

\section{Minimal Separators} 
\label{sec:minimalSeparators}
Let $s,t \in 
\nodes(G)$. 
For $X \subseteq \nodes(G)$, we let $\cc(G\sminus X)$ denote the set of connected components of $G\sminus X$. The vertex set $X$ is called a \e{separator} of $G$ if $|\cc(G\sminus X)|\geq 2$, an \e{$s,t$-separator} if $s$ and $t$ are in different connected components of $\cc(G\sminus X)$, and a \e{minimal $s,t$-separator} if no proper subset of $X$ is an $s,t$-separator of $G$. For an $s,t$-separator $X$, we denote by $C_s(G\sminus X)$ and $C_t(G\sminus X)$ the connected components of $\cc(G\sminus X)$ containing $s$ and $t$ respectively.
In other words, $C_s(G\sminus X)=\set{v\in \nodes(G): \text{ there is a path from }s\text{ to }v\text{ in }G\sminus X}$.

\begin{citedlemma}{\cite{DBLP:journals/ijfcs/BerryBC00}}\label{lem:fullComponents}
	An $s,t$-separator $X\subseteq \nodes(G)$ is a minimal $s,t$-separator if and only if $N_G(C_s(G\sminus X))=N_G(C_t(G\sminus X))=X$. \eat{, in which case $C_s(G\sminus X)$ and $C_t(G\sminus X)$ are called \e{full components} of $\cc(G\sminus X)$.}\eat{ 
	there are two connected components $C_s\eqdef C_s(G,X),C_t\eqdef C_t(G,X)\in \cc_G(X)$, such that $s\in C_s$, $t \in C_t$, and $N_G(C_s)=N_G(C_t)=X$; $C_s$ and $C_t$ are called \e{full components} of $\cc_G(X)$.}
\end{citedlemma}
A subset $X\subseteq \nodes(G)$ is a \e{minimal separator} of $G$ if there exist a pair of vertices $u,v \in \nodes(G)$ such that $X$ is a minimal $u,v$-separator. A connected component $C\in \cc(G\sminus X)$ is called a \e{full component} of $X$ if $N_G(C)=X$. By Lemma~\ref{lem:fullComponents}, $X$ is a minimal $u,v$-separator if and only if the components $C_u(G\sminus X)$ and $C_v(G\sminus X)$ are full.
We denote by $\minlsepst{G}$ the set of minimal $s,t$-separators of $G$, and by $\minlsep{}{G}$ the set of minimal separators of $G$. \eat{We say that a graph is \e{$k$-connected} if $|S|\geq k$ for every $S\in \minlsep{}{G}$.}
\eat{
\begin{corollary}
	\label{corr:fullComponentsInduced}
	Let $U\subseteq \nodes(G)$ where $s,t\in U$. Let $T\in \minlsepst{G[U]}$. If $T$ is an $s,t$-separator of $G$, then $T\in \minlsepst{G}$. 
\end{corollary}
\begin{proof}
Let $C_s,C_t \in \cc(G\sminus T)$ be the connected components of $G\sminus T$ containing $s$ and $t$, respectively. Therefore, $T\supseteq N_G(C_s)\cap N_G(C_t)$. Since $G[U]$ is an induced subgraph of $G$, then $C_s(G[U]\sminus T)\subseteq C_s$ and $C_t(G[U]\sminus T)\subseteq C_t$. By Lemma~\ref{lem:fullComponents}, we have that $T=N_{G[U]}(C_s(G[U]\sminus T))\cap N_{G[U]}(C_t(G[U]\sminus T))\subseteq N_G(C_s)\cap N_G(C_t)$. Therefore, $T=N_G(C_s)\cap N_G(C_t)$. Since $T\supseteq N_G(C_s)\cup N_G(C_t)$, we get that  $T=N_G(C_s)= N_G(C_t)$. By Lemma~\ref{lem:fullComponents}, we have that $T\in \minlsepst{G}$.
\end{proof}
}
\eat{
Let $S,T\in \minlsep{}{G}$. We say that $S$ \e{crosses} $T$ if there are vertices $u,v\in T$, such that $S$ is a $u,v$-separator. Crossing is known to be a symmetric relation: $S$ crosses $T$ if and only if $T$ crosses $S$~\cite{DBLP:journals/dam/ParraS97}. Hence, if $S$ crosses $T$, we say that $S$ and $T$ are \e{crossing}, and denote this relationship by $S\sharp T$~\cite{DBLP:journals/dam/ParraS97}. When $S$ and $T$ are non-crossing, then we say that they are \e{parallel}, and denote this by $S\| T$. It immediately follows that if $S$ and $T$  are parallel, then $S\subseteq C_S\cup T$ for some connected component $C_S\in \cc(G\sminus T)$, and $T\subseteq C_T\cup S$ for some $C_T\in \cc(G\sminus S)$.
}
%REMOVED: Good graph
\eat{
\begin{definition}[Good graph]
	\label{def:goodgraph}
	We say that $G$ is a \e{good graph} if for every $S\in \minlsep{}{G}$, it holds that $G\sminus S$ has exactly two connected components.
\end{definition}
Examples of good graphs include \e{claw-free} graphs~\cite{DBLP:conf/wg/BerryW12}, and \e{$3$-connected planar} graphs~\cite{MAZOIT2006372}. A claw, denoted $K_{1,3}$ is a graph on four vertices such that one of them, called the center, is adjacent to the other three vertices which themselves are pairwise non-adjacent. A graph is claw-free if it has no claw as an induced subgraph. A graph $G$ is planar if it does not admit a $K_{3,3}$ or a $K_5$ as a minor; it is $3$-connected planar if it is both planar and $3$-connected.
}
\eat{

\begin{citedlemma}{Submodularity, \cite{DBLP:books/sp/CyganFKLMPPS15}}
	\label{lem:submodularity}
	For any $X,Y \subseteq \nodes(G)$:
	\begin{equation}
		\nonumber
		|N_G(X)|+|N_G(Y)| \geq |N_G(X\cap Y)|+|N_G(X \cup Y)|
	\end{equation}
\end{citedlemma}
Let $S,T\in \minlsepst{G}$. From Lemma~\ref{lem:fullComponents}, we have that $S=N_G(C_s(G\sminus S))$, and $T=N_G(C_s(G\sminus T))$. 
Consequently, we will usually apply Lemma~\ref{lem:submodularity} as follows.
\begin{corollary}
	\label{corr:submodularity}
	Let $S,T\in \minlsepst{G}$ then:
	\begin{align*}
		|S|+|T|\geq |N_G(C_s(G\sminus S)\cap C_s(G\sminus T))|+ |N_G(C_s(G\sminus S)\cup C_s(G\sminus T))|
	\end{align*}
\end{corollary}

Following Kloks and Kratsch~\cite{DBLP:journals/siamcomp/KloksK98}, we say that a minimal $s,t$-separator $S\in \minlsepst{G}$ is \e{close} to $s$ if $S\subseteq N_G(s)$.

\begin{citedlemma}{\cite{DBLP:journals/siamcomp/KloksK98}}
	\label{lem:uniqueCloseVertex}
	If $s$ and $t$ are non-adjacent, then there exists exactly one minimal $s,t$-separator $S\in \minlsepst{G}$ that is close to $s$, which can be found in polynomial time.
\end{citedlemma}
}

\subsection{Minimal Separators Between Vertex-Sets}
\label{sec:minlsepsVertexSets}
Let $A,B \subseteq \nodes(G)$ be disjoint and non-adjacent. A subset $S\subseteq \nodes(G){\setminus}AB$ is an $A,B$-separator if, in the graph $G\sminus S$, there is no path between $A$ and $B$. We say that $S$ is a minimal $A,B$-separator if no proper subset of $S$ is an $A,B$-separator. We denote by $\minlsep{A,B}{G}$ the set of minimal $A,B$-separators of $G$.
In Section~\ref{sec:minsepsvertexsets} of the Appendix, we prove the following two technical lemmas that show how finding minimal separators between vertex-sets can be reduced to the problem of finding minimal separators between singleton vertices. 
\def\simpABlemma{
	Let $A$ and $B$ be two disjoint, non-adjacent subsets of $\nodes(G)$. Then $S\in \minlsep{A,B}{G}$ if and only if $S$ is an $A,B$-separator, and for every $w\in S$, there exist two connected components $C_A,C_B\in \cc(G\sminus S)$ such that $C_A\cap A\neq \emptyset$, $C_B\cap B\neq \emptyset$, and $w\in N_G(C_A)\cap N_G(C_B)$.
}
\begin{lemma}
	\label{lem:simpAB}
	\simpABlemma
\end{lemma}
Observe that Lemma~\ref{lem:simpAB} implies Lemma~\ref{lem:fullComponents}. By Lemma~\ref{lem:simpAB}, it holds that $S\in \minlsepst{G}$ if and only if $S$ is an $s,t$-separator and $S\subseteq N_G(C_s(G\sminus S))\cap N_G(C_t(G\sminus S))$. By definition, $N_G(C_s(G\sminus S))\subseteq S$ and $N_G(C_t(G\sminus S))\subseteq S$, and hence $S=N_G(C_s(G\sminus S))\cap N_G(C_t(G\sminus S))$, and $S=N_G(C_s(G\sminus S))=N_G(C_t(G\sminus S))$.
\def\lemMinlsASep{
	Let $A\subseteq \nodes(G){\setminus}\set{s,t}$. Let $H$ be the graph that results from $G$ by adding all edges between $s$ and $N_G[A]$. That is, $\edges(H)=\edges(G)\cup \set{(s,v):v\in N_G[A]}$. Then $\minlsep{sA,t}{G}=\minlsepst{H}$.
}
\begin{lemma}
	\label{lem:MinlsASep}
	\lemMinlsASep
\end{lemma}

\def\contractEdgesMinlSep{
	Let $u\in N_G(s)$, and let $H$ be the graph that results from $G$ by contracting the edge $(s,u)$ to vertex $s$. Then $\minlsepst{H}=\set{S\in \minlsepst{G}: u\in C_s(G\sminus S)}$.
}
\eat{
	We will also require the following result showing that the minimal $s,t$-separators of a graph are maintained following the contraction of an edge $(s,v)\in \edges(G)$.
\begin{lemma}
	\label{lem:contractEdgesMinlSep}
	\contractEdgesMinlSep
\end{lemma}
\begin{proof}
	Let $S\in \minlsepst{G}$ where $u\in C_s(G\sminus S)$. This means that $N_G[u] \subseteq C_s(G\sminus S) \cup S$. By definition, $\edges(H){\setminus}\edges(G)\subseteq \set{(s,v):v\in N_G[u]}$. In other words, every edge in $\edges(H){\setminus}\edges(G)$ is between $s$ and a vertex in $S\cup C_s(G\sminus S)$. Therefore, $S$ is an $s,t$-separator in $H$. For this reason, it also holds that $C_s(H\sminus S)=C_s(G\sminus S)$ and $C_t(G\sminus S)=C_t(H\sminus S)$. Since $C_s(G\sminus S)$ and $C_t(G\sminus S)$ are full connected components associated with $S$ in $G$, then $C_s(H\sminus S)$ and $C_t(H\sminus S)$ are full connected components associated with $S$ in $H$. Therefore, $S\in \minlsepst{H}$.
	
	Now, let $S\in \minlsepst{H}$. 
	Since $u\notin \nodes(H)$, then $u\notin S$. Let $C_s,C_t \in \cc(H\sminus S)$ be the full connected components associated with $S$ in $H$ that contain $s$ and $t$ respectively. That is, $N_{H}(C_s)=N_{H}(C_t)=S$. Since $u\notin S$, and since $(s,u)\in \edges(G)$, then $G[C_s \cup \set{u}]$ is connected. We claim that $S=N_G(C_s \cup \set{u})$. Since $C_s \cup \set{u}$ is a connected component of $G\sminus S$, then $N_G(C_s \cup \set{u})\subseteq S$. Now, take $v\in S$. Then $v\in N_{H}(x)$ for some vertex $x\in C_s$. If $v\in N_G(x)$, then $v\in N_G(C_s)$, and we are done. Otherwise, $x=s$ because all edges in $\edges(H){\setminus}\edges(G)$ have an endpoint in $s \in C_s$. Since $(s,v) \in \edges(H){\setminus}\edges(G)$, then $v\in N_G(u)$. Therefore, $v\in N_G(C_s\cup \set{u})$. So, we get that $S=N_G(C_s\cup \set{u})=N_G(C_t)$. Therefore, $S\in \minlsepst{G}$ where $C_s(G\sminus S)=C_s\cup \set{u}$.
\end{proof}
The following follows directly from Lemma~\ref{lem:contractEdgesMinlSep}.
\begin{corollary}
	\label{corr:contractEdgesMinlSep}
		Let $A\subseteq \nodes(G)$ such that $G[sA]$ is connected. Let $H$ be the graph that results from $G$ by contracting all edges in $G[sA]$. Then $\minlsepst{H}=\set{S\in \minlsepst{G}: A\subseteq  C_s(G\sminus S)}$.
\end{corollary}
}
\eat{
\subsubsection{Minimal Separators between Vertex-Sets}
Let $A,B \subseteq \nodes(G)$ that are pairwise disjoint, and where $A$ and $B$ are nonempty. A subset $X\subseteq \nodes(G){\setminus} AB$ is an \e{$A,B$-separator} if, in the graph $G\sminus X$, there is no path between $A$ and $B$. 
We say that $X$ is a minimal $A,B$-separator  if no proper subset of $X$ has this property. We denote by $\minlsep{A,B}{G}$ the set of minimal $A,B$-separators in $G$.
\eat{
Let $A, B, C\subseteq \nodes(G)$ that are pairwise disjoint, and where $A$ and $B$ are nonempty.  A subset $X\subseteq \nodes(G){\setminus} ABC$ is an \e{$A,BC$-separator} if, in the graph $G\sminus X$, there is no path between $A$ and $BC$. 
A subset $X\subseteq \nodes(G){\setminus} AB$ is an \e{$A|C,B$-separator} if, in the graph $G\sminus X$, there is no path between $A$ and $BC$. Observe that if $X$ is an $A,BC$-separator, then $X\subseteq \nodes(G){\setminus}ABC$; if $X$ is an $A|C,B$-separator, then $X\subseteq \nodes(G){\setminus}AB$ or, in other words, $X$ may contain vertices from $C$.
We say that $X$ is a minimal $A,BC$-separator (minimal $A|C,B$-separator) if no proper subset of $X$ has this property. We denote by $\minlsep{A,BC}{G}$ and $\minlsep{A|C,B}{G}$ the set of minimal $A,BC$-separators and $A|C,B$-separators in $G$, respectively. Observe that $\minlsep{A,B}{G}\equiv \minlsep{A|\emptyset,B}{G}$. 
}

\def\simpABlemma{
	Let $A,B \subseteq \nodes(G)$ be disjoint and nonempty, and let $C\subseteq \nodes(G){\setminus}AB$. Then $S\in \minlsep{A|C,B}{G}$ if and only if $S$ is an $A,BC$-separator, and the following hold: (1) 
	for every $w\in S$, there exists a connected component $C_A\in \cc(G\sminus S)$ such that $C_A\cap A\neq \emptyset$ and $w\in N_G(C_A)$ (2) If $w\in S{\setminus}C$ then there exists a connected component $C_{BC}\in \cc(G\sminus S)$ such that $C_{BC}\cap BC\neq \emptyset$ and $w\in N_G(C_{BC})$.
}

\def\simpsemiABlemma{
	Let $A$, $B$, and $C$ be disjoint vertex sets, and $A,B$ nonempty subsets of $\nodes(G)$. Then $S\in \semiminlsep{A,B}{G}$ if and only if $S$ is a semi-$A,B$-separator, 
	and (1) for every $w\in S{\setminus}A$, there exists a connected component $C_A\in \cc(G\sminus S)$ such that $C_A\cap A\neq \emptyset$, and $w\in N_G(C_A)$, and (2) for every $w\in S{\setminus}B$, there exists a connected component $C_B\in \cc(G\sminus S)$ such that $C_B\cap B\neq \emptyset$, and $w\in N_G(C_B)$.
}

\begin{lemma}
	\label{lem:simpAB}
	\simpABlemma
\end{lemma}
\begin{proof}
	If $S\in \minlsep{A|C,B}{G}$, then for every $w\in S$ it holds that $S{\setminus} \set{w}$ no longer separates $A$ from $BC$. Hence, there is a path from $a\in A$ to $b\in BC$ in $G\sminus (S{\setminus\set{w}})$. Let $C_a \in \cc(G\sminus S)$ denote the connected component that contains $a$. If $w=b \in S$, it means that $w\in C$. If $a$ and $b$ are connected in  $G\sminus (S{\setminus} \set{w})$, and $w=b\in S$, then $w\in N_G(C_a)$.
	If $b\notin S$, then there exist two connected components $C_a,C_b \in \cc(G\sminus S)$
	containing $a\in A$ and $b\in B$, respectively. Since $C_a$ and $C_b$ are connected in $G\sminus (S{\setminus} \set{w})$, then $w\in N_G(C_a)\cap N_G(C_b)$. Overall, we get that there exists a connected component $C_a \in \cc(G\sminus S)$ such that $A\cap C_a \neq \emptyset$, and $w \in N_G(C_a)$, and that if $w\notin C$, then there also exists a connected component $C_b\in \cc(G\sminus S)$ such that $B\cap C_b \neq \emptyset$, and $w\in N_G(C_B)$.
	
	Suppose that the conditions of the lemma hold, and let $w\in S$. If $w\notin C$, then there exists two connected components $C_A,C_B \in \cc(G\sminus S)$ where $A\cap C_A\neq \emptyset$, $BC\cap C_B \neq \emptyset$, and $w\in N_G(C_A)\cap N_G(C_B)$.Therefore, $w$ connects $C_A$ to $C_B$ in $G\sminus (S\setminus \set{w})$, and hence there is a path from $C_A$ to $C_B$, and hence from $A$ to $B$ in $G\sminus (S{\setminus}\set{w})$. Therefore, for every $w\in S{\setminus}C$, it holds that there is an $A,BC$-path in $S{\setminus}\set{w}$.
	If $w\in C$, then by the assumption of the lemma there exists a connected component $C_A \in \cc(G\sminus S)$ where $A\cap C_A\neq \emptyset$, and $w\in N_G(C_A)$. So, in $G\sminus (S{\setminus}\set{w})$, there is a path from $C_A$ to $w\in C$ in $G\sminus (S{\setminus}\set{w})$. Overall, for every $w\in S$, it holds that there is an $A,BC$-path in $S{\setminus}\set{w}$, and by definition $S\in \minlsep{A|C,B}{G}$.
\end{proof}
\eat{
\begin{lemma}
	\label{lem:simpsemiAB}
	\simpsemiABlemma
\end{lemma}
An immediate Corollary of Lemma~\ref{lem:simpsemiAB} is the following.
\begin{corollary}
	\label{corr:simpsemiAB}
	\simpABlemma
\end{corollary}
\begin{proof}
	Let $S \in \minlsep{A,B}{G}$. Since $\minlsep{A,B}{G} \subseteq \semiminlsep{A,B}{G}$, then $S\in \semiminlsep{A,B}{G}$, where $S\cap AB=\emptyset$. By Lemma~\ref{lem:simpsemiAB}, for every $w\in S$, there exist two connected components $C_A,C_B\in \cc(G\sminus S)$ such that $C_A\cap A\neq \emptyset$, $C_B\cap B\neq \emptyset$, and $w\in N_G(C_A)\cap N_G(C_B)$.
	
	Suppose that the conditions of the lemma hold, and let $w\in S$. Since $w\in N_G(C_A)\cap N_G(C_B)$ for some pair of connected components $C_A,C_B \in \cc(G\sminus S)$ where $A\cap C_A \neq \emptyset$ and $B\cap C_B \neq \emptyset$, then $S{\setminus}\set{w}$ is not an $A,B$-separator of $G$. Therefore, $S\in \minlsep{A,B}{G}$.
\end{proof}
Corollary~\ref{corr:simpsemiAB} implies Lemma~\ref{lem:fullComponents}. By Corollary~\ref{corr:simpsemiAB}, $S\in \minlsepst{G}$ if and only if $S$ is an $s,t$-separator and $S\subseteq N_G(C_s(G\sminus S))\cap N_G(C_t(G\sminus S))$. By definition, $N_G(C_s(G\sminus S))\subseteq S$ and $N_G(C_t(G\sminus S))\subseteq S$, and hence $S=N_G(C_s(G\sminus S))\cap N_G(C_t(G\sminus S))$.
}
\def\simpABCcorr{
	Let $A$ and $B$ be two disjoint, nonempty, non-adjacent subsets of $\nodes(G)$, and let $C\subseteq \nodes(G){\setminus}AB$. Then $S\in \minlsep{A,B|C}{G}$ if and only if $S$ is an $A,B|C$-separator, and for every $w\in S$, there exist two connected components $C_A,C_{BC}\in \cc(G\sminus S)$ such that $C_A\cap A\neq \emptyset$, $C_{BC}\cap BC\neq \emptyset$, and $w\in N_G(C_A)\cap N_G(C_{BC})$.
}
\eat{
\begin{corollary}
	\label{corr:simpABCcorr}
	\simpABCcorr
\end{corollary}
}

\eat{
\def\lemminimalABSeps{
	Let $A$, $B$, $C$, and $D$ be pairwise disjoint subsets of $\nodes(G)$, where $a\in A$, $b\in B$, and where $AC$ and $BD$ are non-adjacent.
	Let $G^{A,B}$ be the graph that results from $G$ by merging $A$ and $B$ into vertices $a$ and $b$ respectively.
	Then $\minlsep{AC,BD}{G}=\minlsep{aC,bD}{G^{A,B}}$.
}

\begin{lemma}
	\label{lem:minimalABSeps}
	\lemminimalABSeps
\end{lemma} 
}

\def\lemMinlsASep{
	Let $s,t \in \nodes(G)$, and $A\subseteq \nodes(G){\setminus}st$. 
	Let $H_A$ be the graph that results from $G$ by merging $A$ to vertex $s$. Then $\minlsepst{H_A}=\minlsep{sA,t}{G}$.
}
\begin{lemma}
	\label{lem:MinlsASep}
	\lemMinlsASep
\end{lemma}
\def\lemMinlsMidASep{
	Let $s,t \in \nodes(G)$, and $A\subseteq \nodes(G){\setminus}st$. 
	Let $G_A$ be the graph that results from $G$ by adding all edges between $A$ and $t$. That is, $\edges(G_A)\eqdef \edges(G) \cup \set{(a,t): a\in A}$. Then $\minlsepst{G_A}=\minlsep{s|A,t}{G}$.
}
\begin{lemma}
	\label{lem:MinlsMidASep}
	\lemMinlsMidASep
\end{lemma}

\batya{REMOVE}

\def\lemSemiMinlBtSep{
	Let $s,t \in \nodes(G)$, and $B\subseteq \nodes(G){\setminus}st$. 
	Let $H_B$ be the graph that results from $G$ by adding all edges from $B$ to $t$. Then:
	\begin{equation}
		\minlsepst{H_B}=\set{T\in  \semiminlsep{s,Bt}{G}: s,t \notin T}
	\end{equation}
}
\begin{lemma}
	\label{lem:SemiMinlBtSep}
	\lemSemiMinlBtSep
\end{lemma}

\eat{
Let $S,T\in \minlsep{}{G}$ be two minimal separators of
$G$. We say that $S$ \e{crosses} $T$ if there are vertices $u$ and $v$ in $T$, such
that $S$ is a $u,v$-separator. Crossing is known to be a symmetric
relation: $S$ crosses $T$ if and only if $T$ crosses $S$~\cite{DBLP:journals/dam/ParraS97}. 
Hence, if $S$ crosses
$T$, we say that $S$ and $T$ are \e{crossing}, and denote this relationship by $S\sharp T$~\cite{DBLP:journals/dam/ParraS97}.
It follows from this definition, and the fact that crossing is a symmetric relationship, that if $S\sharp T$ then there exist two connected components $C_1,C_2\in \cc_G(S)$ such that $C_1\cap T\neq \emptyset$, and $C_2\cap T\neq \emptyset$. %Likewise, there exist two connected components $C'_1,C'_2\in \cc_G(T)$ such that $C'_1\cap S\neq \emptyset$, and $C'_2\cap S\neq \emptyset$.
When $S$ and $T$
are non-crossing, then we say that they are \e{parallel}. It immediately follows that if $S$ and $T$ are parallel (non-crossing) then $S \subseteq C_S\cup T$ for some connected component $C_S \in \cc_G(T)$ and $T \subseteq C_T \cup S$ for some connected component $C_T \in \cc_G(S)$. We denote by $S \| T$ the fact that $S$ and $T$ are parallel minimal separators.

\begin{lemma}
	\label{lem:parallelComponent}
	Let $S, T\in \minlsepst{G}$ be distinct minimal $s,t$-separators, such that  $S \| T$. Then $T\subseteq S \cup C_s(G,S)$ or $T\subseteq S \cup C_t(G,S)$.
\end{lemma}
\begin{proof}
	Since $S \| T$, then by definition, there exists a connected component $C_T\in \cc_G(S)$ such that $T\subseteq C_T\cup S$. Suppose, by way of contradiction, that $C_T\notin \set{C_s(G,S),C_t(G,S)}$. Hence, $C_T\cap (C_s(G,S)\cup C_t(G,S))=\emptyset$. By Lemma~\ref{lem:fullComponents}, $S=N_G(C_s(G,S))=N_G(C_t(G,S))$. Since $T$ separates $s$ from $t$, and $T\cap (C_s(G,S)\cup C_t(G,S))=\emptyset$, then $T\supseteq S$. Since $T\neq S$, then $T \notin \minlsepst{G}$, and we arrive at a contradiction.
\end{proof}
}

\eat{
\batya{remove}
\begin{definition}
	\label{def:2conn}
	We say that a graph $G$ has the \e{two-component-property} if, for every pair of non-adjacent vertices ${u,v}\in \nodes(G)$, it holds that $|\cc(G,S)|=2$ for every $S\in \minlsep{uv}{G}$.
\end{definition}
}

}
\eat{
\subsection{Minimum Separators}
\label{sec:minseps}
A subset $S \subseteq \nodes(G)$ is a \e{minimum $s,t$-separator} of $G$ if $|S'|\geq |S|$ for every other $s,t$-separator $S'$.  We denote by $\kappa_{s,t}(G)$ the size of a minimum $s,t$-separator of $G$, and by $\minsepst{G}$ the set of all minimum $s,t$-separators of $G$; $\kappa_{s,t}(G)$ is called the $s,t$-\e{connectivity} of $G$. 
Similarly, for $A,B\subseteq \nodes(G)$ that are disjoint and non-adjacent, we say that a subset $X\subseteq \nodes(G){\setminus} AB$ is a minimum $A,B$-separator if, for every $A,B$-separator $S$, it holds that $|X|\leq |S|$. We denote by $\minsep_{A,B}(G)$ the set of minimum $A,B$-separators, and by $\kappa_{A,B}(G)$ their size.
Finding a minimum $s,t$-separator can be reduced, by standard techniques~\cite{DBLP:books/daglib/0032640}, to the problem of finding a\eat{ minimum $s,t$-edge-cut, which is equivalent to finding} a maximum flow in the graph~\cite{10.5555/1942094}. Currently, the fastest known algorithm for max-flow runs in almost linear time $m^{1+o(1)}$~\cite{Chen2022}.
\eat{
\begin{citedtheorem}{Menger~\cite{DBLP:books/daglib/0030488}}
	\label{thm:Menger}
	Let $G$ be an undirected graph and $s,t \in \nodes(G)$. The minimum number of vertices separating $s$ from $t$ in $G$ (i.e., $\kappa_{s,t}(G)$) is equal to the maximum number of internally vertex-disjoint $s,t$-paths in $G$.
\end{citedtheorem}
}

%For brevity, we denote by $G{-}v$ the graph $G[\nodes(G){\setminus} \set{v}]$, and by $\kappa(G)$ the connectivity of $G$.
\eat{We recall that for $G-v\eqdef G[\nodes(G)\setminus \set{v}]$ for $v\in \nodes(G)$.}Lemma~\ref{lem:vertexInclude} below defines a simple procedure for testing whether a distinguished vertex $v\in \nodes(G)$ belongs to some minimum $s,t$-separator. The lemma is crucial for the ranked enumeration algorithm, and its proof is deferred to Appendix~\ref{sec:minsepsvertexsets}.
\def\vertexIncludeLem{
		Let $v\in \nodes(G)$. There exists a minimum $s,t$-separator $S\in \minsepst{G}$ that contains $v$ if and only if $\kappa_{s,t}(G\sminus v)=\kappa_{s,t}(G)-1$.
}
\begin{lemma}
	\label{lem:vertexInclude}
\vertexIncludeLem
\end{lemma}
We denote by $\minstVertices{G}$ the vertices that belong to some minimum $s,t$-separator of $G$. Formally:
\begin{equation}
	\label{eq:minstVertices}
	\minstVertices{G} \eqdef \set{v\in \nodes(G) : v\in S \mbox{ and }S\in \minsepst{G}}
\end{equation}
\eat{
\begin{proof}
	Let $\kappa_{s,t}(G)=k$. Let $S\in \minsepst{G}$ be such that $v \in S$. By Theorem~\ref{thm:Menger}, there exist $|S|=k$ $s,t$-paths $P_1,\dots,P_k$ that are internally vertex-disjoint. We claim that for every $i\in \set{1,\dots,k}$ it holds that $|\nodes(P_i)\cap S|=1$.
	If $|\nodes(P_i)\cap S|=0$ then $\nodes(P_i)\cap S=\emptyset$, which means that $\nodes(P_i)\subseteq \nodes(G)\setminus S$, and hence $s$ and $t$ are connected in $G-S$, which is a contradiction.
	If there is some $i\in \set{1,\dots,k}$, such that $|\nodes(P_i)\cap S|\geq 2$, then since $\nodes(P_j)\cap S\neq \emptyset$ for all $j\in \set{1,\dots,k}$, and since the $s,t$-paths are internally vertex-disjoint, then we have that $k-1$ $s,t$-paths must meet at most $k-2$ vertices of $S$. By the pigeon-hole principle, some pair of paths must share a vertex (in $S$), and this is a contradiction to their vertex-disjointness.
	Therefore, $v\in S$ meets exactly one of the $|S|=k$ internally disjoint paths $P_1,\dots,P_k$.
	Consequently, the graph $G-v$ contains exactly  $|S|-1=k-1$ internally vertex-disjoint $s,t$-paths, and by Menger's Theorem, $\kappa_{s,t}(G-v)=|S|-1=k-1$.
	
	Now suppose that $\kappa_{s,t}(G)=k$ and $\kappa_{s,t}(G-v)=k-1$. By Menger's Theorem, $G-v$ has an $s,t$-separator $S$ of size $k-1$ that meets $k-1$ pairwise vertex-disjoint $s,t$-paths $P_1,\dots,P_{k-1}$. Clearly, each of these $k-1$ $s,t$-paths is also included in $G$. Since $\kappa_{s,t}(G)=\kappa_{s,t}(G-v)+1$, there is an $s,t$-path $P'$ in $G-S$. Since $\nodes(P')\subseteq \nodes(G)\setminus S$ but $\nodes(P')\not\subseteq \nodes(G)\setminus (S\cup \set{v})$, or  $\nodes(P')\not\subseteq \nodes(G-v)\setminus S$, we conclude that $v \in \nodes(P')$, and this is the case for every $s,t$-path $P'$ in $G-S$. Hence, $S\cup \set{v}$ is an $s,t$-separator in $G$. Further, since $|S\cup \set{v}|=k=\kappa_{s,t}(G)$, then $S\cup \set{v}$ is a minimum $s,t$-separator in $G$ that contains $v$.
\end{proof}
}
}

\subsection{Minimal $s,t$-Separators: Some Basic Properties}
\label{sec:MaintainingMinlseps}
The following are basic results used by our algorithms. Due to space restrictions, the proofs of Lemmas~\ref{lem:contract} and~\ref{lem:inclusionCsCt} are deferred to Section~\ref{sec:proofsFromPrelims} of the Appendix.
\def\thmSAT{
		Let $u\in \nodes(G)$. Then: (1) $\minlsepst{\sat(G,u)}=\set{S\in \minlsepst{G}: u\notin S}$, and (2) If $G$ is AT-free then $\sat(G,u)$ is AT-free.
}
\def\thmContract{
		Let $u\in N_G(s)$, and let $H$ be the graph that results from $G$ by contracting the edge $(s,u)$ to vertex $s$. Then: (1) $\minlsepst{H}=\set{S\in \minlsepst{G}: u\in C_s(G\sminus S)}$, and (2) If $G$ is AT-free then $H$ is AT-free.
}
\def\lemContract{
		Let $s,t\in \nodes(G)$, and $A\subseteq \nodes(G){\setminus}\set{s,t}$ such that $G[sA]$ is connected. Let $H$ be the graph where $\nodes(H)=\nodes(G){\setminus}A$ that results from $G$ by contracting all edges in $G[sA]$. Then (1) $\minlsepst{H}=\set{S\in \minlsepst{G}: A\subseteq  C_s(G\sminus S)}$, and (2) If $S\in \minlsepst{H}$, then $C_s(G\sminus S)=C_s(H\sminus S)\cup A$ and $C_t(G\sminus S)=C_t(H\sminus S)$. \eat{Also, if $G$ is AT-free then $H$ is AT-free.}
}
\begin{lemma}
	\label{lem:contract}
	\lemContract
\end{lemma}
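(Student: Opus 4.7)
The plan is to prove (1) and (2) together, extracting the component correspondence of (2) as the key intermediate step that enables the minimality arguments in both directions of (1).

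First I would settle the routine set-theoretic observations: since every vertex of $A$ is absorbed into $s$ when forming $H$, we have $A \cap \nodes(H) = \emptyset$, so any separator $S \in \minlsepst{H}$ automatically avoids $A$; and conversely, any $S \in \minlsepst{G}$ satisfying $A \subseteq C_s(G \sminus S)$ also avoids $A$, hence $S \subseteq \nodes(H){\setminus}\set{s,t}$. Next I would establish a ``separation equivalence'' between $G$ and $H$ for such $S$: the forward direction lifts any hypothetical $s,t$-path in $G \sminus S$ to an $s,t$-walk in $H \sminus S$ by mapping every vertex of $\set{s}\cup A$ to the contracted vertex $s$ of $H$; the reverse direction expands any $s,t$-path in $H \sminus S$ to a walk in $G \sminus S$ by traversing $G[sA]$ whenever the path enters the contracted vertex. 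This shows $S$ separates $s$ from $t$ in $G$ iff it does so in $H$.

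The heart of the argument is part (2). Assuming $S$ satisfies the hypotheses of either direction of (1), the inclusion $A \subseteq C_s(G \sminus S)$ is forced because $G[sA]$ is connected and disjoint from $S$. For a vertex $u \notin A \cup S \cup \set{s,t}$, a $u$-to-$s$ path in $G \sminus S$ projects to a $u$-to-$s$ walk in $H \sminus S$ (each $A$-vertex collapses into $s$), and conversely a $u$-to-$s$ path in $H \sminus S$ lifts via edges of $G[sA]$. This proves $C_s(G \sminus S) = C_s(H \sminus S) \cup A$. The $t$-component equality then follows immediately: because $A$ sits entirely in the $s$-component of $G \sminus S$, no vertex of $A$ contributes to the $t$-component, so contraction leaves it unchanged.

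With (2) in hand, minimality transfers cleanly in both directions of (1) using the characterization that $S$ is a minimal $s,t$-separator iff every $v \in S$ has a neighbor in both $C_s$ and $C_t$. The $C_t$ condition in $G$ and in $H$ coincide literally because $C_t(G \sminus S) = C_t(H \sminus S)$ and $v$'s incident edges into this component are the same in both graphs. For the $C_s$ condition, a neighbor of $v$ in $C_s(G \sminus S)$ that lies in $A$ corresponds to $v$ being adjacent to the contracted $s$ in $H$ (hence a neighbor in $C_s(H \sminus S)$), while a neighbor outside $A$ is preserved literally; the reverse correspondence is symmetric. The main obstacle I anticipate is purely bookkeeping: maintaining a clean identification of $\set{s}\cup A$ with the contracted $s$-vertex of $H$ so that the path-lifting and neighbor-translation arguments do not silently create $s,t$-walks that do not exist, and so that edge multiplicities introduced by contraction do not affect the (simple-graph) separation argument.
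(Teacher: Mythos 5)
Your proposal is correct and takes essentially the same route as the paper: establish that, for separators $S$ disjoint from $A$, contraction preserves both the separation property and the components (with $C_s(G\sminus S)=C_s(H\sminus S)\cup A$ and $C_t$ unchanged), and then transfer minimality via the full-component characterization of minimal $s,t$-separators (Lemma~\ref{lem:fullComponents}). The only difference is organizational: you contract all of $G[sA]$ at once using walk projection/lifting, whereas the paper proves the single-edge case $A=\set{u}$, $u\in N_G(s)$, and then inducts on $|A|$.
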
 

\def\inclusionCsCt{
	Let $s,t\in \nodes(G)$, and let $S,T\in \minlsepst{G}$. The following holds:
	\begin{align*}
		C_s(G\sminus S)\subseteq C_s(G\sminus T) &&\Longleftrightarrow&& S\subseteq T\cup C_s(G\sminus T) &&\Longleftrightarrow&&   T\subseteq S\cup C_t(G\sminus S). %\\
	%	C_t(G\sminus S)\subseteq C_t(G\sminus T) &&\Longleftrightarrow&& S\subseteq T\cup C_t(G\sminus T)  &&\Longleftrightarrow&&  T\subseteq  S\cup C_s(G\sminus S)
	\end{align*}
}
\begin{lemma}
	\label{lem:inclusionCsCt}
	\inclusionCsCt
\end{lemma}
\eat{
\begin{proof}
	If $C_s(G\sminus S)\subseteq C_s(G\sminus T)$ then $N_G(C_s(G\sminus S))\subseteq C_s(G\sminus T)\cup N_G(C_s(G\sminus T))$. Since $S,T\in \minlsepst{G}$, then by Lemma~\ref{lem:fullComponents}, it holds that $S=N_G(C_s(G\sminus S))$ and $T=N_G(C_s(G\sminus T))$. Therefore, $S \subseteq C_s(G\sminus T)\cup T$. Hence, $C_s(G\sminus S)\subseteq C_s(G\sminus T) \Longrightarrow S \subseteq C_s(G\sminus T)\cup T$. 
	If $S \subseteq C_s(G\sminus T)\cup T$, then by definition, $S\cap C_t(G\sminus T)=\emptyset$. Therefore, $C_t(G\sminus T)$ is connected in $G\sminus S$. By definition, this means that $C_t(G\sminus T)\subseteq C_t(G\sminus S)$. Therefore, $N_G(C_t(G\sminus T))\subseteq C_t(G\sminus S) \cup N_G(C_t(G\sminus S))$. Since $S,T\in \minlsepst{G}$, then by Lemma~\ref{lem:fullComponents}, it holds that $S=N_G(C_t(G\sminus S))$ and $T=N_G(C_t(G\sminus T))$. Consequently, $T\subseteq S\cup C_t(G\sminus S)$. So, we have shown that
	$C_s(G\sminus S)\subseteq C_s(G\sminus T)  \Longrightarrow S\subseteq T\cup C_s(G\sminus T) \Longrightarrow T\subseteq S\cup C_t(G\sminus S)$.
	If $T\subseteq S\cup C_t(G\sminus S)$, then by definition, $T\cap C_s(G\sminus S)=\emptyset$. Therefore, $C_s(G\sminus S)$ is connected in $G\sminus T$. Consequently, $C_s(G\sminus S)\subseteq C_s(G\sminus T)$. \qed
\end{proof}
}

\eat{
The following follows directly from Lemma~\ref{thm:contract}.
\begin{corollary}
	\label{corr:contractEdgesMinlSep}
	Let $A\subseteq \nodes(G)$ such that $G[sA]$ is connected. Let $H$ be the graph that results from $G$ by contracting all edges in $G[sA]$. Then $\minlsepst{H}=\set{S\in \minlsepst{G}: A\subseteq  C_s(G\sminus S)}$. \eat{Also, if $G$ is AT-free then $H$ is AT-free.}
\end{corollary}
}

\def\closetosminlsep{
Let $G$ be a claw-free graph, and let $S\in \minlsepst{G}$, where $S\subseteq N_G(s)$. Let $u\in S$, and let $G'$ denote the graph that results from $G$ by subdividing the edge $(s,u)$ to $(s,u')$ and $(u',u)$. Then:
\begin{enumerate}
	\item $\sat(G',s)$ is claw-free.
	\item $\minlsepst{\sat(G',s)}=\minlsepst{G}\cup \set{T{\setminus}\set{u}\cup \set{u'}: T\in \minlsepst{G}, u\in T}$
\end{enumerate}
}

\eat{
\subsection{Proof of Theorem~\ref{thm:sat}}
The proof Theorem~\ref{thm:sat} is a direct consequence of Lemma~\ref{lem:SATLem2} and Lemma~\ref{lem:saturateATFree} below.
\def\SATLem1{
		Let $u\in \nodes(G)$. There exists a minimal separator $S\in \minlsep{}{G}$ where $u\in S$ if and only if there exist two distinct vertices $x,y\in N_G(u)$ such that $(x,y)\notin \edges(G)$.
}
\begin{lemma}
	\label{lem:SATLem1}
	\SATLem1
\end{lemma}
\begin{proof}
	Let $S\in \minlsep{}{G}$ such that $u\in S$. By Lemma~\ref{lem:fullComponents}, $S$ has two full connected components $C_1,C_2 \in \cc(G\sminus S)$ such that $S=N_G(C_1)=N_G(C_2)$. Since $u\in S$, then $u \in N_G(C_1)\cap N_G(C_2)$. Let $x\in C_1 \cap N_G(u)$, and $y\in C_2 \cap N_G(u)$. Since $C_1$ and $C_2$ are distinct connected components in $G\sminus S$, then $(x,y) \notin \edges(G)$.
	
	Now, let $x,y\in N_G(u)$, such that $(x,y)\notin \edges(G)$. This means that $\minlsep{x,y}{G}\neq \emptyset$. Let $S\in \minlsep{x,y}{G}$. Since $u\in N_G(x)\cap N_G(y)$, then $u\in S$.
\end{proof}

\def\SATLem2{
	Let $u\in \nodes(G)$. It holds that $\minlsepst{\sat(G,u)}=\set{S\in \minlsepst{G} : u \notin S}$.
}
\begin{lemma}
	\label{lem:SATLem2}
	\SATLem2
\end{lemma}
\begin{proof}
	Let $H\eqdef \sat(G,u)$.
	Let $S\in \minlsepst{G}$ where $u\notin S$. By Lemma~\ref{lem:fullComponents}, there exist two distinct, full connected components $C_s(G\sminus T),C_t(G\sminus S)\in \cc(G\sminus S)$ associated with $S$. Let $C_u\in \cc(G\sminus S)$ denote the connected component of $G\sminus S$ that contains $u$. Hence, $N_G(u)\subseteq C_u\cup S$. Therefore, every edge in $\edges(H){\setminus}\edges(G)$ is between two vertices in $C_u\cup S$. In particular, $C_s(G\sminus S)$ and $C_t(G\sminus S)$ remain disconnected in $H\sminus S$. Hence, $S$ is an $s,t$-separator in $H$. Since $\edges(G)\subseteq \edges(H)$, then $C_s(G\sminus S)\subseteq C_s(H\sminus S)$ and $C_t(G\sminus S)\subseteq C_t(H\sminus S)$. Since $C_s(G\sminus S)$ and $C_t(G\sminus S)$  are full connected components associated with $S$ in $G$, then $C_s(H\sminus S)$ and $C_t(H\sminus S)$ are full connected components associated with $S$ in $H$. By Lemma~\ref{lem:fullComponents}, it holds that $S\in \minlsepst{H}$. Consequently, $\set{S\in \minlsepst{G}: u\notin S}\subseteq \minlsepst{H}$.
	
	Now, let $S\in \minlsepst{H}$.
	Since the neighbors $N_H(u)$ of $u$ in $H$ form a clique, then by Lemma~\ref{lem:SATLem1}, it holds that $u\notin S$. Since $\edges(G)\subseteq \edges(H)$, then $S$ is an $s,t$-separator in $G$. If $S\notin \minlsepst{G}$, then there exists an $S'\subset S$ such that $S'\in \minlsepst{G}$. In particular, $u\notin S' \subset S$. By the previous, it holds that $S'\in \minlsepst{H}$, contradicting the fact that $S\in \minlsepst{H}$.
\end{proof}

\def\saturateATFree{
	Let $u\in \nodes(G)$. If $G$ is AT-free, then $\sat(G,u)$ is AT-free.
}
\begin{lemma}
	\label{lem:saturateATFree}
	\saturateATFree
\end{lemma}
\begin{proof}
	Let $G'\eqdef \sat(G,u)$. Suppose, by way of contradiction that $\set{a,b,c}\subseteq \nodes(G')$ form an asteroidal triple in $G'$. By definition, there is an $a,b$-path $P_{ab}$ where $\nodes(P_{ab})\cap N_{G'}[c]=\emptyset$, an $a,c$-path $P_{ac}$ where $\nodes(P_{ac})\cap N_{G'}[b]=\emptyset$, and a $b,c$-path $P_{bc}$ where $\nodes(P_{bc})\cap N_{G'}[a]=\emptyset$. 
	We may assume that each of these paths contain at most two vertices from $N_G[u]$ because this set of vertices form a clique in $G'$.
	Since $G$ is AT-free, at least one of the paths (i.e., $P_{ab}$, $P_{ac}$, $P_{bc}$) passes though an edge $(x,y)\in \edges(G'){\setminus}\edges(G)$. By definition of $G'$, $x,y\in N_G(u)$. Suppose, wlog, that $P_{ab}$ passes through the edge $(x,y)$. Since $\nodes(P_{ab})\cap N_{G'}[c]=\emptyset$, then $x,y \notin N_{G'}[c]$. We claim that $u\notin N_G[c]$. Indeed, if $u=c$, then since $x,y\in N_G(c)$, then $\nodes(P_{ab})\cap N_G(c)\neq\emptyset$; a contradiction. If $u\in N_G(c)$, then $c\in N_G(u)$, and by the definition of $G'$ it holds that $x,y \in N_{G'}(c)$, again contradicting the assumption that $\nodes(P_{ab})\cap N_{G'}[c]=\emptyset$. Hence, $u\notin N_G[c]$. Consider the path $P_{ab}'$ where we replace the edge $(x,y)$ by the two-path $x,u,y$. Since $\nodes(P_{ab}')=\nodes(P_{ab})\cup\set{u}$ then $\nodes(P_{ab}')\cap N_{G}[c]=\emptyset$. Therefore, there is an $a,b$-path in $G$ that avoids $N_G[c]$. Repeating this reasoning for paths $P_{ac}$ and $P_{bc}$, we get that $a,b,c$ is an asteroidal triple in $G$; a contradiction.
\end{proof}

\subsection{Proof of Theorem~\ref{thm:contract}}
The proof Theorem~\ref{thm:contract} is a direct consequence of Lemma~\ref{lem:contractEdgesMinlSep} and Lemma~\ref{lem:contractEdgeATFree} below.
\def\contractEdgesMinlSep{
		Let $u\in N_G(s)$, and let $H$ be the graph that results from $G$ by contracting the edge $(s,u)$ to vertex $s$. Then $\minlsepst{H}=\set{S\in \minlsepst{G}: u\in C_s(G\sminus S)}$.
}
\begin{lemma}
	\label{lem:contractEdgesMinlSep}
	\contractEdgesMinlSep
\end{lemma}
\begin{proof}
	Let $S\in \minlsepst{G}$ where $u\in C_s(G\sminus S)$. This means that $N_G[u] \subseteq C_s(G\sminus S) \cup S$. By definition, $\edges(H){\setminus}\edges(G)\subseteq \set{(s,v):v\in N_G[u]}$. In other words, every edge in $\edges(H){\setminus}\edges(G)$ is between $s$ and a vertex in $S\cup C_s(G\sminus S)$. Therefore, $S$ is an $s,t$-separator in $H$. For this reason, it also holds that $C_s(H\sminus S)=C_s(G\sminus S)$ and $C_t(G\sminus S)=C_t(H\sminus S)$. Since $C_s(G\sminus S)$ and $C_t(G\sminus S)$ are full connected components associated with $S$ in $G$, then $C_s(H\sminus S)$ and $C_t(H\sminus S)$ are full connected components associated with $S$ in $H$. Therefore, $S\in \minlsepst{H}$.
	
	Now, let $S\in \minlsepst{H}$. 
	Since $u\notin \nodes(H)$, then $u\notin S$. Let $C_s,C_t \in \cc(H\sminus S)$ be the full connected components associated with $S$ in $H$ that contain $s$ and $t$ respectively. That is, $N_{H}(C_s)=N_{H}(C_t)=S$. Since $u\notin S$, and since $(s,u)\in \edges(G)$, then $G[C_s \cup \set{u}]$ is connected. We claim that $S=N_G(C_s \cup \set{u})$. Since $C_s \cup \set{u}$ is a connected component of $G\sminus S$, then $N_G(C_s \cup \set{u})\subseteq S$. Now, take $v\in S$. Then $v\in N_{H}(x)$ for some vertex $x\in C_s$. If $v\in N_G(x)$, then $v\in N_G(C_s)$, and we are done. Otherwise, $x=s$ because all edges in $\edges(H){\setminus}\edges(G)$ have an endpoint in $s \in C_s$. Since $(s,v) \in \edges(H){\setminus}\edges(G)$, then $v\in N_G(u)$. Therefore, $v\in N_G(C_s\cup \set{u})$. So, we get that $S=N_G(C_s\cup \set{u})=N_G(C_t)$. Therefore, $S\in \minlsepst{G}$ where $C_s(G\sminus S)=C_s\cup \set{u}$.
\end{proof}
\def\contractEdgeATFree{
	Let $(x,y)\in \edges(G)$, and let $G'$ be the graph that results from $G$ by contracting the edge $(x,y)$. If $G$ is AT-free then $G'$ is AT-free.
}
\begin{lemma}
	\label{lem:contractEdgeATFree}
	\contractEdgeATFree
\end{lemma}
\begin{proof}
	Let $v_{xy}\in \nodes(G')$ denote the vertex that replaces vertices $x,y\in \nodes(G)$ as a result of the contraction.
	By definition, $N_{G'}(v_{xy})=N_G(x) \cup N_G(y)$. Suppose, by way of contradiction, that $a,b,c\in \nodes(G')$ form an asteroidal triple in $G'$. That is, in $G'$ there exists an $a,b$-path $P_{ab}$ that avoids $N_{G'}[c]$, an $a,c$-path $P_{ac}$ that avoids $N_{G'}[b]$, and a $b,c$-path $P_{bc}$ that avoids $N_{G'}[a]$. If $v_{xy}\notin \nodes(P_{ab})\cup \nodes(P_{ac})\cup \nodes(P_{bc})$ then $a,b,c$ form an asteroidal triple in $G$, which is a contradiction. 
	Suppose, wlog, that $v_{xy}\in \nodes(P_{ab})$. 
	There are two cases: $v_{xy}\in \set{a,b}$ or $v_{xy}$ is on the path between $a$ and $b$.\newline
	
	\noindent{\bf Case 1:} $v_{xy}\in \set{a,b}$. Suppose, wlog that $v_{xy}=a$. This means that $\nodes(P_{bc})\cap N_{G'}[v_{xy}]=\emptyset$. Therefore, $\nodes(P_{bc})\cap (N_{G}[x]\cup N_G[y])=\emptyset$. 
	Let $z$ be $v_{xy}$'s neighbor on the path $P_{ab}$. By definition of contraction, $z\in N_G(x) \cup N_G(y)$. Suppose, wlog, that $z\in N_G(x)$. Similarly, let $w$ be $v_{xy}$'s neighbor on the path $P_{ac}$. By definition of contraction, $w\in N_G(x) \cup N_G(y)$. If $w\in N_G(x)$, then in $G$ there is an $x,b$-path (via $z$) that avoids $N_G[c]$ and an $x,c$-path (via $w$) that avoids $N_G[b]$. Combined with the path $P_{bc}$ that avoids $N_G[x]$, we get that $x,b,c$ for an asteroidal triple in $G$; a contradiction. If $w\notin N_G(x)$, then $w\in N_G(y)$. Then $G$ has an $x,b$-path (via $z$) that avoids $N_G[c]$ and a $y,c$-path (via $w$) that avoids $N_G[b]$. Since $y\in N_G(x)$, then $G$ has an $x,c$-path (via $y$ and $w$) that avoids $N_G[b]$. Combined with the path $P_{bc}$ that avoids $N_G[x]$, we get that $x,b,c$ is an asteroidal triple in $G$; a contradiction. \newline
	
	\noindent{\bf Case 2:}  $v_{xy}\notin \set{a,b}$.  This means that $v_{xy}$ lies on the path $P_{ab}$.  Therefore, $\nodes(P_{bc})\cap N_{G'}[a]=\nodes(P_{bc})\cap N_{G}[a]=\emptyset$, and  $\nodes(P_{ac})\cap N_{G'}[b]=\nodes(P_{ac})\cap N_{G}[b]=\emptyset$. Since $\nodes(P_{ab})\cap N_{G'}[c]=\emptyset$, then $c \notin N_{G'}(v_{xy})$, and hence $c \notin N_G(x) \cup N_G(y)$. Let $w$ and $z$ be the vertices that precede and follow $v_{xy}$ in $P_{ab}$, respectively. By construction, $w,z \in N_G(x) \cup N_G(y)$. If $w,z \in N_G(x)$ or $w,z\in N_G(y)$, then we can replace the subpath $w-v_{xy}-z$ by $w-x-z$ and $w-y-z$, respectively. Since $c \notin N_G(x) \cup N_G(y)$, then $a,b,c$ form an asteroidal triple in $G$; a contradiction. If, wlog $w\in N_G(x){\setminus}N_G(y)$ and $z\in N_G(y){\setminus}N_G(x)$, then we can replace the subpath $w-v_{xy}-z$ in $G'$ by $w-x-y-z$ in $G$, Since $x,y\notin N_G[c]$, then $a,b,c$ form an asteroidal triple in $G$; a contradiction.
\end{proof}	
	
\eat{
\def\contractCCLemma{
	Let $uA\subseteq \nodes(G)$ where $u\notin A$ and  $G[uA]$ is connected, and let $s,t\in \nodes(G){\setminus uA}$. Let $G'$ be the graph that results from $G$ by contracting $uA$ to $u$ (see~\eqref{eq:contractCC}). Then:
	\[
	\set{S\in \minlsepst{G'}: u\notin S}=\set{S\in \minlsepst{G}: \exists C\in \cc(G\sminus S) \mbox{ s.t. }uA \subseteq C}
	\]
}
\begin{lemma}
	\label{lem:contractCC}
	\contractCCLemma
\end{lemma}
\begin{proof}
	Let $S\in \minlsepst{G}$ where $\exists C_u \in \cc(G\sminus S)$ such that $uA\subseteq C_u$. By Lemma~\ref{lem:fullComponents}, there exist two distinct connected components $C_s,C_t \in \cc(G\sminus S)$ where $N_G(C_s)=N_G(C_t)=S$.
	Since $uA \subseteq C_u$ and $N_G(C_u)\subseteq S$, then $N_G(uA)\subseteq C_u\cup S$, and hence $\edges(G'){\setminus}\edges(G)\subseteq \set{(u,x): x\in N_G(A)}\subseteq\set{(u,x): x\in C_u\cup S}$. Since all edges of $\edges(G'){\setminus}\edges(G)$ are between vertices in $C_u\cup S$, then $S$ is an $s,t$-separator in $G'$, where $u \notin S$.
	
	If $C_u\notin \set{C_s,C_t}$, then $C_s\cup C_t \subseteq \nodes(G')$, and $C_s$ and $C_t$ are disconnected in $G'\sminus S$, and $N_{G'}(C_s)=N_G(C_s)$, $N_G(C_t)=N_{G'}(C_t)=S$. By Lemma~\ref{lem:fullComponents}, it holds that $S\in \minlsepst{G'}$, where $u\notin S$.
	If, wlog, $C_u=C_s$, then since $G[uA]$ is connected, and $uA\subseteq C_u$ where $G[C_u]$ is connected, then $G'[C'_u]$ is also connected, where $C'_u\eqdef C_u{\setminus}A$. We claim that $S= N_{G'}(C'_u)$. Since $C'_u$ is a connected component of $G'\sminus S$, then $N_{G'}(C'_u)\subseteq S$. Now, take any vertex $v\in S$. Since $S=N_G(C_u)$, then there is a vertex $x\in C_u$ s.t. $v\in N_G(x)$. If $x\in C'_u$, then $v\in N_{G'}(C'_u)$, and we are done.
	If $x\notin C'_u$, then $x\in A$. But then, $N_G(x)\subseteq N_{G'}(u)$. In particular, $v\in N_G(x) \subseteq N_{G'}(u)$, and since $u\in C'_u$, then $v\in N_{G'}(C'_u)$. Therefore, $N_{G'}(C'_u)=S=N_{G'}(C_t)$, and by Lemma~\ref{lem:fullComponents}, it holds that $S\in \minlsepst{G'}$. Symmetrically, $S\in \minlsepst{G'}$ if $C_u=C_t$.
	
	For the other direction, let $S\in \minlsepst{G'}$, where $u\notin S$. Since $\nodes(G')=\nodes(G){\setminus}A$, then $S\cap uA=\emptyset$. Let $C_s,C_t \in \cc(G'\sminus S)$ be the full connected components associated with $S$ that contain $s$ and $t$ respectively. That is, $N_{G'}(C_s)=N_{G'}(C_t)=S$. If $u\notin C_s \cup C_t$, then $G'[C_s\cup S\cup C_t]=G[C_s \cup S \cup C_t]$, and hence $C_s$ and $C_t$ are full connected components associated with $S$ in $G$, and by Lemma~\ref{lem:fullComponents}, we have that $S\in \minlsepst{G}$. If, wlog, $u\in C_s$, then since $S\cap uA=\emptyset$, and since $G[uA]$ is connected where $u\in C_s$, then $G[C_s \cup A]$ is connected. We claim that $S=N_G(C_s \cup A)$. Since $C_s \cup A$ is a connected component of $G\sminus S$, then $N_G(C_s \cup A)\subseteq S$. Now, take $v\in S$. Then $v\in N_{G'}(x)$ for some vertex $x\in C_s$. If $v\in N_G(x)$, then we are done. Otherwise, $x=u$ because all edges in $\edges(G'){\setminus}\edges(G)$ have an endpoint in $u\in C_s$. Since $(v,u) \in \edges(G'){\setminus}\edges(G)$, then there is some vertex $a\in A$ such that $v\in N_G(a)$. Therefore, $v\in N_G(C_s\cup A)$. So, we get that $S=N_G(C_s\cup A)=N_G(C_t)$. Therefore, $S\in \minlsepst{G}$ where $C_s\cup A \in \cc(G\sminus S)$, and $uA \subseteq C_s \cup A$.
\end{proof}

\paragraph*{Proof of Theorem~\ref{thm:contractCC}.}
By Lemma~\ref{lem:contractCC}, it holds that $\set{S\in \minlsepst{G'}: u\notin S}=\set{S\in \minlsepst{G}: \exists C\in \cc(G\sminus S) \mbox{ s.t. }uA \subseteq C}$. By Lemma~\ref{lem:SATLem2}, we have that $\minlsepst{\sat(G',u)}=\set{S\in \minlsepst{G'}: u\notin S}$. This proves~\eqref{eq:maintainItem2}.
}
}

\section{Algorithm Overview}
\label{sec:AlgOverview}
\def\F{\mathcal{F}}
In this Section, we give an overview of the algorithm, and map the results that need to be proved to establish its correctness and runtime guarantee.
\begin{definition}
	\label{def:closeToA}
	Let $A\subseteq \nodes(G)$. We say that  $S\in \minlsepst{G}$ is \e{close to $sA$} if:
	\begin{enumerate}
		\item $A\subseteq C_s(G\sminus S)$.
		\item For every $T\in \minlsepst{G}{\setminus}\set{S}$, if $A\subseteq C_s(G\sminus T)$ then $C_s(G\sminus T) \not\subseteq C_s(G\sminus S)$.
	\end{enumerate}
\end{definition}
We denote by $\F_{sA}(G)$ the minimal $s,t$-separators that are close to $sA$. By Definition~\ref{def:closeToA}, we have that if $S\in \minlsepst{G}$ where $A\subseteq C_s(G\sminus S)$, then there exists a $T\in \F_{sA}(G)$, such that $A\subseteq C_s(G\sminus T)\subseteq C_s(G\sminus S)$. We make this formal in Lemma~\ref{lem:FsaExists} whose proof is deferrred to Section~\ref{sec:WGAppendixAlgOverview} of the Appendix.
\def\FsaExists{	Let $A\subseteq \nodes(G)$, and let $S\in \minlsepst{G}$ where $A\subseteq C_s(G\sminus S)$. There exists a $T\in \F_{sA}(G)$ where $C_s(G\sminus T)\subseteq C_S(G\sminus S)$.}
\begin{lemma}
	\label{lem:FsaExists}
	\FsaExists
\end{lemma}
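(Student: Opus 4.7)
The plan is to extract $T$ as a minimal element, with respect to $s$-side inclusion, of a finite family of candidate separators, and then verify that this minimality forces $T$ to satisfy both conditions of Definition~\ref{def:closeToA}.

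Concretely, I would consider the family
\[
\mathcal{F} \eqdef \set{T \in \minlsepst{G} : A \subseteq C_s(G\sminus T) \text{ and } C_s(G\sminus T) \subseteq C_s(G\sminus S)}.
\]
This family is non-empty since $S$ itself lies in $\mathcal{F}$ (by hypothesis $A \subseteq C_s(G\sminus S)$, and trivially $C_s(G\sminus S) \subseteq C_s(G\sminus S)$). Since $\nodes(G)$ is finite, $\minlsepst{G}$ is finite, and hence $\mathcal{F}$ is finite. Pick $T \in \mathcal{F}$ whose $s$-side component $C_s(G\sminus T)$ is minimal under set inclusion among $\set{C_s(G\sminus T') : T' \in \mathcal{F}}$.

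I would then argue $T \in \F_{sA}(G)$. The first condition of Definition~\ref{def:closeToA}, namely $A \subseteq C_s(G\sminus T)$, is immediate from $T \in \mathcal{F}$. For the second condition, suppose toward a contradiction that there exists $T' \in \minlsepst{G} \setminus \set{T}$ with $A \subseteq C_s(G\sminus T')$ and $C_s(G\sminus T') \subseteq C_s(G\sminus T)$. By transitivity $C_s(G\sminus T') \subseteq C_s(G\sminus S)$, so $T' \in \mathcal{F}$. The inclusion $C_s(G\sminus T') \subseteq C_s(G\sminus T)$ together with $T' \neq T$ contradicts the minimality of $T$; here I would invoke Lemma~\ref{lem:inclusionCsCt} if needed to deduce that two distinct minimal separators with the same $s$-side component cannot exist (since $S \subseteq T \cup C_s(G\sminus T)$ combined with the symmetric version pins down $T$ from $C_s(G\sminus T)$). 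Finally, $C_s(G\sminus T) \subseteq C_s(G\sminus S)$ holds by $T \in \mathcal{F}$.

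I do not expect this proof to have a hard step; the only subtlety is checking that minimal-on-$s$-side equivalence classes are singletons, so that a strict inclusion $C_s(G\sminus T') \subsetneq C_s(G\sminus T)$ really does contradict minimality rather than just witness another representative of the same class. This is handled cleanly by Lemma~\ref{lem:inclusionCsCt}, which shows that the map $T \mapsto C_s(G\sminus T)$ is injective on $\minlsepst{G}$.
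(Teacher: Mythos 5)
Your proof is correct and is essentially the paper's argument in extremal-principle form: the paper inducts on $|C_s(G\sminus S)|$, descending to a strictly smaller $s$-side component whenever condition (2) of Definition~\ref{def:closeToA} fails, while you take a minimal element of the same candidate family outright; both hinge on the same fact that distinct minimal $s,t$-separators have distinct $s$-side components (so a violating $T'$ yields a \emph{strictly} smaller component). No gap.
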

\eat{
\begin{proof}
	By induction on $|C_s(G\sminus S)|$. If $|C_s(G\sminus S)|=|sA|$, then $C_s(G\sminus S)=sA$. By definition, $S\in \F_{sA}(G)$. Suppose the claim holds for the case where $|C_s(G\sminus S)|\leq k$ for some $k\geq |sA|$, we prove for the case where $|C_s(G\sminus S)|= k+1$. If $S\in \F_{sA}(G)$, then we are done. Otherwise, there exists a $S'\in \minlsepst{G}{\setminus}\set{S}$ where $A\subseteq C_s(G\sminus S')$ and $C_s(G\sminus S')\subseteq C_s(G\sminus S)$. Since $S'\neq S$, then $C_s(G\sminus S')\subset C_s(G\sminus S)$. Since $|C_s(G\sminus S')|< |C_s(G\sminus S)|=k+1$, then by the induction hypothesis, there exists a $T\in \F_{sA}(G)$ where $C_s(G\sminus T)\subseteq C_s(G\sminus S')\subset C_s(G\sminus S)$. \qed
\end{proof}}
\renewcommand{\algorithmicrequire}{\textbf{Input: }}
\renewcommand{\algorithmicensure}{\textbf{Output: }}
\eat{
\begin{algserieswide}
	{H}{Algorithm for returning a minimum, safe $A,B$-separator if one exists and $\bot$ otherwise. \label{fig:MinSafeSep}}
	\begin{insidealgwide}{MinSafeSep}{$G$, $A$, $B$}
		\REQUIRE{AT-free graph $G$, $\emptyset \subset A,B \subseteq \nodes(G)$.}
		\ENSURE{A minimum, safe $A,B$-separator or $\bot$ if none exist}.
		\IF{$A\cap N_G[B] \neq \emptyset$}
			\RETURN $\bot$ \label{line:retBot}
		\ENDIF
		\STATE $G\gets G\sminus (N_G(A)\cap N_G(B))$ \label{line:alwaysIncluded}
		\STATE Let $s\in A$, and $t\in B$
		\STATE $\F_{sA}(G) \gets \algname{CloseTo}(G,s,t,A{\setminus}\set{s})$ \label{line:FA}
		\STATE $\F_{tB}(G) \gets \algname{CloseTo}(G,t,s,B{\setminus}\set{t})$ \label{line:FB}
		\STATE $R \gets \bot$
		\FORALL{$S_A\in \F_{sA}(G)$ and $S_B\in \F_{tB}(G)$} \label{line:loopStarts}
			\IF{$S_A\subseteq S_B \cup C_s(G\sminus S_B)$}
				\STATE Let $G(S_A,S_B)$ be the graph that results from $G$ by contracting $C_s(G\sminus S_A)$ to $s$ and $C_t(G\sminus S_{B})$ to $t$.\label{line:GAB}\COMMENT{Lemma~\ref{lem:contract}} 
				\STATE $T_{AB}\gets \algname{MinSep}(G(S_A,S_B),s,t)$ \label{line:minSep}
				\IF{$R=\bot$ or $|R|>|T_{AB}|$}
					\STATE $R\gets T_{AB}$
				\ENDIF
			\ENDIF
		\ENDFOR \label{line:loopEnds}
		\RETURN $R\cup (N_G(A)\cap N_G(B))$ \label{line:return}
	\end{insidealgwide}
\end{algserieswide}
}
If no restrictions are made to $A\subseteq \nodes(G)$, there may be an unbounded number of minimal $s,t$-separators that are close to $sA$. If $G[sA]$ is connected then, following a result by Takata~\cite{DBLP:journals/dam/Takata10}, the minimal $s,t$-separator close to $sA$ is unique and can be found in time $O(m)$.
\begin{algorithm}[t]
	\SetAlgoLined
	\SetNoFillComment
	\KwIn{Connected, weighted, AT-free graph $G$, and $\emptyset \subset A,B\subseteq \nodes(G)$.}
	\KwOut{A minimum-weight, safe $A,B$-separator, or $\bot$ if none exist.}
	\SetKwProg{Algorithm2}{Algorithm1}{:}{}
	\lIf{$A\cap N_G[B]\neq \emptyset$}{
		\Return $\bot$}  \label{line:retBot}
	$G \gets G\sminus(N_G(A)\cap N_G(B))$ \label{line:alwaysIncluded} \\
		Let $s\in A$, and $t\in B$ \\
	$\F_{sA}(G) \gets \algname{CloseTo}(G,s,t,A{\setminus}\set{s})$ \label{line:FA}\\
	$\F_{tB}(G) \gets \algname{CloseTo}(G,t,s,B{\setminus}\set{t})$ \label{line:FB} \\
	Initialize $R\gets \bot$\\
	\ForAll{$S_A\in \F_{sA}(G)$ and $S_B\in \F_{tB}(G)$}{ \label{line:loopStarts}
		\If{$S_A\subseteq S_B \cup C_s(G\sminus S_B)$}{
			Let $G(S_A,S_B)$ be the graph that results from $G$ by contracting $C_s(G\sminus S_A)$ to $s$ and $C_t(G\sminus S_{B})$ to $t$ \tcp*[r]{Lemma~\ref{lem:contract}.} \label{line:GAB}
			$T_{AB}\gets \algname{MinSep}(G(S_A,S_B),s,t)$ \label{line:minSep} \\
			\lIf{$R=\bot$ or $w(R)>w(T_{AB})$}{
				$R\gets T_{AB}$}}}\label{line:forendMinSafe}
	\Return $R \cup (N_G(A)\cap N_G(B))$ \label{line:return}
	\caption{$\algname{MinSafeSep}$.\label{fig:MinSafeSep}}
\end{algorithm}
\begin{citedlemma}{\cite{DBLP:journals/dam/Takata10}}
	\label{lem:uniqueCloseVertexSet}
	Let $A\subseteq \nodes(G)$, where $G[sA]$ is connected. If $sA\cap N_G[t]=\emptyset$ then $N_G(sA)$ contains a unique minimal $s,t$-separator, which can be found in $O(m)$ time.
\end{citedlemma}
\begin{corollary}
	\label{corr:singleCloseSet}
	Let $A\subseteq \nodes(G)$ such that $G[sA]$ is connected. If $sA\cap N_G[t]=\emptyset$ there exists a unique minimal $s,t$-separator that is close to $sA$, which can be found in $O(m)$ time.
\end{corollary}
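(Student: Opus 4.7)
My plan is to identify $\F_{sA}(G)$ with the singleton $\set{S^*}$, where $S^*$ is the unique minimal $s,t$-separator contained in $N_G(sA)$ provided by Lemma~\ref{lem:uniqueCloseVertexSet}; the $O(m)$-time construction is inherited directly from that lemma. Let $C^* := C_s(G\sminus S^*)$. First, I would verify the first condition of Definition~\ref{def:closeToA}: since $S^* \subseteq N_G(sA)$ is disjoint from $sA$, and $G[sA]$ is connected with $s \in sA$, the whole set $sA$ lies in the $s$-component of $G\sminus S^*$, which is exactly $C^*$.

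The heart of the proof is a stronger claim that settles the second condition: $C^*$ is the $\subseteq$-smallest element of the set $\set{C_s(G\sminus T) : T \in \minlsepst{G},\ A \subseteq C_s(G\sminus T)}$. To prove this, I would fix such a $T$ and write $D := C_s(G\sminus T)$; connectivity of $G[sA]$ with $s \in D$ and $A \subseteq D$ forces $sA \subseteq D$. Every vertex of $N_G(sA)$ is adjacent to $sA \subseteq D$, and since $D$ is a full component of $G\sminus T$ with $N_G(D) = T$, such a vertex must itself lie in $D \cup T$. Therefore $S^* \subseteq N_G(sA) \subseteq T \cup D$, and a single application of Lemma~\ref{lem:inclusionCsCt} yields $C^* \subseteq D$.

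Uniqueness then follows quickly: any $T \in \F_{sA}(G)$ must have $C_s(G\sminus T) = C^*$ by minimality, and the full-component characterization of minimal separators---which recovers $T$ from $C_s(G\sminus T)$ via $T = N_G(C_s(G\sminus T))$---forces $T = N_G(C^*) = S^*$. The main obstacle is the key inclusion $N_G(sA) \subseteq T \cup C_s(G\sminus T)$; once established, Lemma~\ref{lem:inclusionCsCt} converts this local neighborhood statement into the desired global containment of $s$-components, and everything else is bookkeeping.
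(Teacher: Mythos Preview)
Your proposal is correct and follows essentially the same approach as the paper: both take the unique $S^*\in\minlsepst{G}$ with $S^*\subseteq N_G(sA)$ from Lemma~\ref{lem:uniqueCloseVertexSet}, observe that for any $T$ with $A\subseteq C_s(G\sminus T)$ one has $S^*\subseteq N_G(sA)\subseteq T\cup C_s(G\sminus T)$, and then invoke Lemma~\ref{lem:inclusionCsCt} to get $C_s(G\sminus S^*)\subseteq C_s(G\sminus T)$, which by condition~(2) of Definition~\ref{def:closeToA} forces $T=S^*$. Your write-up is slightly more explicit (you spell out why $A\subseteq C_s(G\sminus S^*)$ and why $N_G(sA)\subseteq D\cup T$), but the underlying argument is identical.
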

\begin{proof}
	Let $T\in \minlsepst{G}$ where $T\subseteq N_G(sA)$. By Lemma~\ref{lem:uniqueCloseVertexSet}, $T$ is unique. Let $S\in \minlsepst{G}$ that is close to $sA$. By Definition~\ref{def:closeToA}, $A\subseteq C_s(G\sminus S)$. Therefore, $T \subseteq N_G(sA)\subseteq S\cup C_s(G\sminus S)$. By Lemma~\ref{lem:inclusionCsCt}, we have that $C_s(G\sminus T)\subseteq C_s(G\sminus S)$. By Definition~\ref{def:closeToA}, $S=T$. \qed
\end{proof}
In Section~\ref{sec:MainThm}, we prove that if $G$ is AT-free then $|\F_{sA}(G)|\leq n^2$ ($|\F_{tB}(G)|\leq n^2$), and the set $\F_{sA}(G)$ ($\F_{tB}(G)$) can be computed in time $O(n^2m)$. 
\def\mainThmATFree{
	Let $G$ be an AT-free graph, $s,t\in \nodes(G)$ two distinguished vertices, and $A\subseteq \nodes(G){\setminus}\set{s,t}$. 
	Let $T_s\in \minlsepst{G}$ where $T_s \subseteq N_G(s)$.
	If $A\subseteq C_s(G\sminus T_s)\cup T_s \cup C_t(G\sminus T_s)$, then there are at most $n$ minimal $s,t$-separators that are close to $sA$, and they can be found in time $O(nm)$.
	Otherwise, at most $n^2$ minimal $s,t$-separators are close to $sA$, and they can be found in time $O(n^2m)$.
	
}
\begin{theorem}
	\label{thm:mainThmATFree}
	\mainThmATFree
\end{theorem}
In Section~\ref{sec:MainThm}, we prove Theorem~\ref{thm:mainThmATFree}, and present algorithm $\algname{CloseTo}$ that receives as input an AT-free graph $G$, vertices $s,t\in \nodes(G)$ and subset $A\subseteq \nodes(G)$ ($B\subseteq \nodes(G)$), and computes $\F_{sA}(G)$ ($\F_{tB}(G)$) in time $O(n^2m)$. If $T_s\in \minlsepst{G}$ is the unique minimal $s,t$-separator where $T_s\subseteq N_G(s)$ (Lemma~\ref{lem:uniqueCloseVertexSet}), and if $A\subseteq T_s\cup C_t(G\sminus T_s)$, then algorithm $\algname{CloseTo}$ computes $\F_{sA}(G)$ (or $\F_{tB}(G)$) in time $O(nm)$. 

We now describe the algorithm for \textsc{Min safe sep}, presented in Algorithm~\ref{fig:MinSafeSep}, that receives as input a vertex-weighted, AT-free graph $G$, and a pair of vertex sets $A,B\subseteq \nodes(G)$. If $A\cap N_G[B]\neq \emptyset$, then no $A,B$-separator exists and the algorithm returns $\bot$ in line~\ref{line:retBot}. Every $A,B$-separator must include $N_G(A)\cap N_G(B)$. Therefore, the algorithm processes the graph $G\sminus (N_G(A)\cap N_G(B))$ (line~\ref{line:alwaysIncluded}). Since $G$ is AT-free, then $G\sminus (N_G(A)\cap N_G(B))$ is also AT-free.
The algorithm relates minimal $s,t$-separators to minimal, safe $A,B$-separators using the following.
\def\lemminSafeSepOverview1{
	A subset $S\subseteq \nodes(G)$ is a safe, minimal $A,B$-separator if and only if for every pair of vertices $s\in A$ and $t\in B$ it holds that $S\in \minlsepst{G}$ where $A\subseteq C_s(G\sminus S)$ and $B\subseteq C_t(G\sminus S)$.
}
\begin{lemma}
	\label{lem:minSafeSepOverview1}
	\lemminSafeSepOverview1
\end{lemma}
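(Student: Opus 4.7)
The plan is to reduce both directions to the standard full-component characterization of minimal $s,t$-separators: $S$ is a minimal $s,t$-separator in $G$ if and only if both $C_s(G\sminus S)$ and $C_t(G\sminus S)$ are full components of $S$, i.e., every vertex of $S$ has a neighbor in each of them (the result referenced as Lemma~\ref{lem:fullComponents} earlier in the paper). Once this characterization is in hand, the proof becomes a short path-tracing argument in each direction.

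For the $(\Rightarrow)$ direction, I would assume $S$ is a safe, minimal $A,B$-separator and fix arbitrary $s\in A$, $t\in B$. Safety produces two distinct connected components $A_1\supseteq A$ and $B_1\supseteq B$ of $G\sminus S$, so $C_s(G\sminus S)=A_1$ contains $A$, $C_t(G\sminus S)=B_1$ contains $B$, and $S$ already separates $s$ from $t$. The crux is to upgrade ``$s,t$-separator'' to ``minimal $s,t$-separator''. Fix any $v\in S$. Minimality of $S$ as an $A,B$-separator yields some $a\in A$, $b\in B$, and an $a,b$-path $P$ in $G\sminus (S\sminus\{v\})$; because no $a,b$-path exists in $G\sminus S$, $P$ must traverse $v$. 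Every internal vertex of the subpath from $a$ to $v$ lies in $\nodes(G)\sminus S$ and is connected to $a\in C_s(G\sminus S)$ in $G\sminus S$, so it belongs to $C_s(G\sminus S)$; in particular, the neighbor of $v$ on that subpath lies in $C_s(G\sminus S)$. Symmetrically, $v$ has a neighbor in $C_t(G\sminus S)$. Hence both components are full with respect to $S$, and the characterization gives $S\in\minlsepst{G}$.

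For the $(\Leftarrow)$ direction, the hypothesis applied to any single pair $s_0\in A$, $t_0\in B$ immediately yields safety, since $C_{s_0}(G\sminus S)\supseteq A$ and $C_{t_0}(G\sminus S)\supseteq B$ are distinct components of $G\sminus S$. For minimality of $S$ as an $A,B$-separator, fix any $v\in S$. Because $S$ is a minimal $s_0,t_0$-separator, the full-component characterization gives a neighbor $a'\in C_{s_0}(G\sminus S)$ and a neighbor $b'\in C_{t_0}(G\sminus S)$ of $v$. Concatenating an $s_0,a'$-path inside $C_{s_0}(G\sminus S)$, the edges $(a',v),(v,b')$, and a $b',t_0$-path inside $C_{t_0}(G\sminus S)$ produces an $s_0,t_0$-path in $G\sminus (S\sminus\{v\})$, witnessing that $S\sminus\{v\}$ is not even an $s_0,t_0$-separator, hence not an $A,B$-separator.

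No serious obstacle is anticipated: the entire argument is a careful bookkeeping of which component each vertex of a witnessing path lies in. The only subtle point is the observation, used in both directions, that any $a,b$-path in $G\sminus (S\sminus\{v\})$ with $a\in A\subseteq C_s(G\sminus S)$ and $b\in B\subseteq C_t(G\sminus S)$ must pass through $v$ exactly once and enter/leave it through $C_s(G\sminus S)$ and $C_t(G\sminus S)$, respectively, because $A$ and $B$ are split between precisely these two components of $G\sminus S$.
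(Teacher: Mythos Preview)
Your proof is correct and follows essentially the same strategy as the paper: both directions reduce to the full-component characterization (Lemma~\ref{lem:fullComponents}). The only difference is packaging: the paper invokes an auxiliary characterization of minimal $A,B$-separators (Lemma~\ref{lem:simpAB}, which states that $S$ is a minimal $A,B$-separator iff $S=N_G(C_A)\cap N_G(C_B)$), whereas you unfold that lemma into an explicit path-tracing argument showing each $v\in S$ has neighbors in both components.
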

\eat{
\begin{proof}
	Let $s\in A$, $t\in B$.	If $S\in \minlsepst{G}$ where $A\subseteq C_s(G\sminus S)$ and $B\subseteq C_t(G\sminus S)$ then clearly $S$ is a safe $A,B$-separator. 
	By Lemma~\ref{lem:fullComponents}, it holds that $S=N_G(C_s(G\sminus S))\cap N_G(C_t(G\sminus S))$.
	By Lemma~\ref{lem:simpAB}, it holds that $S$ is a minimal, safe $A,B$-separator.
	
	Now, let $S$ be a minimal, safe $A,B$-separator, where $C_A,C_B\in \cc(G\sminus S)$ contain $A$ and $B$ respectively. 
	By Lemma~\ref{lem:simpAB}, it holds that $S=N_G(C_A)\cap N_G(C_B)$. 
	By Lemma~\ref{lem:fullComponents}, $S\in \minlsepst{G}$ for every pair of vertices $s\in A$ and $t\in B$, where $C_s(G\sminus S)=C_A$ and $C_t(G\sminus S)=C_B$.\qed
\end{proof}
}
Take any $S\in \minlsepst{G}$ such that $A\subseteq C_s(G\sminus S)$ and $B\subseteq C_t(G\sminus S)$. By Lemma~\ref{lem:FsaExists}, there exists an $S_A\in \F_{sA}(G)$ such that $C_s(G\sminus S_A)\subseteq C_s(G\sminus S)$, and an $S_B\in \F_{tB}(G)$ such that $C_t(G\sminus S_B)\subseteq C_t(G\sminus S)$. Let $G(S_A,S_B)$ denote the graph that results from $G$ by contracting $C_s(G\sminus S_A)$ to vertex $s$ and $C_t(G\sminus S_B)$ to vertex $t$. By Lemma~\ref{lem:minSafeSepOverview1}, and Lemma~\ref{lem:contract}, every $T\in \minlsepst{G(S_A,S_B)}$ is a safe, minimal $A,B$-separator.
Consequently, by Lemma~\ref{lem:FsaExists} and Lemma~\ref{lem:minSafeSepOverview1}, we have that $S$ is a minimal, safe, $A,B$-separator if and only if $S\in \minlsepst{G(S_A,S_B)}$ for some pair of minimal $s,t$-separators $S_A\in \F_{sA}(G)$ and $S_B\in \F_{tB}(G)$. Moreover, $S$ is a minimum, safe $A,B$-separator if and only if $S$ is a minimum $s,t$-separator of $G(S_A,S_B)$ for some pair of minimal $s,t$-separators $S_A\in \F_{sA}(G)$ and $S_B\in \F_{tB}(G)$. 
The loop in lines~\ref{line:loopStarts}-\ref{line:forendMinSafe} runs over all pairs $S_A\in \F_{sA}(G)$ and $S_B\in \F_{tB}(G)$, generates the graph $G(S_A,S_B)$ in line~\ref{line:GAB}, and
finds a minimum-weight $s,t$-separator of $G(S_A,S_B)$ in line~\ref{line:minSep}. The algorithm returns the minimum over all combinations of $S_A\in \F_{sA}(G)$ and $S_B\in \F_{tB}(G)$ in line~\ref{line:return}.
\begin{theorem}
	\label{thm:minSafe}
	Given a weighted, connected, AT-free graph $G$, and two vertex-sets $A,B\subseteq \nodes(G)$, Algorithm $\algname{MinSafeSep}$ returns a minimum-weight, safe $A,B$-separator if one exists, or $\bot$ otherwise, and runs in time $O(|\F_{sA}(G)|\cdot |\F_{tB}(G)|\cdot T(n,m))$, where $s\in A,t\in B$, and $T(n,m)$ is the time to compute a minimum-weight $s,t$-separator.
\end{theorem}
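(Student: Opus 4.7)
My plan for proving Theorem~\ref{thm:minSafe} is to separately establish correctness and runtime. The correctness argument reduces to showing that, for any fixed $s\in A$ and $t\in B$, the loop in lines~\ref{line:loopStarts}--\ref{line:forendMinSafe} enumerates \emph{exactly} the safe minimal $A,B$-separators of the reduced graph as minimum-weight $s,t$-separators of the contracted graphs $G(S_A,S_B)$ as $(S_A,S_B)$ ranges over compatible pairs in $\F_{sA}(G)\times \F_{tB}(G)$. The preprocessing is straightforward: if $A\cap N_G[B]\neq \emptyset$ then no $A,B$-separator exists, justifying line~\ref{line:retBot}; otherwise every $A,B$-separator contains $N_G(A)\cap N_G(B)$, so line~\ref{line:alwaysIncluded} may safely remove this set (preserving the AT-free property), and line~\ref{line:return} restores it on output.

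By Lemma~\ref{lem:minSafeSepOverview1}, a subset $S$ is a safe minimal $A,B$-separator iff $S\in \minlsepst{G}$ with $A\subseteq C_s(G\sminus S)$ and $B\subseteq C_t(G\sminus S)$. Given such $S$, Lemma~\ref{lem:FsaExists} provides $S_A\in \F_{sA}(G)$ with $C_s(G\sminus S_A)\subseteq C_s(G\sminus S)$ and $S_B\in \F_{tB}(G)$ with $C_t(G\sminus S_B)\subseteq C_t(G\sminus S)$. The main obstacle I anticipate is verifying that this pair $(S_A,S_B)$ satisfies the compatibility test of line~\ref{line:GAB}, namely $S_A\subseteq S_B\cup C_s(G\sminus S_B)$. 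I would establish this by chaining two applications of Lemma~\ref{lem:inclusionCsCt}. From $C_t(G\sminus S_B)\subseteq C_t(G\sminus S)$, the symmetric form of Lemma~\ref{lem:inclusionCsCt} (with the roles of $s$ and $t$ swapped) gives $S\subseteq S_B\cup C_s(G\sminus S_B)$, which by Lemma~\ref{lem:inclusionCsCt} is equivalent to $C_s(G\sminus S)\subseteq C_s(G\sminus S_B)$; composing with $C_s(G\sminus S_A)\subseteq C_s(G\sminus S)$ yields $C_s(G\sminus S_A)\subseteq C_s(G\sminus S_B)$, which is precisely the required compatibility by another invocation of Lemma~\ref{lem:inclusionCsCt}. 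As a bonus, this inclusion also forces $C_s(G\sminus S_A)\cap C_t(G\sminus S_B)=\emptyset$, so the two contractions defining $G(S_A,S_B)$ commute and $s,t$ remain non-adjacent in the contracted graph.

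Once compatibility is secured, two applications of Lemma~\ref{lem:contract} identify $\minlsepst{G(S_A,S_B)}$ with the set of $T\in \minlsepst{G}$ satisfying $C_s(G\sminus S_A)\subseteq C_s(G\sminus T)$ and $C_t(G\sminus S_B)\subseteq C_t(G\sminus T)$. Every such $T$ then contains $A$ in $C_s(G\sminus T)$ and $B$ in $C_t(G\sminus T)$, so Lemma~\ref{lem:minSafeSepOverview1} makes $T$ a safe minimal $A,B$-separator. Conversely, the original $S$ lies in $\minlsepst{G(S_A,S_B)}$ by our choice of $S_A,S_B$, so the minimum-weight $s,t$-separator of $G(S_A,S_B)$ returned in line~\ref{line:minSep} has weight at most $w(S)$ and is itself a safe $A,B$-separator. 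Taking the minimum over all compatible pairs therefore yields the global minimum-weight safe $A,B$-separator, and if no compatible pair produces any $s,t$-separator then $R$ remains $\bot$, correctly reporting that no safe $A,B$-separator exists.

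For the runtime, Theorem~\ref{thm:mainThmATFree} produces $\F_{sA}(G)$ and $\F_{tB}(G)$, together with the full components $C_s(G\sminus\cdot)$ and $C_t(G\sminus\cdot)$ needed downstream, in $O(n^2m)$ time. Each iteration of the main loop tests compatibility in $O(n)$, builds $G(S_A,S_B)$ by contraction in $O(n+m)$, and invokes the minimum $s,t$-separator oracle in $O(T(n,m))$. Since $T(n,m)=\Omega(n+m)$, summing over the $|\F_{sA}(G)|\cdot|\F_{tB}(G)|$ pairs gives the claimed $O(|\F_{sA}(G)|\cdot|\F_{tB}(G)|\cdot T(n,m))$ bound, which absorbs the preprocessing cost.
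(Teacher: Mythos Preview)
Your proposal is correct and follows essentially the same approach as the paper's argument in Section~\ref{sec:AlgOverview}: both use Lemma~\ref{lem:minSafeSepOverview1} to characterize safe minimal $A,B$-separators as minimal $s,t$-separators with the right component inclusions, invoke Lemma~\ref{lem:FsaExists} to extract $S_A,S_B$, and apply Lemma~\ref{lem:contract} to identify $\minlsepst{G(S_A,S_B)}$ with the desired set. Your treatment is in fact slightly more thorough than the paper's informal discussion, since you explicitly verify the compatibility test $S_A\subseteq S_B\cup C_s(G\sminus S_B)$ via chained applications of Lemma~\ref{lem:inclusionCsCt} (a step the paper leaves implicit), and you confirm that $s$ and $t$ remain non-adjacent in $G(S_A,S_B)$.
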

Theorem~\ref{thm:mainThmATFree} establishes that in AT-free graphs, if no assumptions are made to the input vertex-sets $A,B\subseteq \nodes(G)$, then $|\F_{sA}(G)|\leq n^2$, $|\F_{tB}(G)|\leq n^2$, and that $\F_{sA}(G)$ and $\F_{tB}(G)$ can be computed in time $O(n^2m)$. It further establishes that if there exist vertices $s\in A$, and $t\in B$ such that $A{\setminus}\set{s}\subseteq C_s(G\sminus T_s)\cup T_s \cup C_t(G\sminus T_s)$ and $B{\setminus}\set{t}\subseteq C_t(G\sminus T_t)\cup T_t\cup C_s(G\sminus T_t)$ where $T_s,T_t\in \minlsepst{G}$ are the unique minimal $s,t$-separators close to $s$ and $t$, respectively (Lemma~\ref{lem:uniqueCloseVertexSet}), then $|\F_{sA}(G)|\leq n$, $|\F_{tB}(G)|\leq n$, and $\F_{sA}(G)$ and $\F_{tB}(G)$ can be computed in time $O(nm)$. It immediately follows from Theorem~\ref{thm:mainThmATFree} and Theorem~\ref{thm:minSafe}:
\begin{corollary}
	\label{corr:minSafe}
	Let $G$ be a weighted, connected, AT-free graph, and $A,B\subseteq \nodes(G)$. If there exist vertices $s\in A$ and $t\in B$ such that $A{\setminus}\set{s}\subseteq C_s(G\sminus T_s)\cup T_s \cup C_t(G\sminus T_s)$ and $B{\setminus}\set{t}\subseteq C_t(G\sminus T_t)\cup T_t\cup C_s(G\sminus T_t)$ then
	Algorithm $\algname{MinSafeSep}$ returns a minimum-weight, safe $A,B$-separator if one exists, or $\bot$ otherwise, in time $O(n^2T(n,m))$. Otherwise, the runtime is $O(n^4T(n,m))$. 
\end{corollary}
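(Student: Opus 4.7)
The plan is to derive this corollary as an essentially immediate consequence of Theorem~\ref{thm:minSafe} together with Theorem~\ref{thm:mainThmATFree}. Correctness is independent of the two regimes and inherited verbatim from Theorem~\ref{thm:minSafe}: for any choice of $s \in A$ and $t \in B$, Algorithm~$\algname{MinSafeSep}$ returns a minimum-weight safe $A,B$-separator when one exists and $\bot$ otherwise. So the only thing left to do is to bound the runtime in each of the two stated cases.

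Theorem~\ref{thm:minSafe} gives a runtime of $O(|\F_{sA}(G)|\cdot|\F_{tB}(G)|\cdot T(n,m))$, on top of the one-time cost of constructing $\F_{sA}(G)$ and $\F_{tB}(G)$ in lines~\ref{line:FA}--\ref{line:FB}. I would then combine this formula with the two bounds provided by Theorem~\ref{thm:mainThmATFree}. In the general (fallback) regime, that theorem yields $|\F_{sA}(G)|\leq n^2$ and $|\F_{tB}(G)|\leq n^2$, with both collections computable in $O(n^2 m)$ time; multiplying the cardinalities gives $O(n^4 T(n,m))$, and the preprocessing is absorbed because any algorithm for a minimum $s,t$-separator must read its input, so $T(n,m)=\Omega(m)$. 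In the special regime, I would invoke the stronger half of Theorem~\ref{thm:mainThmATFree}: whenever the hypothesis $A\sm\set{s}\subseteq C_s(G\sminus T_s)\cup T_s \cup C_t(G\sminus T_s)$ (respectively its analogue for $B$) holds for the vertex $s$ (respectively $t$) actually used by the algorithm, one obtains $|\F_{sA}(G)|\leq n$ (respectively $|\F_{tB}(G)|\leq n$), each computable in $O(nm)$. Multiplying yields $O(n^2 T(n,m))$, again dominating the preprocessing.

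The one subtlety, which I expect to be the main (though mild) obstacle, is that Algorithm~$\algname{MinSafeSep}$ picks $s$ and $t$ arbitrarily in line~3, whereas the hypothesis only asserts the \emph{existence} of a good pair. I would handle this by adding a cheap search phase: for each candidate pair $(s,t)\in A\times B$, compute the unique minimal $s,t$-separators $T_s$ and $T_t$ close to $s$ and $t$ respectively (which exist and are computable in $O(m)$ time by Lemma~\ref{lem:uniqueCloseVertexSet}), and verify the two containment conditions in $O(n)$ time. This search costs $O(n^2 m)$ overall, which is dominated by $n^2 T(n,m)$. If a good pair is found, the algorithm uses it and we obtain the $O(n^2 T(n,m))$ bound via the special half of Theorem~\ref{thm:mainThmATFree}; if no such pair exists, the algorithm falls back to the general $O(n^4 T(n,m))$ bound. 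With this bookkeeping the corollary drops out directly from the two cited theorems, and no new graph-theoretic argument is needed.
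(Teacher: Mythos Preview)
Your proposal is correct and follows essentially the same approach as the paper, which simply states that the corollary ``immediately follows from Theorem~\ref{thm:mainThmATFree} and Theorem~\ref{thm:minSafe}.'' You are in fact more careful than the paper about the subtlety of selecting the witnessing pair $(s,t)$; the paper glosses over this, so your added search phase is a welcome clarification rather than a departure.
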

\eat{From Theorem~\ref{thm:minSafe} it follows that the runtime of $\algname{MinSafeSep}$ is in $O(n^4\cdot T(n,m))$ where} Recall that  $T(n,m)=O(m^{1+o(1)})$ is the time to compute a minimum-weight $s,t$-separator~\cite{Chen2022}. The rest of this paper is devoted to proving Theorem~\ref{thm:mainThmATFree}.
\eat{
\def\mainThmATFree{
	Let $G$ be an AT-free graph, $s,t\in \nodes(G)$ two distinguished vertices, and $A\subseteq \nodes(G){\setminus}\set{s,t}$. There exist at most $n^2$ minimal $s,t$-separators that are close to $sA$, and they can be found in time $O(n^2m)$.
}
\begin{theorem}
	\label{thm:mainThmATFree}
	\mainThmATFree
\end{theorem}
}
\eat{
The rest of this paper is devoted to proving that for an AT-free graph $G$, and a subset of vertices $sA \subseteq \nodes(G)$, the set $\F_{sA}(G)$ can be computed in time $O(n^2m)$ and $|\F_{sA}(G)|\leq n^2$. }
\section{Essential Findings: Minimal $s,t$-separators}
\label{sec:charMinlSeps}
In this Section, we prove several results about minimal $s,t$-separators that are crucial for proving Theorem~\ref{thm:mainThmATFree}.\eat{, which establishes that for an AT-free graph $G$, and a subset of vertices $A \subseteq \nodes(G)$, the set $\F_{sA}(G)$ can be computed in time $O(n^2m)$ and $|\F_{sA}(G)|\leq n^2$.} 
In Section~\ref{sec:moreResults}, we establish a result concerning minimal $s,t$-separators close to $sA$, where $A\subseteq \nodes(G)$ (Definition~\ref{def:closeToA}).
In Section~\ref{sec:MinlSepsATFree}, we establish results on minimal $s,t$-separators in AT-free graphs. 
Some of the proofs are deferred to Section~\ref{sec:AppendixProofsEssentialMiinlstseps} of the Appendix.

\eat{
\subsection{A new Characterization of Parallel Minimal $s,t$-separators}
\label{sec:charParallelMinlSeps}
In this section, we refine the general characterization of crossing and parallel minimal separators established by Parra and Scheffler~\cite{DBLP:journals/dam/ParraS97}. The refined characterization is instrumental for the algorithm, and may be of independent interest.
Given a minimal separator $S\in \minlsep{}{G}$, it was shown in~\cite{DBLP:journals/dm/Heggernes06} that $T\in \minlsep{}{G}$ is parallel to $S$ (i.e., $T\| S$) if and only if $T\in \minlsep{}{G'}$ where $G'$ is the graph that results from $G$ by adding edges such that $G[S]$ becomes a clique.
\begin{citedlemma}{\cite{DBLP:journals/dm/Heggernes06}}
	\label{lem:saturateSep}
	Let $S\in \minlsep{}{G}$, and let $G'$ be the graph that results from $G$ by turning $G[S]$ to a clique. Then $\minlsep{}{G'}=\set{T\in \minlsep{}{G}: T\| S}$.
\end{citedlemma}
In Theorem~\ref{thm:parallel}, we propose a different approach that better fits our purposes, and is tailored to minimal $s,t$-separators. 
\eat{
\begin{theorem}
	\label{thm:parallel}
	Let $S,T\in \minlsepst{G}$. The following holds:
	\begin{equation}
		S\|T \text{ if and only if } T\subseteq S\cup C_s(G\sminus S) \text{ or }T\subseteq S\cup C_t(G\sminus S) \label{eq:thmparallel1}
	\end{equation}
	In addition,
	\begin{align}
		T\subseteq S\cup C_s(G\sminus S) &&\text{ if and only if }&& C_s(G\sminus T)\subseteq C_s(G\sminus S)  \label{eq:thmparallel2} \\
		T\subseteq S\cup C_t(G\sminus S) &&\text{ if and only if }&&  C_s(G\sminus T)\supseteq C_s(G\sminus S) \label{eq:thmparallel3}
	\end{align}
\end{theorem}
The following Lemma follows from Theorem~\ref{thm:parallel}, and proved in Section~\ref{sec:ProofsForParallelstSeps} of the Appendix.
\begin{lemma}
	\label{lem:addEdgesFromSTot}
	Let $S\in \minlsepst{G}$, and let $H_1$ ($H_2$) be the graph that results from $G$ by adding all edges from $S$ to $t$ (from $S$ to $s$). That is, $\edges(H_1)=\edges(G)\cup \set{(v,t):v\in S}$ and $\edges(H_2)=\edges(G)\cup \set{(s,v):v\in S}$. Then:
	\begin{align}
		\minlsepst{H_1}&=\set{Q\in \minlsepst{G}: Q\subseteq S\cup C_s(G\sminus S)} \label{eq:addEdgesStot1_app}\\
		\minlsepst{H_2}&=\set{Q\in \minlsepst{G}: Q\subseteq S\cup C_t(G\sminus S)}, \text{ and } \label{eq:addEdgesStot2_app}\\
		\minlsepst{H_1}\cup \minlsepst{H_2}&=\set{Q\in \minlsepst{G}:Q\|S} \label{eq:addEdgesStot3_app}
	\end{align}
\end{lemma}
}
\def\inclusionCsCt{
	Let $S,T\in \minlsepst{G}$. The following holds:
	\begin{align*}
		C_s(G\sminus S)\subseteq C_s(G\sminus T) &&\Longleftrightarrow&& S\subseteq T\cup C_s(G\sminus T) &&\Longleftrightarrow&&   T\subseteq S\cup C_t(G\sminus S) \\
		C_t(G\sminus S)\subseteq C_t(G\sminus T) &&\Longleftrightarrow&& S\subseteq T\cup C_t(G\sminus T)  &&\Longleftrightarrow&&  T\subseteq  S\cup C_s(G\sminus S)
		%	C_t(G\sminus S)\subseteq C_t(G\sminus T) &&\text{ if and only if }&& T\in S\cup C_s(G\sminus S) \text{ and }S\subseteq T\cup C_t(G\sminus T)
	\end{align*}
}
\begin{lemma}
	\label{lem:inclusionCsCt}
	\inclusionCsCt
\end{lemma}
\begin{proof}
	If $C_s(G\sminus S)\subseteq C_s(G\sminus T)$ then $N_G(C_s(G\sminus S))\subseteq C_s(G\sminus T)\cup N_G(C_s(G\sminus T))$. Since $S,T\in \minlsepst{G}$, then by Lemma~\ref{lem:fullComponents}, it holds that $S=N_G(C_s(G\sminus S))$ and $T=C_s(G\sminus T)$. Therefore, we get that $S \subseteq C_s(G\sminus T)\cup T$. Hence, $C_s(G\sminus S)\subseteq C_s(G\sminus T) \Longrightarrow S \subseteq C_s(G\sminus T)\cup T$. 
	If $S \subseteq C_s(G\sminus T)\cup T$, then by definition, $S\cap C_t(G\sminus T)=\emptyset$. Therefore, $C_t(G\sminus T)$ is connected in $G\sminus S$. By definition, this means that $C_t(G\sminus T)\subseteq C_t(G\sminus S)$. Therefore, it holds that $N_G(C_t(G\sminus T))\subseteq C_t(G\sminus S) \cup N_G(C_t(G\sminus S))$. Since $S,T\in \minlsepst{G}$, then by Lemma~\ref{lem:fullComponents}, it holds that $S=N_G(C_t(G\sminus S))$ and $T=N_G(C_t(G\sminus T))$. Consequently, we have that $T\subseteq S\cup C_t(G\sminus S)$. So, we have shown that:
	\begin{align*}
		C_s(G\sminus S)\subseteq C_s(G\sminus T) && \Longrightarrow && S\subseteq T\cup C_s(G\sminus T) && \Longrightarrow&& T\subseteq S\cup C_t(G\sminus S)
	\end{align*}
	If $T\subseteq S\cup C_t(G\sminus S)$, then by definition, $T\cap C_s(G\sminus S)=\emptyset$. Therefore, $C_s(G\sminus S)$ is connected in $G\sminus T$. Consequently, $C_s(G\sminus S)\subseteq C_s(G\sminus T)$. This completes the proof.
\end{proof}

}
\def\crossingCharacterizinglem{
	Let $S,T\in \minlsep{}{G}$. Then $S\sharp T$ if and only if $T\cap C\neq \emptyset$ for every full connected component of $G\sminus S$.
}

\eat{
Of independent interest is the characterization of crossing minimal $s,t$-separators.
\begin{lemma}
	\label{lem:crossingCharacterizing}
	\crossingCharacterizinglem
\end{lemma}
}
\eat{

\begin{lemma}
	\label{lem:addEdgesFromSTot}
	Let $S\in \minlsepst{G}$, and let $H_1$ ($H_2$) be the graph that results from $G$ by adding all edges from $S$ to $t$ (from $S$ to $s$). That is, $\edges(H_1)=\edges(G)\cup \set{(v,t):v\in S}$ and $\edges(H_2)=\edges(G)\cup \set{(s,v):v\in S}$. Then:
	\begin{align}
		\minlsepst{H_1}&=\set{Q\in \minlsepst{G}: Q\subseteq S\cup C_s(G\sminus S)} \label{eq:addEdgesStot1}\\
		\minlsepst{H_2}&=\set{Q\in \minlsepst{G}: Q\subseteq S\cup C_t(G\sminus S)}, \text{ and } \label{eq:addEdgesStot2}\\
		\minlsepst{H_1}\cup \minlsepst{H_2}&=\set{Q\in \minlsepst{G}:Q\|S} \label{eq:addEdgesStot3}
	\end{align}
\end{lemma}
}
\eat{
\paragraph*{Proof of Theorem~\ref{thm:minimalstChracterization}}
Let $S\in \minlsepst{G}$, with full connected components $C_s(G\sminus S)$ and $C_t(G\sminus S)$. If $T\sharp S$, then by Lemma~\ref{lem:crossingCharacterizing}, $T$ intersects every full connected component associated with $S$. In particular, $T\cap C_s(G\sminus S)\neq \emptyset$. Therefore, the theorem holds for all $T\in \minlsepst{G}$ that cross $S$ (i.e., $T\sharp S$).
Otherwise, $T \| S$. In this case, by Theorem~\ref{thm:parallel}, either $T\subseteq S \cup C_s(G\sminus S)$, and since $T\not\subseteq S$, then $T\cap C_s(G\sminus S)\neq \emptyset$. Otherwise, if $T\cap C_s(G\sminus S)= \emptyset$, then by Theorem~\ref{thm:parallel}, we have that $T\subseteq S \cup C_t(G\sminus S)$. This completes the proof.
\eat{
Let $A,B\subseteq \nodes(G){\setminus}\set{s,t}$ that are disjoint and non-adjacent. We recall that:
\begin{equation}
	\minlsep{sA|B,t}{G}\eqdef \set{S\in \minlsepst{G}: A\subseteq C_s(G\sminus S), \mbox{ and } B\cap C_s(G\sminus S)=\emptyset}
\end{equation}
}
\begin{corollary}
	\label{corr:mainAlg}
	Let $S\in \minlsepst{G}$, where $S=\set{v_1,\dots,v_k}$ is an arbitrary but fixed order over the vertices of $S$.
	\begin{align}
	\eat{	\minlsepst{G}{\setminus}S=\bigcup_{i=1}^k \minlsepst{G}(v_i, \set{v_1,\dots,v_{i-1}}) \cup \left(\minlsepst{G}(\emptyset,S){\setminus}S\right) \label{eq:mainAlgsetup}}
	\minlsepst{G}{\setminus}S&=\set{T \in \minlsepst{G}: T\subseteq S\cup C_t(G\sminus S), T\cap C_t(G\sminus S)\neq \emptyset} \cup \nonumber \\
	&~~~\bigcup_{i=1}^k \set{T\in \minlsepst{G}: v_i \in C_s(G\sminus T), \set{v_1,\dots,v_{i-1}}\cap C_s(G\sminus T)=\emptyset} \label{eq:mainAlgsetup}
	\end{align}
\end{corollary}
\begin{proof}
	Let $T\in \minlsepst{G}{\setminus}S$. If $S\cap C_s(G\sminus T)\neq \emptyset$, then let $v_i\in S$ be the vertex with the smallest index such that $v_i \in C_s(G\sminus T)$. In particular, $\set{v_1,\dots,v_{i-1}}\cap C_s(G\sminus T)=\emptyset$. By definition, it holds that $T\in \set{T\in \minlsepst{G}: v_i \in C_s(G\sminus T), \set{v_1,\dots,v_{i-1}}\cap C_s(G\sminus T)=\emptyset}$.
	If $S\cap C_s(G\sminus T)=\emptyset$, then by Theorem~\ref{thm:minimalstChracterization}, it holds that  $T\subseteq S\cup C_t(G\sminus S)$. Since $T\not\subseteq S$, then $T\cap C_t(G\sminus S)\neq \emptyset$.
\end{proof}
}

\subsection{Results on Close Minimal $s,t$-separators}
\label{sec:moreResults}
\eat{
Following Kloks and Kratsch~\cite{DBLP:journals/siamcomp/KloksK98}, we say that a minimal $s,t$-separator $S\in \minlsepst{G}$ is \e{close} to $s$ if $S\subseteq N_G(s)$. Likewise, we say that $S\in \minlsepst{G}$ is close to $t$ if $S\subseteq N_G(t)$.

\begin{citedlemma}{\cite{DBLP:journals/siamcomp/KloksK98}}
	\label{lem:uniqueCloseVertex}
	If $s$ and $t$ are non-adjacent, then there exists exactly one minimal $s,t$-separator $S\in \minlsepst{G}$ that is close to $s$, which can be found in polynomial time.
\end{citedlemma}

\begin{definition}
	\label{def:closeToA}
	Let $A\subseteq \nodes(G)$ where $s\in A$, and $N_G[t]\cap A=\emptyset$. We say that  $S\in \minlsepst{G}$ is \e{close to $A$} or \e{close to $sA$} if the following holds:
	\begin{enumerate}
		\item $A\subseteq C_s(G\sminus S)$.
		\item For every $T\in \minlsepst{G}{\setminus}\set{S}$, if $A\subseteq C_s(G\sminus T)$ then $C_s(G\sminus T) \not\subseteq C_s(G\sminus S)$.
	\end{enumerate}
\end{definition}
Note that if no restrictions are made to $A\subseteq \nodes(G)$, there may be many minimal $s,t$-separators that are close to $A$. If, however, $G[sA]$ is connected then the minimal $s,t$-separator close to $A$ is unique and can be found in polynomial time.
\begin{citedlemma}{\cite{DBLP:journals/dam/Takata10}}
	\label{lem:uniqueCloseVertexSet}
Let $A\subseteq \nodes(G)$ where $s\in A$, and $G[A]$ is connected. If $A\cap N_G[t]=\emptyset$, then $N_G(A)$ contains a unique minimal $s,t$-separator, which can be found in $O(m)$ time where $m=|\edges(G)|$.
\end{citedlemma}
\begin{corollary}
		\label{corr:singleCloseSet}
	Let $A\subseteq \nodes(G)$ where $s\in A$, and $G[A]$ is connected. If $A\cap N_G[t]=\emptyset$ there exists a unique minimal $s,t$-separator that is close to $A$, which can be found in $O(m)$ time where $m=|\edges(G)|$.
\end{corollary}
\begin{proof}
	Let $T\in \minlsepst{G}$ where $T\subseteq N_G(A)$. By Lemma~\ref{lem:uniqueCloseVertexSet}, $T$ is unique. Let $S\in \minlsepst{G}$ that is close to $A$. By Definition~\ref{def:closeToA}, $A\subseteq C_s(G\sminus S)$. Therefore, $T \subseteq N_G(A)\subseteq S\cup C_s(G\sminus A)$. By Theorem~\ref{thm:parallel} (see~\eqref{eq:thmparallel2}), we have that $C_s(G\sminus T)\subseteq C_s(G\sminus S)$. Hence, $S=T$.
\end{proof}

}

\def\technicalLemmaCloseToA{
	Let $T_s\in \minlsepst{G}$ where $T_s\subseteq N_G(s)$. Let $A\subseteq \nodes(G){\setminus}(T_s \cup C_s(G\sminus T_s) \cup N_G[t])$ such that $T_s \subseteq N_G(a)$ for every $a\in A$. There are at most $|T_s|$ minimal $s,t$-separators that are close to $sA$, which can be found in time $O(|T_s|\cdot m)$.
}
\begin{lemma}
	\label{lem:technicalLemmaCloseToA}
	\technicalLemmaCloseToA
\end{lemma}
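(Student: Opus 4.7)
The plan is to associate each close-to-$sA$ minimal separator with a vertex $u\in T_s$, generate exactly one candidate separator per such vertex via Corollary~\ref{corr:singleCloseSet}, and then conclude the $|T_s|$ bound by showing each close-to-$sA$ separator coincides with its candidate.

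First I would establish that for any $S\in\minlsepst{G}$ with $A\subseteq C_s(G\sminus S)$, we have $T_s\cap C_s(G\sminus S)\neq\emptyset$. Indeed, pick any $a\in A$; by hypothesis $a\in C_t(G\sminus T_s)$, so every $s$-to-$a$ path in $G$ crosses $T_s$. Choosing such a path inside $G\sminus S$ and letting $u$ be its first $T_s$-vertex yields $u\in T_s\cap C_s(G\sminus S)$. Moreover $u\notin N_G(t)$: otherwise $u\in N_G(s)\cap N_G(t)$ would force $u\in S$ for every $s,t$-separator $S$, contradicting $u\in C_s(G\sminus S)$.

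Next, for each $u\in T_s\setminus N_G(t)$, set $A_u\eqdef A\cup\{u\}$. Because $T_s\subseteq N_G(a)$ for every $a\in A$ and $u\in T_s\subseteq N_G(s)$, the vertex $u$ is adjacent to $s$ and to every $a\in A$, so $G[\{s\}\cup A_u]$ is connected. Also $(\{s\}\cup A_u)\cap N_G[t]=\emptyset$, since $s,t$ are non-adjacent, $A\cap N_G[t]=\emptyset$ by hypothesis, and $u\notin N_G[t]$ by choice. By Corollary~\ref{corr:singleCloseSet}, the unique minimal $s,t$-separator $S_u$ close to $sA_u$ exists and can be computed in $O(m)$ time. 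This gives at most $|T_s|$ candidate separators in total time $O(|T_s|\cdot m)$.

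Finally I would show every close-to-$sA$ separator $S$ equals some $S_u$. Using the witness $u$ from the first step, $A_u\subseteq C_s(G\sminus S)$, so applying Lemma~\ref{lem:FsaExists} to $A_u$ yields a close-to-$sA_u$ separator whose $C_s$-component sits inside $C_s(G\sminus S)$; by uniqueness this separator is $S_u$, so $A\subseteq C_s(G\sminus S_u)\subseteq C_s(G\sminus S)$. Since $S$ is close to $sA$, Definition~\ref{def:closeToA} forces $S_u=S$. Hence the close-to-$sA$ separators form a sub-family of $\{S_u:u\in T_s\setminus N_G(t)\}$, giving the bound $|T_s|$; pruning to the actual close-to-$sA$ ones is done by keeping the minimal $S_u$'s under the order $S\preceq T\Leftrightarrow C_s(G\sminus S)\subseteq C_s(G\sminus T)$, checkable via Lemma~\ref{lem:inclusionCsCt} within the same $O(|T_s|\cdot m)$ budget. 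The crux is the last step: the hypothesis $T_s\subseteq N_G(a)$ for every $a\in A$ is exactly what allows us to ``glue'' each witness $u$ onto $A$ so that Corollary~\ref{corr:singleCloseSet}'s uniqueness applies; without it a single $u\in T_s$ could witness several distinct close-to-$sA$ separators and the $|T_s|$ bound would fail.
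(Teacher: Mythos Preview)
Your proposal is correct and follows essentially the same approach as the paper: show that every $S\in\minlsepst{G}$ with $A\subseteq C_s(G\sminus S)$ satisfies $T_s\cap C_s(G\sminus S)\neq\emptyset$, then for each $u\in T_s$ invoke Corollary~\ref{corr:singleCloseSet} on the connected set $\{s,u\}\cup A$ to produce the unique candidate $S_u$, and conclude that every close-to-$sA$ separator coincides with some $S_u$.

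Two small remarks. First, your justification ``by hypothesis $a\in C_t(G\sminus T_s)$'' is not literally given by the hypothesis (which only says $a\notin T_s\cup C_s(G\sminus T_s)\cup N_G[t]$, so $a$ could lie in a non-$C_t$ component); but what you actually use---that every $s$-to-$a$ path meets $T_s$---follows already from $a\notin C_s(G\sminus T_s)$, so nothing breaks. The paper obtains the same conclusion more directly: $T_s\subseteq N_G(s)$ forces $T_s\subseteq S\cup C_s(G\sminus S)$, and $T_s\subseteq S$ would give $T_s=S$ by minimality, contradicting $A\subseteq C_s(G\sminus S)$. Second, your extra care in excluding $u\in T_s\cap N_G(t)$ before applying Corollary~\ref{corr:singleCloseSet} is a nice point the paper glosses over; it is harmless since any such $u$ lies in every minimal $s,t$-separator and hence can never land in $C_s(G\sminus S)$.
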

\eat{
\begin{proof}
	Let $T\in \minlsepst{G}$ such that $A\subseteq C_s(G\sminus T)$. Since $T_s\subseteq N_G(s)$, then $T_s \subseteq T\cup C_s(G\sminus T)$. If $T_s \subseteq T$, then since $T,T_s \in \minlsepst{G}$, then $T_s=T$. But then, $A\not\subseteq C_s(G\sminus T)$; a contradiction. Therefore, for every $T\in \minlsepst{G}$ where $A\subseteq C_s(G\sminus T)$, it holds that $T_s \cap C_s(G\sminus T)\neq \emptyset$.
	
	For every $v\in T_s$, we have that $G[svA]$ is connected. Indeed, $T_s \subseteq N_G(s)$, and hence $(s,v)\in \edges(G)$. By the assumption of the lemma $T_s \subseteq N_G(a)$ for every $a\in A$. Therefore, $v\in \bigcap_{a\in A}N_G(a)$. By Corollary~\ref{corr:singleCloseSet}, there exists a unique minimal $s,t$-separator $S_v\in \minlsepst{G}$ that is close to $svA$. Let $T_s=\set{v_1,\dots,v_\ell}$, and let $S_i\in \minlsepst{G}$ denote the unique minimal $s,t$-separator that is close to $sv_iA$. We now show that for every $T\in \minlsepst{G}$ where $A\subseteq C_s(G\sminus T)$ it holds that $C_s(G\sminus S_i)\subseteq C_s(G\sminus T)$ for some $i\in \set{1,2,\dots,\ell}$. We have shown that $T_s\cap C_s(G\sminus T)\neq \emptyset$. Let $v_i \in C_s(G\sminus T)$. Therefore, $sv_iA \subseteq C_s(G\sminus T)$. Since $S_i\in \minlsepst{G}$ is the unique minimal $s,t$-separator that is close to $sv_iA$, then $C_s(G\sminus S_i) \subseteq C_s(G\sminus T)$. Since the $S_i$s are not necessarily distinct, there are at most $|T_s|$ minimal $s,t$-separators that are close to $sA$. Specifically, these are $\set{S\in \minlsepst{G}: S\subseteq N_G(sv_iA), v_i\in T_s}$. By Corollary~\ref{corr:singleCloseSet}, every $S\in \minlsepst{G}$ where $S\subseteq N_G(sv_iA)$ and $v_i\in T_s \subseteq N_G(s)$ is unique and can be found in time $O(m)$.
\end{proof}
}
To illustrate Lemma~\ref{lem:technicalLemmaCloseToA}, consider Figure~\ref{fig:illustrationlemCloseTosA} where $A=\set{a_1,a_2}$. Let $T_s=\set{v_1,\dots,v_\ell}$. By the assumption of the lemma, $T_s\subseteq N_G(a_1)$ and $T_s\subseteq N_G(a_2)$. 
Lemma~\ref{lem:technicalLemmaCloseToA} establishes that there are at most $|T_s|$ minimal $s,t$-separators close to $sA$, which can be found in time $O(|T_s|\cdot m)$.
The proof of Lemma~\ref{lem:technicalLemmaCloseToA} establishes that $S\in \minlsepst{G}$ is close to $sA$ if and only if $S$ is close to $sv_iA$ for some $v_i\in T_s$. See complete proof in Section~\ref{sec:AppendixProofsEssentialMiinlstseps} of the Appendix.
\def\belongtoCsLem{
		Let $S\in \minlsepst{G}$ such that $S\subseteq N_G(t)$, and let $u\in \nodes(G){\setminus}\set{s,t}$. If $u\notin C_s(G\sminus S)$ then for every $T\in \minlsepst{G}$, it holds that $u\notin C_s(G\sminus T)$.
}
\begin{lemma}
	\label{lem:belongtoCs}
	\belongtoCsLem
\end{lemma}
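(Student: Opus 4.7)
The plan is to argue by contradiction: assume some $T\in\minlsepst{G}$ satisfies $u\in C_s(G\sminus T)$, and derive that $s$ and $t$ are connected in $G\sminus T$, contradicting the fact that $T$ is an $s,t$-separator. The driving observation is that the hypothesis $S\subseteq N_G(t)$ means every vertex of $S$ lies one edge away from $t$, so any surviving path that uses a vertex of $S$ is effectively one step away from reaching $t$.

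First I would dispatch the easy subcase $u\in S$. If additionally $u\in T$, then $u\notin C_s(G\sminus T)$ holds trivially. Otherwise $u\notin T$; since $u\in S\subseteq N_G(t)$ and $t\notin T$, the edge $(u,t)$ survives in $G\sminus T$, so $u$ lies in the same component of $G\sminus T$ as $t$, namely $C_t(G\sminus T)$, which is disjoint from $C_s(G\sminus T)$.

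The substantive case is $u\notin S$. Here the hypothesis $u\notin C_s(G\sminus S)$ implies that $s$ and $u$ lie in distinct components of $G\sminus S$, so every $s,u$-path in $G$ contains at least one vertex of $S$. Let $P$ be an $s,u$-path in $G\sminus T$ witnessing $u\in C_s(G\sminus T)$, and pick any $v\in \nodes(P)\cap S$. Since $v$ lies on $P$ we have $v\notin T$, and since $v\in S\subseteq N_G(t)$ with $t\notin T$, the edge $(v,t)$ survives in $G\sminus T$. Concatenating the $s,v$-prefix of $P$ with the edge $(v,t)$ then yields an $s,t$-walk in $G\sminus T$, contradicting that $T$ is an $s,t$-separator.

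The only real obstacle is to identify which hypothesis is doing the work: the inclusion $S\subseteq N_G(t)$ promotes the statement ``every $s,u$-path meets $S$'' into ``every $s,u$-path reaches a neighbor of $t$,'' which, together with $t\notin T$, shortcuts directly from the assumed surviving $s,u$-path to a forbidden $s,t$-path. Once this pivot is spotted, no deeper structural results on minimal separators or AT-freeness are invoked, and the argument is essentially routine.
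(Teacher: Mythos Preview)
Your proof is correct and follows essentially the same approach as the paper's: both hinge on the observation that $S\subseteq N_G(t)$ forces every vertex of $S$ to lie in $T\cup C_t(G\sminus T)$, so any $s,u$-path avoiding $T$ (which must hit $S$) yields an $s,t$-path avoiding $T$. The paper phrases this without the case split on $u\in S$ versus $u\notin S$ (its argument works uniformly since any $s,u$-path trivially contains $u$ when $u\in S$), but the underlying idea is identical.
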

\eat{
\begin{proof}
	Since $u\notin C_s(G\sminus S)$, then every path from $u$ to $s$ passes through a vertex in $S$. Now, let $T\in \minlsepst{G}{\setminus} \set{S}$. Since $S\subseteq N_G(t)$, then $S\subseteq T\cup C_t(G\sminus T)$. Therefore, every path from a vertex in $S$ to $s$ passes through a vertex in $T$. Consequently, every path from $u$ to $s$, which passes through a vertex in $S$, must also pass through a vertex in $T$. Therefore, $u\notin C_s(G\sminus T)$.
\end{proof}
}
\eat{
\begin{lemma}
	\label{lem:excludedFromCs}
	Let $S\in \minlsepst{G}$ such that $S\subseteq N_G(s)$, and let $u\in \nodes(G){\setminus}\set{s,t}$. If $u\in C_s(G\sminus S)$ then for every $T\in \minlsepst{G}$, it holds that $u\in C_s(G\sminus T)$.
\end{lemma}
\begin{proof}
	Let $T\in \minlsepst{G}{\setminus} \set{S}$. Since $S\subseteq N_G(s)$, then $S\subseteq T\cup C_s(G\sminus T)$. By Theorem~\ref{thm:parallel} (see~\eqref{eq:thmparallel2}), we have that $C_s(G\sminus S)\subseteq C_s(G\sminus T)$. Therefore, $u\in C_s(G\sminus T)$.
\end{proof}
}

By Lemma~\ref{lem:uniqueCloseVertexSet}, a minimal $s,t$-separator $S\in \minlsepst{G}$, such that $S\subseteq N_G(t)$, is unique, and can be found in polynomial time.  An immediate consequence of Lemma~\ref{lem:belongtoCs} is that we can, in polynomial time, test whether there exists a minimal $s,t$-separator $S\in \minlsepst{G}$ such that $u\in C_s(G\sminus S)$, for a distinguished vertex $u\in \nodes(G)$. To do so, we find the unique $S\in \minlsepst{G}$ such that $S\subseteq N_G(t)$. If $u\in C_s(G\sminus S)$ then the answer is clearly yes. Otherwise, by Lemma~\ref{lem:belongtoCs}, it holds that $u\notin C_s(G\sminus T)$ for any $T\in\minlsepst{G}$.

\eat{
\def\lemminstSepFromsAtSep{
		Let $A\subseteq \nodes(G){\setminus}\set{s,t}$, and let $T\in \minlsep{sA,t}{G}$. Define $Q_s\eqdef N_G(C_s(G\sminus T))$, and $G'\eqdef G\sminus (T{\setminus}Q_s)$. Then:
	$$\set{S\in \minlsepst{G'}: S\subseteq Q_s\cup C_s(G'\sminus Q_s)} = \set{S\in \minlsepst{G}: S\subseteq Q_s\cup C_s(G\sminus Q_s)}.$$
}
\begin{lemma}
	\label{lem:minstSepFromsAtSep}
	\lemminstSepFromsAtSep
\end{lemma}
The proof of Lemma~\ref{lem:minstSepFromsAtSep} is deferred to Section~\ref{sec:moreProofs} of the Appendix.
}
\eat{
\begin{proof}
	For brevity, we let $C_s\eqdef C_s(G\sminus T)$, and $C_t\eqdef C_t(G\sminus T)$. We first show that $Q_s\in \minlsepst{G}\cap \minlsepst{G'}$.
	By Lemma~\ref{lem:simpAB}, it holds that $T\subseteq N_G(C_t)$. Since $N_G(C_t) \subseteq T$, we have that $T=N_G(C_t)$. Since $Q_s \subseteq T = N_G(C_t)$, then $Q_s=N_G(C_s)\cap N_G(C_t)$. 
	Clearly, $Q_s$ is an $s,t$-separator of $G$, where $C_s(G\sminus Q_s)=C_s$, and $C_t =C_t(G'\sminus Q_s)\subseteq C_t(G\sminus Q_s)$. Since $Q_s=N_G(C_s)\cap N_G(C_t)$, then for every $v\in Q_s$, it holds that $Q_s{\setminus}\set{v}$ no longer separates $s$ from $t$. Therefore, $Q_s\in \minlsepst{G}$. 
	By Lemma~\ref{lem:fullComponents}, we have that $Q_s \in \minlsepst{G[C_s \cup Q_s \cup C_t]}$. Since $C_s=C_s(G'\sminus Q_s)$ and $C_t=C_t(G'\sminus Q_s)$, then $Q_s\in \minlsepst{G'}$. Therefore, $Q_s\in \minlsepst{G}\cap \minlsepst{G'}$.
	
	Let $S\in \minlsepst{G'}$ where $S\subseteq Q_s\cup C_s$. We show that $S$ is an $s,t$-separator in $G$. Suppose it is not, and let $P$ be an $s,t$-path in $G\sminus S$. Since $Q_s\in \minlsepst{G}$, then $\nodes(P)\cap Q_s\neq \emptyset$. Let $v\in \nodes(P)$ be the first vertex on $P$ that belongs to $Q_s$. This means that the subpath of $P$ from $s$ to $v$ resides entirely in $C_s(G\sminus Q_s)=C_s(G'\sminus Q_s)=C_s$. Since, by definition, $G[C_s\cup Q_s]=G'[C_s\cup Q_s]$, then there is an $s,v$ path in $G'\sminus S$ where $v\in Q_s$. Since $v\notin S$, $v\in N_{G'}(C_t)$, and $S\subseteq Q_s\cup C_s(G' \sminus Q_s)$, then there is an $s,t$-path in $G'\sminus S$. But this is a contradiction to the assumption that $S\in \minlsepst{G'}$. Therefore, $S$ is an $s,t$-separator in $G$. Since $G'$ is an induced subgraph of $G$, then by Corollary~\ref{corr:fullComponentsInduced}, it holds that $S\in \minlsepst{G}$.
	
	Now, let $S\in \minlsepst{G}$, where $S\subseteq Q_s\cup C_s(G\sminus Q_s)$. Since $C_s(G\sminus Q_s)=C_s(G'\sminus Q_s)=C_s$, then $G[Q_s\cup C_s(G\sminus Q_s)]=G'[Q_s\cup C_s(G'\sminus Q_s)]$. Therefore, $S\subseteq \nodes(G')$. Since $G'$ is an induced subgraph of $G$, then $S$ is an $s,t$-separator of $G'$. If $S\notin \minlsepst{G'}$, then let $S'\in \minlsepst{G'}$ where $S' \subset S \subseteq Q_s\cup C_s(G\sminus Q_s)=Q_s\cup C_s(G'\sminus Q_s)$. By the previous direction, $S'\in \minlsepst{G}$. But this contradicts the minimality of $S$. 
\end{proof}
}

\eat{
\begin{lemma}
	\label{lem:inclusionInClosSep}
	Let $T_s\in \minlsepst{G}$ where $T_s\subseteq N_G(s)$. Let $S\in \minlsepst{G}$. Then $S\cap N_G(s) \subseteq T_s$.
\end{lemma}
\begin{proof}
	Since $T_s\subseteq N_G(s)$, then $T_s \subseteq S \cup C_s(G\sminus S)$, and by Theorem~\ref{thm:parallel}, we have that $C_s(G\sminus T_s)\subseteq C_s(G\sminus S)$. We show that $S\cap (N_G(s){\setminus}T_s)=\emptyset$, thus proving that $S\cap N_G(s)\subseteq T_s$.
	Let $v\in N_G(s){\setminus}T_s$. Then $v\in C_s(G\sminus T_s)\subseteq C_s(G\sminus S)$. In particular, $v\notin S$.
\end{proof}
}

\def\lemNotCsOrCt{
	Let $T_s\in \minlsepst{G}$ where $T_s\subseteq N_G(s)$. Let $D\in \cc(G\sminus T_s)$ where $s\notin D$ and $t\notin D$. Define $T_D\eqdef T_s\cap N_G(D)$. For every $A\subseteq D$ it holds that:
	\[
		\set{\minlsepst{G}: A\subseteq C_s(G\sminus S)}=\mathop{\textstyle\bigcup}_{v\in T_D}\set{S\in \minlsepst{G}: v\in C_s(G\sminus S)}
	\]
}
\begin{lemma}
	\label{lem:lemNotCsOrCt}
	\lemNotCsOrCt
\end{lemma}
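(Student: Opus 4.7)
The plan is to prove the two directions separately, with the key structural fact being that $D$ is entirely disjoint from any minimal $s,t$-separator $S$ satisfying $v\in C_s(G\sminus S)$ for some $v\in T_D$. First I would record two facts that follow at once from the assumptions. Since $T_s\subseteq N_G(s)$, for any $S\in \minlsepst{G}$ every vertex of $T_s$ is either in $S$ or is a neighbor of $s$ outside $S$, so $T_s\subseteq S\cup C_s(G\sminus S)$. Applying Lemma~\ref{lem:inclusionCsCt} with the roles ``$S$'' and ``$T$'' played by $T_s$ and $S$ respectively, this is equivalent to $S\subseteq T_s\cup C_t(G\sminus T_s)$. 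Moreover $T_D=N_G(D)$, because $D$ is a connected component of $G\sminus T_s$ and hence $N_G(D)\subseteq T_s$.

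For the inclusion $\subseteq$, assume $A\subseteq C_s(G\sminus S)$ and pick any $a\in A\subseteq D$. Let $P$ be a path from $a$ to $s$ in $G\sminus S$; since $s\notin D$, the path $P$ must leave $D$, and the first vertex $v$ of $P$ lying outside $D$ is a neighbor of $D$, so $v\in N_G(D)=T_D$. The suffix of $P$ from $v$ to $s$ witnesses $v\in C_s(G\sminus S)$, which is exactly the containment we need.

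For the inclusion $\supseteq$, suppose $v\in T_D$ with $v\in C_s(G\sminus S)$. I first show $S\cap D=\emptyset$: by the facts above, $S\subseteq T_s\cup C_t(G\sminus T_s)$, and $D$ is disjoint from both $T_s$ (since $D$ is a component of $G\sminus T_s$) and $C_t(G\sminus T_s)$ (since $t\notin D$, so $D$ is a different component of $G\sminus T_s$ from $C_t(G\sminus T_s)$). Since $D$ is a connected component of $G\sminus T_s$, the induced subgraph $G[D]$ is connected, and because $S\cap D=\emptyset$ it remains connected in $G\sminus S$, placing all of $D$ in a single connected component of $G\sminus S$. Now $v\in T_D$ has some neighbor $d\in D$; since $v\notin S$ and $d\notin S$, the edge $(v,d)$ survives in $G\sminus S$, so $d$ lies in the same component as $v$, namely $C_s(G\sminus S)$. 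Combining with the previous step, $D\subseteq C_s(G\sminus S)$, and therefore $A\subseteq D\subseteq C_s(G\sminus S)$.

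The main obstacle is really just identifying the right way to invoke Lemma~\ref{lem:inclusionCsCt} to prove that $D$ is untouched by any eligible $S$; once this is in place, both directions reduce to a single-edge connectivity argument using that $G[D]$ is connected and that $T_D=N_G(D)$.
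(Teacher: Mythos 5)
Your proof is correct, but it takes a genuinely different route from the paper's. The paper argues by complementation: it first notes that for $v\in T_D\subseteq N_G(s)$, $v\notin C_s(G\sminus S)$ forces $v\in S$, so the complement of the right-hand side inside $\minlsepst{G}$ is exactly $\set{S\in\minlsepst{G}: T_D\subseteq S}$, and it then proves $\set{S: A\not\subseteq C_s(G\sminus S)}=\set{S: T_D\subseteq S}$; the delicate inclusion there is handled by extracting a minimal $s,a$-separator $T\subseteq S$, invoking Lemma~\ref{lem:uniqueCloseVertexSet} to identify $T_D$ as the unique minimal $s,a$-separator close to $s$, and showing $T\cap D=\emptyset$ so that $T=T_D$. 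You instead prove the two inclusions directly: the forward one by following an $a,s$-path in $G\sminus S$ and taking its first exit from $D$, which necessarily lands in $N_G(D)=T_D$ and is joined to $s$ by the remaining suffix; the reverse one by deducing $S\subseteq T_s\cup C_t(G\sminus T_s)$ from Lemma~\ref{lem:inclusionCsCt} (your instantiation of that lemma is the right one), concluding $S\cap D=\emptyset$, and then attaching the connected set $D$ to $C_s(G\sminus S)$ through the surviving edge $(v,d)$. Your argument is more elementary --- it bypasses Lemma~\ref{lem:uniqueCloseVertexSet} and the detour through minimal $s,a$-separators --- and it makes explicit the stronger ``all-or-nothing'' fact that $D\subseteq C_s(G\sminus S)$ whenever some $v\in T_D$ lies in $C_s(G\sminus S)$, which is precisely how the lemma is used later; what the paper's route buys in exchange is the explicit description of the complement as $\set{S: T_D\subseteq S}$, which your proof does not produce but which is not needed elsewhere. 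The only caveat, shared equally by the paper's proof, is that both arguments implicitly assume $A\neq\emptyset$ (each picks some $a\in A$); for $A=\emptyset$ the stated identity can fail, so the lemma should be read with $A$ non-empty.
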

To illustrate Lemma~\ref{lem:lemNotCsOrCt}, consider Figure~\ref{fig:illustrationlemNotCsOrCt}, which shows $T_s\in \minlsepst{G}$ where $T_s\subseteq N_G(s)$, $D\in \cc(G\sminus T_s)$, vertex $a\in A\subseteq D$, and $T_D\eqdef N_G(D)$. Observe that $T_D$ is, by definition, an $s,a$-separator of $G$ where $D=C_a(G\sminus T_D)$ (see Figure~\ref{fig:illustrationlemNotCsOrCt}). Let $S\in \minlsepst{G}$. Lemma~\ref{lem:lemNotCsOrCt} establishes that $A\subseteq C_s(G\sminus S)$ if and only if $C_s(G\sminus S)\cap T_D\neq \emptyset$. The complete proof is in Section~\ref{sec:AppendixProofsEssentialMiinlstseps} of the Appendix.
\eat{
\begin{proof}
	Let $v\in T_D$, and let $S\in \minlsepst{G}$. If $v \notin C_s(G\sminus S)$, then since $v\in T_D\subseteq T_S\subseteq N_G(s)$, then $v\in S$. 
	Therefore, $$\minlsepst{G}{\setminus}\left(\bigcup_{v\in T_D}\set{S\in \minlsepst{G}: v\in C_s(G\sminus S)}\right)=\set{S\in \minlsepst{G}: T_D\subseteq S}.$$
	To prove the claim of the lemma, we show that the complement sets are equal. 
	\begin{equation}
		\label{eq:NotCsOrCt1}
		\set{S\in \minlsepst{G}: A\not\subseteq C_s(G\sminus S)}=\set{S\in \minlsepst{G}: T_D \subseteq S}
	\end{equation}
	Let $S\in \minlsepst{G}$. Since $T_D$ is an $s,A$-separator for every $A\subseteq D$, then if $T_D \subseteq S$, then $A\not\subseteq C_s(G\sminus S)$. 
	
	For containment in the other direction, take $S\in \minlsepst{G}$ where $A\not\subseteq C_s(G\sminus S)$. Let $a\in A\subseteq D$ such that $a\notin C_s(G\sminus S)$. Since $a\in D$ then, by Lemma~\ref{lem:fullComponents}, $T_D\in \minlsep{s,a}{G}$ where $T_D \subseteq T_s\subseteq N_G(s)$. By Lemma~\ref{lem:uniqueCloseVertexSet}, $T_D$ is the unique minimal $s,a$-separator that is close to $s$ where $D=C_a(G\sminus T_D)$ (see illustration in Figure~\ref{fig:illustrationlemNotCsOrCt}). Since $C_s(G\sminus T_D)\subseteq C_s(G\sminus T)$ for every $T\in \minlsep{s,a}{G}$, then by Lemma~\ref{lem:inclusionCsCt}, it holds that $T\subseteq T_D \cup D$. 
	By Lemma~\ref{lem:uniqueCloseVertexSet}, $T_s$ is the unique minimal $s,t$-separator that is close to $s$. Therefore, $S\subseteq T_s\cup C_t(G\sminus T_s)$ for every $S\in \minlsepst{G}$. 
	
	If $a\notin C_s(G\sminus S)$, then $S\supseteq T$ for some $T\in \minlsep{s,a}{G}$. Since $T\subseteq T_D\cup D$, then we can express $T=T_1\cup T_2$ where $T_1 \eqdef T\cap T_D$ and $T_2 \eqdef T \cap D$. Likewise, since $S\subseteq T_s \cup C_t(G\sminus T_s)$, then we can write $S=S_1 \cup S_2$, where $S_1 \eqdef S\cap T_s$ and $S_2\eqdef S\cap C_t(G\sminus T_s)$. Since $S\supseteq T$, then $S_1 \cup S_2 \supseteq T_1 \cup T_2$. Since $T_2 \subseteq D$, then $T_2\cap S\subseteq D\cap (T_s\cup C_t(G\sminus T_s))=\emptyset$. Therefore, if $T_1\cup T_2\subseteq S$, then $T_2=\emptyset$. This means that $T = T_1 \subseteq T_D$. Since $T,T_D\in \minlsep{s,a}{G}$, then $T=T_D$. Therefore, if $T\subseteq S$ for some $T\in \minlsep{s,a}{G}$, then $T_D\subseteq S$.  
	So, we showed that if $S\in \minlsepst{G}$ where $a\notin C_s(G\sminus S)$ for some $a\in A$, then $T_D \subseteq S$.\qed
\end{proof}
}
\begin{figure}[H]
	\centering{
%		 \resizebox{0.5\textwidth}{!}{%
	\begin{minipage}{0.4\textwidth}
		\centering
		\resizebox{0.92\textwidth}{!}{
		\begin{tikzpicture}
			% Draw a vertical elongated thin ellipse (T_s)
			\draw[thick] (0,-0.22) ellipse (0.3cm and 1.6cm);
			
			% Add six vertices (black dots) inside the ellipse, vertically spaced
			\foreach \y in {-1.5,-1,-0.5,0,0.5,1} {
				\fill[black] (0, \y) circle (2pt);
			}
			
			% Add vertex s to the left of the ellipse
			\node[fill=black, circle, inner sep=2pt, label=left:$s$] (s) at (-1,0) {};
			
			% Add edges from s to each dot in the ellipse
			\foreach \y in {-1.5,-1,-0.5,0,0.5,1} {
				\draw (s) -- (0, \y);
			}
			
			% Label for the ellipse (T_s)
			\node at (0.05,-2.1) {$T_s$};
			
			% Draw the new ellipse with new center (3, -0.3)
			\draw[thick] (3, -0.3) ellipse (1.5cm and 1cm);
			
			% Label for the new ellipse C_s(G - T_s)
			\node at (3, -1.8) {$C_t(G \sminus T_s)$};
			
			% Updated coordinates of points on the left side of the new ellipse
		
			\coordinate (p1) at (2.25, 0.57);
			\coordinate (p2) at (1.79, 0.29);
			\coordinate (p3) at (1.53, -0.09);
			\coordinate (p4) at (1.53, -0.51);
			\coordinate (p5) at (1.79, -0.89);
			\coordinate (p6) at (2.25, -1.17);
	
			% Add edges from each vertex in T_s to the corresponding points on the ellipse
			\eat{
			\draw (0, 1) -- (p1);
			\draw (0, 0.5) -- (p2);
			\draw (0, 0) -- (p3);
			\draw (0, -0.5) -- (p4);
			\draw (0, -1) -- (p5);
			\draw (0, -1.5) -- (p6);
		}
			% Draw a small ellipse D at (1.5, 1.8)
	%		\draw[thick] (1.5, 1.8) ellipse (0.6cm and 0.5cm);
			
			% Label the small ellipse D
	%		\node[fill=none,  inner sep=0pt, label=right:{\footnotesize $D=C_a(G\sminus T_D)$}] at (2.1, 1.8) {};
			\node[fill=black, circle, inner sep=2pt, label=right:{\footnotesize $a_1$}] (a1) at (1.4,1.8) {};
			
			% More spaced-out coordinates on ellipse D
			\coordinate (q1) at (0.91, 1.70);
			\coordinate (q2) at (0.97, 1.56);
			\coordinate (q3) at (1.08, 1.44);
			
			% Add edges from the top three vertices in T_s to the points on ellipse D
	%		\draw (0, 1) -- (q1);
	%		\draw (0, 0.5) -- (q2);
	%		\draw (0, 0) -- (q3);
	
		% Add edges from a_1 to the vertices of T_s
			\draw (a1) -- (0,1);
			\draw (a1) -- (0,0.5);
			\draw (a1) -- (0,0);
			\draw (a1) -- (0,-0.5);
			\draw (a1) -- (0,-1);
			\draw (a1) -- (0,-1.5);
			
			% Draw a small blue ellipse T_D inside T_s that includes the top three vertices
	%		\draw[thick,blue] (0, 0.5) ellipse (0.2cm and 0.8cm);
			
			% Label for the small blue ellipse T_D
	%		\node[blue] at (-0.4, 1.3) {$T_D$};
			
			% Add vertex s to the left of the ellipse
			\node[fill=black, circle, inner sep=2pt, label=right:$t$] (s) at (4.5, -0.3) {};
			
				\node[fill=black, circle, inner sep=2pt, label=right:$a_2$] (a2) at (1.8, -0.3) {};
				
				\draw (0, 1) -- (a2);
			\draw (0, 0.5) -- (a2);
			\draw (0, 0) -- (a2);
			\draw (0, -0.5) -- (a2);
			\draw (0, -1) -- (a2);
			\draw (0, -1.5) -- (a2);
			
		\end{tikzpicture}
	}
%	}
	\caption{Illustration--Lemma~\ref{lem:technicalLemmaCloseToA}.\label{fig:illustrationlemCloseTosA}}
\end{minipage}
\hfill
	\begin{minipage}{0.4\textwidth}
	\centering
		\resizebox{0.85\textwidth}{!}{
	\begin{tikzpicture}
		% Draw a vertical elongated thin ellipse (T_s)
		\draw[thick] (0,-0.22) ellipse (0.3cm and 1.6cm);
		
		% Add six vertices (black dots) inside the ellipse, vertically spaced
		\foreach \y in {-1.5,-1,-0.5,0,0.5,1} {
			\fill[black] (0, \y) circle (2pt);
		}
		
		% Add vertex s to the left of the ellipse
		\node[fill=black, circle, inner sep=2pt, label=left:$s$] (s) at (-1,0) {};
		
		% Add edges from s to each dot in the ellipse
		\foreach \y in {-1.5,-1,-0.5,0,0.5,1} {
			\draw (s) -- (0, \y);
		}
		
		% Label for the ellipse (T_s)
		\node at (0.05,-2.1) {$T_s$};
		
			% Draw the new ellipse with new center (3, -0.3)
		\draw[thick] (3, -0.3) ellipse (1.5cm and 1cm);
		
		% Label for the new ellipse C_s(G - T_s)
%		\node at (3, -0.3) {$C_t(G \sminus T_s)$};
	\node at (3, -1.8) {$C_t(G \sminus T_s)$};
		
		% Updated coordinates of points on the left side of the new ellipse
		\coordinate (p1) at (2.25, 0.57);
	\coordinate (p2) at (1.79, 0.29);
	\coordinate (p3) at (1.53, -0.09);
	\coordinate (p4) at (1.53, -0.51);
	\coordinate (p5) at (1.79, -0.89);
	\coordinate (p6) at (2.25, -1.17);
		
		% Add edges from each vertex in T_s to the corresponding points on the ellipse
		\draw (0, 1) -- (p1);
		\draw (0, 0.5) -- (p2);
		\draw (0, 0) -- (p3);
		\draw (0, -0.5) -- (p4);
		\draw (0, -1) -- (p5);
		\draw (0, -1.5) -- (p6);
		
		% Draw a small ellipse D at (1.5, 1.8)
		\draw[thick] (1.5, 1.8) ellipse (0.6cm and 0.5cm);
		
		% Label the small ellipse D
		\node[fill=none,  inner sep=0pt, label=right:{\footnotesize $D=C_a(G\sminus T_D)$}] at (2.1, 1.8) {};
		\node[fill=black, circle, inner sep=1pt, label=right:{\footnotesize $a$}] (a) at (1.4,1.8) {};
		
		% More spaced-out coordinates on ellipse D
		\coordinate (q1) at (0.91, 1.70);
		\coordinate (q2) at (0.97, 1.56);
		\coordinate (q3) at (1.08, 1.44);
		
		% Add edges from the top three vertices in T_s to the points on ellipse D
		\draw (0, 1) -- (q1);
		\draw (0, 0.5) -- (q2);
		\draw (0, 0) -- (q3);
		
		% Draw a small blue ellipse T_D inside T_s that includes the top three vertices
		\draw[thick,blue] (0, 0.5) ellipse (0.2cm and 0.8cm);
		
		% Label for the small blue ellipse T_D
		\node[blue] at (-0.4, 1.3) {$T_D$};
		
		% Add vertex s to the left of the ellipse
		\node[fill=black, circle, inner sep=2pt, label=right:$t$] (s) at (4.5, -0.3) {};
	\end{tikzpicture}
		}
\caption{Illustration--Lemma~\ref{lem:lemNotCsOrCt}.\label{fig:illustrationlemNotCsOrCt}}
\end{minipage}
	}
\end{figure}
\subsection{Minimal $s,t$-separators in AT-Free graphs}
\label{sec:MinlSepsATFree}
In any graph $G$, it holds that $N_G(s)\cap N_G(t) \subseteq S$ for every $S\in \minlsepst{G}$. Therefore, finding a minimum $s,t$-separator in $G$ is equivalent to finding a minimum $s,t$-separator in $G\sminus (N_G(s)\cap N_G(t))$. If $G$ is AT-free, then every induced subgraph of $G$ is AT-free, and hence $G\sminus (N_G(s)\cap N_G(t))$ is AT-free is well. Consequently, we make the assumption that $N_G(s)\cap N_G(t)=\emptyset$. In this Section, we prove some useful properties of minimal separators in AT-free graphs.\eat{ Due to space restrictions, the proofs are deferred to Section~\ref{sec:AppendixProofsEssentialMiinlstseps} of the Appendix.}
\def\lemheirarchical{
	Let $G$ be AT-free, $T_s\in \minlsepst{G}$ where $T_s \subseteq N_G(s){\setminus}N_G[t]$, and $C_1,C_2 \in \cc(G\sminus T_s){\setminus} \set{C_s(G\sminus T_s)}$. Then $N_G(C_1)\subseteq N_G(C_2)$ (or $N_G(C_2)\subseteq N_G(C_1)$).
}
\begin{lemma}
	\label{lem:heirarchical}
	\lemheirarchical
\end{lemma}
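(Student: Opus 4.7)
\noindent\textbf{Proof plan for Lemma~\ref{lem:heirarchical}.} The plan is to argue by contradiction, using the AT-freeness to rule out the case where $N_G(C_1)$ and $N_G(C_2)$ are incomparable. First dispose of the easy case: if either $C_1$ or $C_2$ equals $C_t(G\sminus T_s)$, then by Lemma~\ref{lem:fullComponents} its neighborhood is exactly $T_s$, and the other neighborhood (being a subset of $T_s$) is contained in it. So we may assume $C_1,C_2 \notin \{C_s(G\sminus T_s), C_t(G\sminus T_s)\}$.

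Assume for contradiction that neither inclusion holds, and pick $u \in N_G(C_1)\sm N_G(C_2)$ and $v \in N_G(C_2)\sm N_G(C_1)$, together with $a_1\in C_1$ adjacent to $u$ and $a_2\in C_2$ adjacent to $v$. I would then show that $\{a_1,a_2,t\}$ is an asteroidal triple. Mutual non-adjacency follows because: $a_1,a_2$ lie in distinct components of $G\sminus T_s$; and $a_i\notin C_t(G\sminus T_s)\cup T_s$ while $N_G(t)\subseteq C_t(G\sminus T_s)\cup T_s$, so $a_i\notin N_G(t)$ for $i\in\{1,2\}$ (using that $N_G(s)\cap N_G(t)=\emptyset$ as arranged at the beginning of Section~\ref{sec:MinlSepsATFree}, together with $t\neq s$).

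Next I would exhibit the three required paths. For the $a_1,a_2$-path avoiding $N_G[t]$, use $a_1 - u - s - v - a_2$: the interior vertices lie in $T_s\cup\{s\}\subseteq N_G[s]\sm N_G[t]$ (recall $T_s\subseteq N_G(s)\sm N_G[t]$ and $s\notin N_G[t]$). For the $a_1,t$-path avoiding $N_G[a_2]$, go from $a_1$ to $u$ through $C_1$, then from $u$ to $t$ through $C_t(G\sminus T_s)$ (which is full by Lemma~\ref{lem:fullComponents}, so $u$ has a neighbor there); this path avoids $N_G[a_2]$ because $N_G[a_2]\subseteq C_2\cup N_G(C_2)\subseteq C_2\cup T_s$, which is disjoint from $C_1\cup C_t(G\sminus T_s)\cup\{u\}$ (the last using $u\notin N_G(C_2)$). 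The $a_2,t$-path avoiding $N_G[a_1]$ is obtained symmetrically via $v$.

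The existence of this asteroidal triple contradicts the AT-freeness of $G$, completing the proof. The main obstacle, and the step requiring the most care, is verifying that the second and third paths stay outside the closed neighborhood of the third vertex; this relies crucially on the asymmetric choice of $u\in N_G(C_1)\sm N_G(C_2)$ and $v\in N_G(C_2)\sm N_G(C_1)$, together with the observation that $T_s$ is a ``clean'' separator disjoint from $N_G[t]$.
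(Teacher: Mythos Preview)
Your proof is correct and follows essentially the same approach as the paper: both argue by contradiction, picking witnesses to incomparability of $N_G(C_1)$ and $N_G(C_2)$ and exhibiting an asteroidal triple on a vertex from each of $C_1$, $C_2$, and $t$, with the three paths routed through $s$ (for the $a_1,a_2$-path) and through $C_t(G\sminus T_s)$ via the chosen separator vertices (for the two paths to $t$). One small wording slip: the set $C_2\cup T_s$ is not literally disjoint from $\{u\}$ since $u\in T_s$; what you need (and your parenthetical shows you know) is that $N_G[a_2]\subseteq C_2\cup N_G(C_2)$ and $u\notin N_G(C_2)$.
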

\eat{
\begin{proof}
	If $C_1=C_2$ the claim clearly holds, so we assume the two components are distinct.
	By definition, $N_G(C_1)\cup N_G(C_2) \subseteq T_s$. By Lemma~\ref{lem:fullComponents}, it holds that $T_s=N_G(C_t(G\sminus T_s))$. Therefore, if $C_1=C_t(G\sminus T_s)$ or $C_2=C_t(G\sminus T_s)$, then the claim clearly holds. So, we assume that $C_1,C_2\in \cc(G\sminus T_s){\setminus}\set{C_s(G\sminus T_s), C_t(G\sminus T_s)}$. 
	
	Suppose, by way of contradiction, that $N_G(C_1)\not\subseteq N_G(C_2)$ and $N_G(C_2)\not\subseteq N_G(C_1)$. Let $v_1 \in N_G(C_1){\setminus}N_G(C_2)$ and $v_2 \in N_G(C_2){\setminus}N_G(C_1)$. Also, let $u_1\in C_1$ and $u_2 \in C_2$. By our assumption, $v_1\not\in C_2 \cup N_G(C_2)$, and hence $v_1 \notin N_G[u_2]$. Likewise, $v_2\not\in C_1 \cup N_G(C_1)$, and hence $v_2 \notin N_G[u_1]$ (see illustration in Figure~\ref{fig:illustrationlemheirarchical}). Since $v_1,v_2\in T_s$, then by Lemma~\ref{lem:fullComponents}, it holds that $v_1,v_2 \in N_G(C_t(G\sminus T_s))$. Therefore, there is a $u_1,t$-path $P_{u_1,t}$ via $v_1$ such that $\nodes(P_{u_1,t})\subseteq C_1 \cup \set{v_1} \cup C_t(G\sminus T_s)$, and hence $\nodes(P_{u_1,t})\cap N_G[u_2]=\emptyset$. Likewise, there is a $u_2,t$-path $P_{u_2,t}$ via $v_2$ such that $\nodes(P_{u_2,t})\subseteq C_2 \cup \set{v_2} \cup C_t(G\sminus T_s)$, and hence $\nodes(P_{u_2,t})\cap N_G[u_1]=\emptyset$ (see illustration in Figure~\ref{fig:illustrationlemheirarchical}). Finally, since $v_1,v_2\in T_s\subseteq N_G(s)$, then there is a $u_1,u_2$-path contained entirely in $C_1\cup C_2 \cup \set{s,v_1,v_2}$. Since, by our assumption, $T_s\cap N_G[t]=\emptyset$, then this path, denoted $P_{u_1,u_2}$ (see Figure~\ref{fig:illustrationlemheirarchical}) avoids $N_G[t]$. But then, $u_1,u_2,t$ form an asteroidal triple in $G$, a contradiction (see Figure~\ref{fig:illustrationlemheirarchical2}).
\end{proof}
\begin{figure}[H]
	\centering
	\begin{minipage}{0.58\textwidth}
	\centering
	 \resizebox{0.7\textwidth}{!}{
		\begin{tikzpicture}
			% Draw a vertical elongated thin ellipse (T_s)
			\draw[thick] (0,0) ellipse (0.3cm and 1.3cm);
			
			% Add five vertices (black dots) inside the ellipse, vertically spaced
			\foreach \y in {-1,-0.5,0,0.5,1} {
				\fill[black] (0, \y) circle (2pt);
			}
			
			% Add vertex s to the left of the ellipse
			\node[fill=black, circle, inner sep=2pt, label=left:$s$] (s) at (-1,0) {};
			
			% Add edges from s to each dot in the ellipse
			\foreach \y in {-1,-0.5,0,0.5,1} {
				\draw (s) -- (0, \y);
			}
		
			\draw[thick,green](s)--(0,-1);
			\draw[thick,green](s)--(0,0.5);
			
			% Label for the ellipse (T_s)
			\node at (0.05,-1.6) {$T_s$};
			
			% Draw the new ellipse with new center (3, 0)
			\draw[thick] (3, 0) ellipse (1.5cm and 1cm);
			
			% Label for the new ellipse C_s(G - T_s)
			\node at (3, -1.5) {$C_t(G \sminus T_s)$};
			
			% Coordinates on the left surface of the ellipse C_t(G - T_s)
			\coordinate (p1) at (1.85, 0.64);
			\coordinate (p2) at (1.59, 0.34);
			\coordinate (p3) at (1.50, 0);
			\coordinate (p4) at (1.59, -0.34);
			\coordinate (p5) at (1.85, -0.64);
			
			% Add edges from each vertex in T_s to the corresponding points on the left surface of C_t(G - T_s)
			\draw (0, 1) -- (p1);
			\draw[thick, blue] (0, 0.5) -- (p2);
			\draw (0, 0) -- (p3);
			\draw (0, -0.5) -- (p4);
			\draw[thick, red] (0, -1) -- (p5);
			
			% Label the bottom vertex as v_1
			\node at (0, -0.81) {\footnotesize{$v_1$}};
			
			% Label the fourth vertex as v_2
			\node at (0, 0.66) {\footnotesize{$v_2$}};
			
			% Draw a new ellipse for C_1
			\draw[thick] (-1,-1.8) ellipse (0.5cm and 0.3cm);
			\node at (-1.7,-1.8) {$C_1$};
			
			% Draw a new ellipse for C_2
			\draw[thick] (1.2,-2) ellipse (0.4cm and 0.5cm);
			\node at (0.78,-2.5) {$C_2$};
			
			% Coordinates on the top-right face of C_1, spaced farther apart
			\coordinate (q1) at (-0.51, -1.74);
			\coordinate (q2) at (-0.56, -1.66);
			\coordinate (q3) at (-0.69, -1.57);
			
			% Add edges from the bottom three vertices in T_s to the coordinates on C_1
			\draw[thick, bend right,red] (0, -1) to (q1);
			\draw[bend right] (0, -0.5) to (q2);
			\draw[bend right] (0, 0) to (q3);
			
			% Coordinates for x1 and x2 on the top-right surface of C_2
			\coordinate (x1) at (1.48,-1.64);
			\coordinate (x2) at (1.23,-1.50);
			
			% Draw curved lines from (0, -0.5) to x2 and from (0, 0.5) to x1
			\draw[bend left] (0, -0.5) to (x2);
			\draw[thick, blue,bend left] (0, 0.5) to (x1);

			\coordinate (u1) at (-1,-1.75);
			\coordinate (u2) at (1.2,-1.9);
			
			% Add vertex s to the left of the ellipse
			\node[fill=black, circle, inner sep=0.6pt] at (u1) {};
			\node at (-1.17,-1.75) {\footnotesize{$u_1$}};
			
			\node[fill=black, circle, inner sep=0.6pt] at (u2) {};
			\node at (1.2,-2.05) {\footnotesize{$u_2$}};
			
			% Rightmost point of the ellipse C_t(G - T_s), labeled t
			\node[fill=black, circle, inner sep=2pt, label=right:$t$] at (4.5,0) {};
			
			% Draw a red squiggly line from u1 to q1
			\draw[thick, red, decorate, decoration={snake, amplitude=0.5mm}] (u1) -- (q1);
			
			% Draw a red squiggly line from u1 to q1
			\draw[thick, blue, decorate, decoration={snake, amplitude=0.5mm}] (u2) -- (x1);
			
			% Draw a red squiggly line from u1 to q1
			\draw[thick, red, decorate, decoration={snake, amplitude=0.5mm}] (p5) -- (4.5,0);
			
				% Draw a red squiggly line from u1 to q1
			\draw[thick, blue, decorate, decoration={snake, amplitude=0.5mm}] (p2) -- (4.5,0);
			
			% Label P_{u_1,t} below the red squiggly line inside C_t(G - T_s)
			\node[red, rotate=22] at (3.2, -0.55) {\footnotesize{$P_{u_1,t}$}};
			\node[blue, rotate=-5] at (3.4, 0.37) {\footnotesize{$P_{u_2,t}$}};
			
			\node[green] at (-0.95, -0.5) {\footnotesize{$P_{u_1,u_2}$}};
			
		\end{tikzpicture}
	}
		\caption{Illustration--Lemma~\ref{lem:heirarchical}.\label{fig:illustrationlemheirarchical}}
		\end{minipage}
		\hfill
		 \begin{minipage}{0.4\textwidth}
		 	\resizebox{0.77\textwidth}{!}{
		 	\begin{tikzpicture}
		 		% Coordinates for the triangle vertices
		 		\coordinate (u1) at (0, 0);
		 		\coordinate (u2) at (4, 0);
		 		\coordinate (t) at (2, 3);
		 		
		 		% Draw squiggly green line between u1 and u2, labeled P_{u_1,u_2}
		 		\draw[thick, green, decorate, decoration={snake, amplitude=0.5mm}] (u1) -- (u2);
		 		\node[green] at (2,-0.3) {\footnotesize{$P_{u_1,u_2}$}};
		 		
		 		% Draw squiggly red line between u1 and t, labeled P_{u_1,t}
		 		\draw[thick, red, decorate, decoration={snake, amplitude=0.5mm}] (u1) -- (t);
		 		\node[blue, rotate=-45] at (3.4,1.5) {\footnotesize{$P_{u_2,t}$}};
		 		
		 		% Draw squiggly blue line between u2 and t, labeled P_{u_2,t}
		 		\draw[thick, blue, decorate, decoration={snake, amplitude=0.5mm}] (u2) -- (t);
		 		\node[red, rotate=45] at (0.58,1.5) {\footnotesize{$P_{u_1,t}$}};
		 		
		 		% Draw vertices u_1, u_2, t
		 		\fill[black] (u1) circle (2pt) node[left] {$u_1$};
		 		\fill[black] (u2) circle (2pt) node[right] {$u_2$};
		 		\fill[black] (t) circle (2pt) node[above] {$t$};
		 	\end{tikzpicture}
		 }
	 	\caption{Illustration--Lemma~\ref{lem:heirarchical}.\label{fig:illustrationlemheirarchical2}}
		 \end{minipage}
\end{figure}

}

\def\corrheirarchical{
	Let $G$ be an AT-free graph, $A\subseteq \nodes(G)$, and $T_s\in \minlsepst{G}$ where $T_s \subseteq N_G(s){\setminus}N_G[t]$, we let:
	\[
	\set{C_1,\dots,C_\ell} \eqdef \set{C\in \cc(G\sminus T_s){\setminus}\set{C_s(G\sminus T_s)}: C\cap A\neq \emptyset}
	\]
	Let $S^*\eqdef \bigcap_{i=1}^\ell N_G(C_i)$. Then:
	\[
	\set{S\in \minlsepst{G}: A\subseteq C_s(G\sminus S)}=\bigcup_{v\in S^*}\set{S\in \minlsepst{G}: Av\subseteq C_s(G\sminus S)}
	\]
}
\eat{
\begin{corollary}
	\label{corr:heirarchical}
	\corrheirarchical
\end{corollary}
\begin{proof}
	If $S\in \minlsepst{G}$ where $Av\subseteq C_s(G\sminus S)$ for some $v\in \nodes(G)$, then clearly $S\in \minlsepst{G}$ where $A\subseteq C_s(G\sminus S)$; this proves one direction of the containment.
	
	Assume wlog that $|N_G(C_1)|\leq |N_G(C_2)|\leq \cdots \leq |N_G(C_\ell)|$. By Lemma~\ref{lem:heirarchical}, we have that $N_G(C_1)\subseteq N_G(C_2)\subseteq \cdots \subseteq N_G(C_\ell)$. Therefore, $S^*=N_G(C_1)$. Observe that if $\ell=1$, then $C_1=C_t(G\sminus T_s)$, in which case $S^*=T_s$.
	
	Suppose, by way of contradiction, that there exists a $S\in \minlsepst{G}$ where $A\subseteq C_s(G\sminus A)$ and $C_s(G\sminus S)\cap S^*=C_s(G\sminus S)\cap N_G(C_1)=\emptyset$. Since $S^* \subseteq T_s \subseteq N_G(s)$, it means that $S^* \subseteq S$. Since $S^*$ is, by definition, an $s,C_1$-separator and $S^*\subseteq S$, then $S$ is an $s,C_1$-separator. Therefore, $C_1\cap C_s(G\sminus S)=\emptyset$. In particular, $(C_1\cap A)\cap C_s(G\sminus S)=\emptyset$. Since, by our assumption, $C_1 \cap A\neq \emptyset$, then $A\not\subseteq C_s(G\sminus A)$; a contradiction.
\end{proof}
}

\def\corrheirarchical2{
	Let $G$ be AT-free, $T_s\in \minlsepst{G}$ where $T_s \subseteq N_G(s){\setminus}N_G[t]$, and $\emptyset \subset A\subseteq \nodes(G)$, such that $A\cap (C_s(G\sminus T_s)\cup T_s\cup C_t(G\sminus T_s))=\emptyset$. Define $\set{C_1,\dots,C_\ell} \eqdef \set{C\in \cc(G\sminus T_s): C\cap A\neq \emptyset}$, and $S^*\eqdef \bigcap_{i=1}^\ell N_G(C_i)$. Then:
	\begin{align*}
	\set{S\in \minlsepst{G}: A\subseteq C_s(G\sminus S)}=\mathop{\textstyle\bigcup}_{v\in S^*}\set{S\in \minlsepst{G}: v\in C_s(G\sminus S)}
	\end{align*}
}
\begin{corollary}
	\label{corr:heirarchical2}
	\corrheirarchical2
\end{corollary}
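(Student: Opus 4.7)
The plan is to combine the hierarchical structure of non-$C_s(G\sminus T_s)$ components guaranteed by Lemma~\ref{lem:heirarchical} (the only place AT-freeness is used) with the per-component characterization of Lemma~\ref{lem:lemNotCsOrCt}. First I would unpack the hypothesis: since $A$ is disjoint from $C_s(G\sminus T_s)\cup T_s\cup C_t(G\sminus T_s)$, every vertex of $A$ lies in a component of $\cc(G\sminus T_s)$ that is neither $C_s(G\sminus T_s)$ nor $C_t(G\sminus T_s)$. Thus each $C_i$ in the collection satisfies $s,t\notin C_i$, and $A=\bigcup_{i=1}^\ell (A\cap C_i)$ with $A\cap C_i\neq\emptyset$ for every $i$.

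Next, by Lemma~\ref{lem:heirarchical}, the neighborhoods $\set{N_G(C_i)}_{i=1}^\ell$ form a chain under inclusion, so we may assume $N_G(C_1)\subseteq N_G(C_2)\subseteq\cdots\subseteq N_G(C_\ell)$, whence $S^*=\bigcap_{i=1}^\ell N_G(C_i)=N_G(C_1)$. This identification of $S^*$ with the smallest neighborhood in the chain is the key structural fact that makes the claim go through.

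For the $\subseteq$ inclusion, I would take $S\in\minlsepst{G}$ with $A\subseteq C_s(G\sminus S)$ and apply Lemma~\ref{lem:lemNotCsOrCt} to $D=C_1$ with subset $A\cap C_1$: since $A\cap C_1\subseteq C_s(G\sminus S)$ and $A\cap C_1\neq\emptyset$, the lemma produces some $v\in N_G(C_1)=S^*$ with $v\in C_s(G\sminus S)$. For the $\supseteq$ inclusion, given $v\in S^*$ with $v\in C_s(G\sminus S)$, I would note that $v\in S^*\subseteq N_G(C_i)$ for every $i$, so Lemma~\ref{lem:lemNotCsOrCt} applied to each $D=C_i$ with subset $A\cap C_i$ yields $A\cap C_i\subseteq C_s(G\sminus S)$; taking the union over $i$ gives $A\subseteq C_s(G\sminus S)$.

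The only mild obstacle I anticipate is ensuring that a single witness $v\in S^*$ suffices simultaneously for all side components $C_i$. Without AT-freeness this would fail: one could have $N_G(C_i)\cap C_s(G\sminus S)\neq\emptyset$ for each $i$ while $\bigcap_i N_G(C_i)=S^*$ is disjoint from $C_s(G\sminus S)$ (or even empty). Lemma~\ref{lem:heirarchical} collapses this potential obstruction by forcing the neighborhoods into a chain, reducing $S^*$ to $N_G(C_1)$, after which both directions follow by direct appeals to Lemma~\ref{lem:lemNotCsOrCt}.
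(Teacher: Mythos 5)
Your proposal is correct and follows essentially the same route as the paper: Lemma~\ref{lem:heirarchical} collapses the neighborhoods $N_G(C_i)$ into a chain so that $S^*=N_G(C_1)$, and Lemma~\ref{lem:lemNotCsOrCt} is then applied per component. The only cosmetic difference is that for the forward inclusion you invoke the set-equality of Lemma~\ref{lem:lemNotCsOrCt} on $C_1$ directly, whereas the paper re-derives that direction by contradiction (showing $S^*\cap C_s(G\sminus S)=\emptyset$ forces $S^*\subseteq S$, cutting $C_1$ off from $s$); both rest on the same facts.
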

\eat{
\begin{proof}
	Since $\emptyset \subset A \subseteq \nodes(G){\setminus}(C_s(G\sminus T_s)\cup T_s\cup C_t(G\sminus T_s))$, then $\ell \geq 1$.
	Assume wlog that $|N_G(C_1)|\leq |N_G(C_2)|\leq \cdots \leq |N_G(C_\ell)|$. Since $G$ is AT-free, and $T_s\cap N_G[t]=\emptyset$, then by Lemma~\ref{lem:heirarchical}, we have that $N_G(C_1)\subseteq N_G(C_2)\subseteq \cdots \subseteq N_G(C_\ell)$. Therefore, $S^*=N_G(C_1)$. 
	
	Let $S\in \minlsepst{G}$ where $v\in C_s(G\sminus S)$ for some $v\in S^*$. Since $v\in S^*\eqdef \bigcap_{i=1}^\ell N_G(C_i)$, then by Lemma~\ref{lem:lemNotCsOrCt}, we have that $S\in \minlsepst{G}$ where $C_i \subseteq C_s(G\sminus S)$ for every $i\in \set{1,2,\dots,\ell}$. Since $A\subseteq \bigcup_{i=1}^\ell C_i$, then $A\subseteq C_s(G\sminus S)$.

	Suppose, by way of contradiction, that there exists an $S\in \minlsepst{G}$ where $A\subseteq C_s(G\sminus S)$ and $C_s(G\sminus S)\cap S^*=C_s(G\sminus S)\cap N_G(C_1)=\emptyset$. Since $S^* \subseteq T_s \subseteq N_G(s)$, it means that $S^* \subseteq S$. Since $S^*$ is, by definition, an $s,C_1$-separator and $S^*\subseteq S$, then $S$ is an $s,C_1$-separator. Therefore, $C_1\cap C_s(G\sminus S)=\emptyset$. In particular, $(C_1\cap A)\cap C_s(G\sminus S)=\emptyset$. Since $C_1 \cap A\neq \emptyset$, then $A\not\subseteq C_s(G\sminus S)$; a contradiction.
\end{proof}
}

\section{Finding All Close Minimal $s,t$-Separators in AT-Free Graphs}
\label{sec:MainThm}
In this section, we prove Theorem~\ref{thm:mainThmATFree} that forms the basis of our algorithm. The proof relies on the results established in Section~\ref{sec:charMinlSeps}. 
\begin{reptheorem}{\ref{thm:mainThmATFree}}
	\mainThmATFree
\end{reptheorem}
The procedure $\algname{CloseTo}$ that returns the minimal $s,t$-separators of $G$ that are close to $sA$ in time $O(n^2m)$ (or in time $O(nm)$), and its detailed runtime analysis, is deferred to Section~\ref{sec:CloseToPseudocode} of the Appendix. We now prove Theorem~\ref{thm:mainThmATFree}.
\begin{proof}
	Let $T_t, T_s\in \minlsepst{G}$, where $T_t\subseteq N_G(t)$ and $T_s\subseteq N_G(s)$. By Lemma~\ref{lem:uniqueCloseVertexSet}, $T_s$ and $T_t$ are the unique minimal $s,t$-separators that are close $s$ and $t$, respectively, and they can be found in time $O(m)$. If $A\not\subseteq C_s(G\sminus T_t)$, then by Lemma~\ref{lem:belongtoCs}, it holds that $A\not\subseteq C_s(G\sminus S)$ for every $S\in \minlsepst{G}$. Hence, if $A\not\subseteq C_s(G\sminus T_t)$, then there are no minimal $s,t$-separators close to $sA$ (i.e., $\F_{sA}(G)=\emptyset$). So, we assume that $A\subseteq C_s(G\sminus T_t)$.
	For every $S\in \minlsepst{G}$ where $A\subseteq C_s(G\sminus S)$ it must hold that $N_G(sA)\cap N_G(t) \subseteq S$. Since $G\sminus (N_G(sA)\cap N_G(t))$ is an induced subgraph of an AT-free graph $G$, then $G\sminus (N_G(sA)\cap N_G(t))$ is AT-free as well. Therefore, we assume that $G$ is AT-free, and that $N_G(sA)\cap N_G(t)=\emptyset$. %In particular, $N_G(sA)\cap T_t=\emptyset$.
	
	Let $A_1\eqdef A\cap (C_s(G\sminus T_s)\cup T_s\cup C_t(G\sminus T_s))$, and $A_2\eqdef A{\setminus}A_1$. Let $\set{C_1,\dots,C_\ell}\eqdef\set{C\in \cc(G\sminus T_s): C\cap A_2 \neq \emptyset}$, and $S^*\eqdef\bigcap_{i=1}^\ell N_G(C_i)$.\eat{If $A_2\neq \emptyset$ then} By Corollary~\ref{corr:heirarchical2}:
	\begin{equation}
		\set{S\in \minlsepst{G}: A_2\subseteq C_s(G\sminus S)}=\mathop{\textstyle\bigcup}_{v\in S^*}\set{S\in \minlsepst{G}: v\in C_s(G\sminus S)} \label{eq:mainThm11}
	\end{equation}
	Therefore, we have that:
	\begin{align}
		&\set{S\in \minlsepst{G}: A\subseteq C_s(G\sminus S)}\\
		&=\set{S\in \minlsepst{G}: A_1\subseteq C_s(G\sminus S)}\cap \set{S\in \minlsepst{G}: A_2\subseteq C_s(G\sminus S)} \nonumber \\
		&\underbrace{=}_{\eqref{eq:mainThm11}}\set{S\in \minlsepst{G}: A_1\subseteq C_s(G\sminus S)}\cap\mathopen{}\big(\mathop{\textstyle\bigcup}_{v\in S^*}\set{S\in \minlsepst{G}: v\in C_s(G\sminus S)}\big) \mathclose{} \nonumber \\
		&=\mathop{\textstyle\bigcup}_{v\in S^*}\set{S\in \minlsepst{G}: A_1v\in C_s(G\sminus S)} \label{eq:mainThm22}
	\end{align}
	Since $S^* \subseteq T_s$, then $A_1v\subseteq C_s(G\sminus T_s)\cup T_s\cup C_t(G\sminus T_s)$. 
	Therefore, if we show that for every $v\in S^*$, there are at most $n$ minimal $s,t$-separators that are close to $svA_1$ (i.e., $|\F_{svA_1}|\leq n$), which can be found in time $O(nm)$, then we get that there are at most $|S^*|\cdot n \leq n^2$ minimal $s,t$-separators that are close to $sA$, which can be found in time $O(n^2m)$.
	Overall, to prove the claim we need to show that if $A\subseteq C_s(G\sminus S)\cup T_s \cup C_t(G\sminus T_s)$, then there are at most $n$ minimal $s,t$-separators that are close to $sA$, which can be found in time $O(nm)$. The rest of the proof is devoted to this setting.
	\newline

	 \eat{
	If $A\cap N_G[t] \neq \emptyset$ then there is no minimal $s,t$-separator where $A\subseteq C_s(G\sminus S)$. So, we assume that $A\cap N_G[t]=\emptyset$. 
	Let $A_1 \eqdef A\cap C_s(G\sminus T_s)$ and $A_2 \eqdef A\cap C_t(G\sminus T_s)$. By the assumption of the lemma $A=A_1\cup A_2$ where $A_1\cap A_2=\emptyset$.
	By Lemma~\ref{lem:uniqueCloseVertex}, $T_s$ is the unique minimal $s,t$-separator that is close to $s$. Therefore, $A_1\subseteq C_s(G\sminus T_s)\subseteq C_s(G\sminus S)$ for every $S\in \minlsepst{G}$. Therefore, $\minlsepst{G}=\set{S\in \minlsepst{G}: A_1 \subseteq C_s(G\sminus S)}$, and we may assume that $A\subseteq C_t(G\sminus T_s)$. Now, let $T_t\in \minlsepst{G}$, where $T_t\subseteq N_G(t)$. By Lemma~\ref{lem:uniqueCloseVertex}, $T_t$ is the unique minimal $s,t$-separator that is close to $t$. 
	Therefore, $C_t(G\sminus T_t)\subseteq C_t(G\sminus S)$ for every $S\in \minlsepst{G}$. 
	By Theorem~\ref{thm:parallel} (see~\eqref{eq:thmparallel3}), we have that $S\subseteq T_t \cup C_s(G\sminus T_t)$ for every $S\in \minlsepst{G}$. Also, by Theorem~\ref{thm:parallel} (see~\eqref{eq:thmparallel2}), we have that $C_s(G\sminus S)\subseteq C_s(G\sminus T_t)$. Therefore, if $A\not\subseteq C_s(G\sminus T_t)$, then no minimal $s,t$-separator close to $sA$ exists. So, we assume that $A \subseteq C_s(G\sminus T_t)\cap C_t(G\sminus T_s)$.
	}
	\eat{
	We may also assume that $sA$ form an independent set. If $(a_1,a_2)\in \edges(G)$ for some pair of vertices $a_1,a_2\in sA$, then $\set{S\in \minlsepst{G}: A\subseteq C_s(G\sminus S)}=\set{S\in \minlsepst{G'}: A{\setminus}\set{a_2}\subseteq C_s(G'\sminus S)}$ where $G'$ is the graph that results from $G$ by contracting the edge $(a_1,a_2)$ to $a_1$. 
	By Lemma~\ref{lem:contractEdgeATFree}, $G'$ is also AT-free.
	Since $sA$ forms an independent set in $G$ then $A\cap (N_G[s] \cup N_G[t])=\emptyset$. By Corollary~\ref{corr:summaryPreprocessATFree}, we may assume that $A\subseteq C_t(G\sminus T_s)$ and $A\subseteq C_s(G\sminus T_t)$ where $T_s,T_t \in \minlsepst{G}$ are the unique minimal $s,t$-separators close to $s$ and $t$, respectively, and that $T_s$ and $T_t$ are cliques.}
	
	\noindent {\bf Claim 1}: $\set{S\in \minlsepst{G}: A\subseteq C_s(G\sminus S)}\subseteq \minlsep{sA,t}{G}$. \newline
	\noindent {\bf Proof.} Let $S\in \minlsepst{G}$ where $A\subseteq C_s(G\sminus S)$. Then $S$ is an $sA,t$-separator. By Lemma~\ref{lem:fullComponents}, $S=N_G(C_t(G\sminus S))=N_G(C_s(G\sminus S))$. By Lemma~\ref{lem:simpAB}, $S\in \minlsep{sA,t}{G}$.
	\qed
	
	By Lemma~\ref{lem:MinlsASep}, we have that $\minlsep{sA,t}{G}=\minlsepst{H}$ where $\nodes(H)=\nodes(G)$ and $\edges(H)=\edges(G)\cup \set{(s,a): a\in N_G[A]}$. 
	Let $S_1\in \minlsepst{H}$ where $S_1 \subseteq N_H(s)$. By Lemma~\ref{lem:uniqueCloseVertexSet}, $S_1$ is unique and can be found in time $O(m)$. In addition, $S_1\in \minlsep{sA,t}{G}$. Let $C_s,C_t \in \cc(G\sminus S_1)$ be the connected components of $G\sminus S_1$ that contain vertices $s$ and $t$, respectively.
	\newline
	
	\noindent {\bf Claim 2}: For every $S\in \minlsepst{G}$: if $A\subseteq C_s(G\sminus S)$, then $C_s(G\sminus S_1)\subseteq C_s(G\sminus S)$. \newline
	\noindent {\bf Proof.}  Let $S\in \minlsepst{G}$ where $A\subseteq C_s(G\sminus S)$. By Claim 1, it holds that $S\in \minlsep{sA,t}{G}=\minlsepst{H}$. Since $S_1\in \minlsepst{H}$ where $S_1\subseteq N_H(s)$, then $S_1\subseteq S\cup C_s(H\sminus S)$, and by Lemma~\ref{lem:inclusionCsCt}, that $C_s(H\sminus S_1)\subseteq C_s(H\sminus S)$. Since $S\in \minlsepst{G}$ where $A\subseteq C_s(G\sminus S)$, then $C_s(H\sminus S)=C_s(G\sminus S)$. Since $\edges(G)\subseteq \edges(H)$, then:
	\[
		C_s(G\sminus S_1)\subseteq C_s(H\sminus S_1)\underbrace{\subseteq}_{\text{Lemma}~\ref{lem:inclusionCsCt}} C_s(H\sminus S)\underbrace{=}_{A\subseteq C_s(G\sminus S)}C_s(G\sminus S).
	\]

	\noindent {\bf Claim 3}: $A\cap(T_s\cup C_s(G\sminus T_s))\subseteq C_s(G\sminus S_1)$. \newline
	\noindent {\bf Proof.} Since $T_s\subseteq N_G(s)$, then $T_s\subseteq S_1\cup C_s(G\sminus S_1)$. By Lemma~\ref{lem:inclusionCsCt}, $C_s(G\sminus T_s)\subseteq C_s(G\sminus S_1)$. 
	Since $S_1\in \minlsep{sA,t}{G}$, then $A\cap S_1=\emptyset$. Since $A\cap T_s\subseteq N_G(s)$, then $A\cap T_s \subseteq C_s(G\sminus S_1)$.
	\qed 

	Consider the graph $G\sminus S_1$. There are two cases: $A\subseteq C_s(G\sminus S_1)$ and $A\not\subseteq C_s(G\sminus S_1)$. Recall that $C_s\eqdef C_s(G\sminus S_1)$, where $S_1\in \minlsepst{H}$ and $S_1\subseteq N_H(s)$. \newline
	
	\noindent {\bf Case 1: $A\subseteq C_s$.} Since $S_1\in \minlsepst{H}=\minlsep{sA,t}{G}$ and $A\subseteq C_s$, then by Lemma~\ref{lem:simpAB} it holds that $S_1=N_G(C_s)\cap N_G(C_t)$. By Lemma~\ref{lem:fullComponents}, we have that $S_1\in \minlsepst{G}$. We claim that $S_1$ is the unique minimal $s,t$-separator that is close to $sA$. Let $S\in \minlsepst{G}$ such that $A\subseteq C_s(G\sminus S)$. By Claim 2, $C_s(G\sminus S_1)\subseteq C_s(G\sminus S)$. Hence, for this case the Theorem is proved.

	\noindent {\bf Case 2: $A\not\subseteq C_s$.} Let $A'\eqdef A{\setminus}C_s$. By Claim 3, we have that $A'\subseteq \nodes(G){\setminus}(C_s(G\sminus T_s)\cup T_s)$. Since $A\subseteq C_s(G\sminus T_s)\cup T_s \cup C_t(G\sminus T_s)$, we have that $A'\subseteq C_t(G\sminus T_s)$. 
	Therefore, for every $a\in A'$ there is an $a,t$-path in $G$ that resides entirely in $C_t(G\sminus T_s)$, and hence avoids $N_G[s]$. By our assumption that $A' \subseteq A\subseteq C_s(G\sminus T_t)$, there is an $s,a$-path in $G$ that resides entirely in $C_s(G\sminus T_t)$, and hence avoids $N_G[t]$. 
	
	Define $Q_s \eqdef N_G(C_s)$. By definition, $Q_s$ is an $s,t$-separator, and $Q_s \subseteq S_1$, and hence $Q_s=N_G(C_s)\cap S_1$.
	Since $S_1\in \minlsep{sA,t}{G}$, then by Lemma~\ref{lem:simpAB}, it holds that $S_1\subseteq N_G(C_t)$. Therefore, $Q_s$ is an $s,t$-separator where $Q_s\subseteq N_G(C_s)\cap N_G(C_t)$. By Lemma~\ref{lem:fullComponents}, $Q_s\in \minlsepst{G}$; see Figure~\ref{fig:mainThmCase2} for illustration. \newline

	\noindent {\bf Claim 4}: For every $a\in A'$, it holds that $Q_s \subseteq N_G(a)$. \newline
	\noindent {\bf Proof.} Suppose, by way of contradiction, that $Q_s\not\subseteq N_G(a)$ for some $a\in A'$, and let $v\in Q_s{\setminus}N_G(a)$. By Definition, $v\in Q_s\subseteq N_G(C_s)\cap N_G(C_t)$. Therefore, there is an $s,t$-path in $G$ that passes through $v$, denoted $P^v_{s,t}$, that resides entirely in $C_s \cup \set{v}\cup C_t$ (see Fig.~\ref{fig:mainThmCase2}). Since $C_s\eqdef C_s(G\sminus S_1)$, where  $a\notin S_1 \cup C_s$, and $C_t\eqdef C_t(G\sminus S_1)$ where $a\notin C_t\cup S_1$, then $N_G[a]\cap (C_s\cup C_t)=\emptyset$. Combined with the assumption that $v\notin N_G[a]$, we get that $N_G[a]\cap(C_s\cup \set{v}\cup C_t)=\emptyset$, and hence $N_G[a]\cap \nodes(P^v_{s,t})=\emptyset$. Therefore, there is an $s,t$-path in $G$ (via $v$) that avoids $N_G[a]$. Since  there exists an $s,a$-path in $G$ that avoids $N_G[t]$ (i.e., $P_{a,s}$) and an $a,t$-path in $G$ that avoids $N_G[s]$ (i.e., $P_{a,t}$), we get that $s,a,t$ form an asteroidal triple in $G$ (see Fig.~\ref{fig:mainThmCase2}). But this is a contradiction. Therefore, $Q_s\subseteq N_G(a)$.
	\qed 
	\begin{figure}[h]
		\centering
		 \resizebox{0.6\textwidth}{!}{
		\begin{tikzpicture}
			
			% Draw the elongated ellipse Q_s
			\draw[thick] (0, 0) ellipse [x radius=0.3cm, y radius=1.5cm];
			
			% Define the coordinates of the 5 equally spaced vertices inside the ellipse Q_s
			\foreach \i in {0,1,2,3,4} {
				\node[circle, fill=black, inner sep=1.5pt] at (0, 1.25-0.625*\i) {};  % Vertices inside Q_s
			}
			
			% Label the ellipse as Q_s directly below it
			\node at (0.1, -1.8) {\footnotesize{$Q_s$}};
			
			% Draw the ellipse C_s to the left of Q_s with the dimensions
			\draw[thick] (-3.5, 0) ellipse [x radius=1.3cm, y radius=1cm];
			\node[circle, fill=black, inner sep=1.5pt] at (-4.8,0) {};  
			\node at (-5, 0) {$s$};
			
			% Label the ellipse C_s directly below it
			\node at (-3.4, -1.8) {\footnotesize{$C_s=C_s(G\sminus S_1)$}};
			
			% Draw the ellipse C_t to the right of Q_s, symmetrically
			\draw[thick] (3.5, 0) ellipse [x radius=1.3cm, y radius=1cm];
			
			% Label the ellipse C_t directly below it
			\node at (3.5, -1.8) {\footnotesize{$C_t=C_t(G\sminus S_1)$}};
				\node[circle, fill=black, inner sep=1.5pt] at (4.8,0) {};  
			\node at (5, 0) {$t$};
		
			% Draw an ellipse centered at (-1.4, 2.0) 
			\draw[thick] (-1.4, 2.0) circle [x radius=0.75cm, y radius=0.4cm];
			\coordinate (a) at (-1.6,2.1);
			\node[circle, fill=black, inner sep=0.75pt] at (a) {}; 
			\node at  (-1.75,2.1) {\footnotesize{$a$}};
			
			\node at  (0,-0.85) {\footnotesize{$v$}};

			\coordinate (p1) at (-2.590,0.707);
			\coordinate (p2) at  (-2.297,0.383);
			\coordinate (p3) at (-2.2,0);
			\coordinate (p4) at  (-2.297,-0.383);
			\coordinate (p5) at (-2.590,-0.707);

			\coordinate (q1) at (2.590,0.707);
			\coordinate (q2) at  (2.297,0.383);
			\coordinate (q3) at (2.2, 0);
			\coordinate (q4) at  (2.297,-0.383);
			\coordinate (q5) at (2.590,-0.707);

			% Add edges from each vertex in T_s to the corresponding points on the left surface of C_t(G - T_s)
			\draw[thick, blue] (p1) -- (0,1.25);
			\draw[] (p2) -- (0,0.625);
			\draw (p3) -- (0,0);
			\draw[thick, red] (p4) -- (0,-0.625);
			\draw[] (p5) -- (0,-1.25);
			
			\draw (q1) -- (0,1.25);
			\draw[] (q2) -- (0,0.625);
			\draw (q3) -- (0,0);
			\draw[thick, red] (q4) -- (0,-0.625);
			\draw[] (q5) -- (0,-1.25);
			
			\coordinate (a1) at  (-0.86975,2.2828);
			\coordinate (a2) at  (-0.707,2.1532);
			
			\coordinate (a3) at  (-0.7071,1.8469);
			\coordinate (a4) at  (-1.025,1.6536);
			
			\coordinate (q6) at (4.15,0.866);
		
			  % Draw upward curved lines from a1 and a2 to C_t surface
			  \draw (a2) -- (3.5,1.0);
			  \draw[bend left=20, thick, green] (a1) to (q6); % Curve from a2 to C_t boundary
			
			  \coordinate (x1) at (1.8,2.07);
			  \coordinate (x2) at (1.4,1.57);
			  \node[circle, fill=black, inner sep=1.5pt] at (x1) {};
			   \node[green, rotate=-23] at  (2.7,1.93) {\footnotesize{$P_{a,t}$}};
			  
			  \node[blue, rotate=12] at  (-1.5,1.13) {\footnotesize{$P_{a,s}$}};
			   \node[red, rotate=-5] at  (-1.22,-0.31) {\footnotesize{$P_{s,t}^v$}};
			  	
			  \node[circle, fill=black, inner sep=1.5pt] at (x2) {};
			  	
			  \draw (x2) -- (0,1.25);
			  
			  \draw[thick, blue, bend left=25] (a3) to (0,1.25);
			  \draw[bend right=25] (a4) to (0,0.625);
			   \draw[bend right=25] (a4) to (0,0);

			  % Draw a red squiggly line from s to v
			  \draw[thick, red, decorate, decoration={snake, amplitude=0.5mm}] (-4.8,0) -- (p4);
			  \draw[thick, red, decorate, decoration={snake, amplitude=0.5mm}] (4.8,0) -- (q4);

			   \draw[thick, blue, decorate, decoration={snake, amplitude=0.5mm}] (a) -- (a3);
			   \draw[thick, blue, decorate, decoration={snake, amplitude=0.5mm}] (-4.8,0) -- (p1);
			    
			   \draw[thick, green, decorate, decoration={snake, amplitude=0.5mm}] (a) -- (a1);
			   \draw[thick, green, decorate, decoration={snake, amplitude=0.5mm}] (4.8,0) -- (q6);
			   
			   \draw[thick,dashed] (1.6, 1.82) ellipse [x radius=0.5cm, y radius=0.55cm];
			   \node at (1.6,1.82) {\tiny{$S_1{\setminus}Q_s$}};
		\end{tikzpicture}
	}
	\caption{Illustration for the proof of Theorem~\ref{thm:mainThmATFree} (Case 2). \label{fig:mainThmCase2}}
	\end{figure}
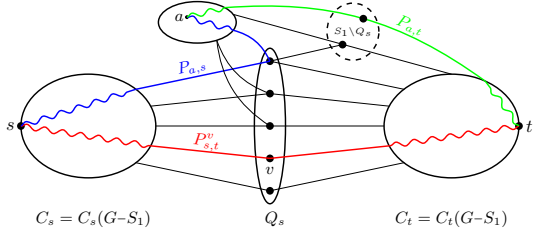

	\eat{
	\noindent {\bf Claim 3}: $S_1\cap N_G[t]=\emptyset$. \newline
	\noindent {\bf Proof.} By definition, $S_1 \subseteq N_H(s)=N_G[sA]$. By our assumption, $N_G[sA]\cap N_G[t]=\emptyset$. Hence, $S_1\cap N_G[t]=\emptyset$.\newline

	\noindent {\bf Case 2: $A\not\subseteq C_s$.} Let $A_1 \eqdef A \cap C_s$ and $A_2\eqdef A{\setminus}A_1$. By our assumption $A_2 \neq \emptyset$. Define $Q_s \eqdef N_G(C_s)\cap S_1$, and $G' \eqdef G\sminus (S_1{\setminus}Q_s)$. Observe that $G\sminus S_1=G'\sminus Q_s$, and hence $\cc(G\sminus S_1)=\cc(G'\sminus Q_s)$.
	By Lemma~\ref{lem:minstSepFromsAtSep}, $Q_s\in \minlsepst{G}\cap \minlsepst{G'}$, and $C_s\eqdef C_s(G\sminus S_1)=C_s(G\sminus Q_s)=C_s(G'\sminus Q_s)$. Since $T_s\subseteq N_G(s) \subseteq Q_s \cup C_s$, then by Lemma~\ref{lem:minstSepFromsAtSep}, it holds that $T_s \in \minlsepst{G'}\cap \minlsepst{G}$. Since $G$ is AT-free, then $G'$ is AT-free. 
	Now, let 
	\begin{align*}
		\set{C_1,\dots,C_\ell}=\set{C\in \cc(G'\sminus T_s){\setminus}\set{C_s(G'\sminus T_s)}: C \cap A \neq\emptyset},&&\text{ and }&&S^*\eqdef \bigcap_{i=1}^\ell N_{G'}(C_i)
	\end{align*}
By Corollary~\ref{corr:heirarchical}, we may assume that $A\cap S^*\neq \emptyset$.

	For every $a\in A_2$, let $C_a\in \cc(G'\sminus Q_s)$ be the connected component that contains $a$. Since $C_a\in \cc(G\sminus S_1)$ where $S_1\in \minlsep{sA,t}{G}$, then $C_a \neq C_t$. Since $C_s=C_s(G\sminus S_1)=C_s(G'\sminus Q_s)$ and $a \notin C_s$, then $C_a$ is distinct from $C_s$. By definition, $N_{G'}(C_a)\subseteq Q_s$.\newline
	
	\noindent {\bf Claim 4:} For every $a\in A_2$, there exists a
	vertex $v_a\in N_{G'}(C_a)$ such that $v_a\notin N_{G'}[s]\cup N_{G'}[t]$.
	\newline
	\noindent {\bf Proof.} Since $N_{G'}(C_a)\subseteq Q_s \subseteq S_1$, then $N_{G'}(C_a)\cap N_G[t]\subseteq S_1 \cap N_{G'}[t]=\emptyset$ by claim 3. Hence, it is enough to show that there exists a
	vertex $v_a\in N_{G'}(C_a)$ such that $v_a\notin N_{G'}[s]$.

	Since $Q_s\eqdef N_G(C_s)=N_{G'}(C_s)$, and $N_{G'}(C_a)\subseteq Q_s$, then $N_{G'}(C_a)$ is a minimal $s,a$-separator in $G'$. 
	So, suppose that $N_{G'}(C_a)\subseteq N_{G'}[s]$.
	By Lemma~\ref{lem:minstSepFromsAtSep}, $Q_s, T_s\in \minlsepst{G'}$, and by Lemma~\ref{lem:inclusionInClosSep}, it holds that $Q_s \cap N_{G'}[s] \subseteq T_s$. So, we have that $N_{G'}(C_a)\subseteq T_s$, and that $N_{G'}(C_a)$ is an $s,a$-separator in $G'$. Therefore, $C_a\in \set{C\in \cc(G'\sminus T_s){\setminus}\set{C_s(G'\sminus T_s)}: C\cap A\neq \emptyset}$. By Corollary~\ref{corr:heirarchical}, $N_{G'}(C_a)\supseteq S^*$. Since, by our assumption, $S^*\cap A\neq \emptyset$, then $N_{G'}(C_a)\supseteq (S^*\cap A)\supset \emptyset$. Since $N_{G'}(C_a)\subseteq Q_s$, 
	this means that $Q_s \cap A\neq \emptyset$, which is a contradiction because $Q_s \subseteq S_1 \in \minlsep{sA,t}{G}$. This proves the claim.
	 \qed \newline

	\noindent {\bf Claim 5}: For every $a\in A_2$, it holds that $Q_s \subseteq N_G(a)$. \newline
	\noindent {\bf Proof.} 
	We show that $Q_s \subseteq N_{G'}(a) \subseteq N_G(a)$.
	By definition, $G\sminus S_1=G\sminus (Q_s \cup S_1{\setminus}Q_s)=(G\sminus S_1{\setminus}Q_s)\sminus Q_s=G'\sminus Q_s$. Therefore, $\cc(G\sminus S_1)=\cc(G'\sminus Q_s)$. Consequently, $C_a\in \cc(G'\sminus Q_s)$ where $C_a \neq C_s$, and since $Q_s \in \minlsep{sA,t}{G'}$, then $C_a \neq C_t$.

	In Claim 4 we established that for every $a\in A_2$, there is a vertex $v_a \in N_{G'}(C_a)$ such that $v_a \notin N_{G'}[s] \cup N_{G'}[t]$. 
	Since $N_{G'}(C_a)\subseteq Q_s$, then $v_a \in Q_s\in \minlsepst{G'}$. Consequently, there is an $s,a$-path in $G'$ that resides entirely in $C_a \cup \set{v_a} \cup C_s(G'\sminus Q_s)$ and hence avoids $N_{G'}[t]$.
	Likewise, there is an $a,t$-path in $G'$ that resides entirely in $C_a \cup \set{v_a} \cup C_t(G'\sminus Q_s)$ and hence avoids $N_{G'}[s]$. 
	
	Now, suppose that $Q_s\not\subseteq N_{G'}(a)$ for some $a\in A_2$, and let $v\in Q_s{\setminus}N_{G'}(a)$. By definition, $v\in Q_s=N_{G'}(C_s)\cap N_{G'}(C_t)$. Therefore, there is an $s,t$-path in $G'$ that resides entirely in $C_s \cup \set{v} \cup C_t$. Since $C_s\eqdef C_s(G\sminus S_1)$ where $a\notin C_s\cup S_1$, and $C_t\eqdef C_t(G\sminus S_1)$ where $a\notin C_t$, then $N_{G'}[a]\cap (C_s\cup C_t)=\emptyset$. Combined with the assumption that $v\notin N_{G'}[a]$, we get that $N_{G'}[a]\cap (C_s\cup \set{v} \cup C_t)=\emptyset$. Therefore, there is an $s,t$-path in $G'$ (via $v$) that avoids $N_{G'}[a]$. But then $s,a,t$ form an asteroidal triple in $G'$. Since $G'$ is an induced subgraph of an AT-free graph $G$, we arrive at a contradiction. \qed \newline

}

	Let $S\in \minlsepst{G}$ where $A\subseteq C_s(G\sminus S)$. By Claim 4, $Q_s \subseteq N_G(a)$ for every $a\in A'\subseteq A$. Since $A'\neq \emptyset$, then $Q_s \subseteq S\cup C_s(G\sminus S)$. Since $S,Q_s\in \minlsepst{G}$, then by Lemma~\ref{lem:inclusionCsCt}, we have that $C_s(G\sminus Q_s)\subseteq C_s(G\sminus S)$. Therefore, we get that:
	\begin{align}
		&\set{S\in \minlsepst{G}: A \subseteq C_s(G\sminus S)}\subseteq \set{S\in \minlsepst{G}: C_s(G\sminus Q_s)\subseteq C_s(G\sminus S)} 	\label{eq:MainThm1}
	%	&\text{By Lemma~\ref{lem:addEdgesFromSTot}, it holds that: }
	%	\set{S\in \minlsepst{G}: S\subseteq Q_s \cup C_t(G\sminus Q_s)} =\minlsepst{M} 	\label{eq:MainThm2}
	\end{align}
Since $C_s(G\sminus Q_s)$ is a connected component, then by Lemma~\ref{lem:contract}, we have that:
\begin{align}
	 \set{S\in \minlsepst{G}: C_s(G\sminus Q_s)\subseteq C_s(G\sminus S)}  =\minlsepst{M} \label{eq:MainThm2}
		\end{align}
\eat{
	\begin{align}
	&\set{S\in \minlsepst{G}: A \subseteq C_s(G\sminus S)}\subseteq \set{S\in \minlsepst{G}: S\subseteq Q_s \cup C_t(G\sminus Q_s)} 	\label{eq:MainThm1}\\
	&\text{By Lemma~\ref{lem:addEdgesFromSTot}, it holds that: }
	\set{S\in \minlsepst{G}: S\subseteq Q_s \cup C_t(G\sminus Q_s)} =\minlsepst{M} 	\label{eq:MainThm2}
	\end{align}
}
	where $M$ is the graph that results from $G$ by contracting $C_s(G\sminus Q_s)$ to vertex $s$. 
	Also, by Lemma~\ref{lem:contract}, we have that $C_s(G\sminus S)=C_s(M\sminus S)\cup C_s(G\sminus Q_s)$, and $C_t(G\sminus S)=C_t(M\sminus S)$ for every $S\in \minlsepst{M}$. 
	Let $D\eqdef A{\setminus}C_s(G\sminus Q_s)$.\newline
	
	\noindent {\bf Claim 5}: $\F_{sD}(M)=\F_{sA}(G)$.\newline
	The technical proof of this claim is deferred to Section~\ref{sec:CloseToPseudocode} of the Appendix.
	\eat{
	\noindent{\bf Proof}: 
	Let $S\in \F_{sA}(G)$. By definition, $\F_{sA}(G)\subseteq \set{S\in \minlsepst{G}: A\subseteq C_s(G\sminus S)}$. From~\eqref{eq:MainThm1} and~\eqref{eq:MainThm2}, we have that $S\in \minlsepst{M}$, where $C_s(G\sminus S)=C_s(M\sminus S)\cup C_s(G\sminus Q_s)$. Since $A\subseteq C_s(G\sminus S)$, then $D=A{\setminus}C_s(G\sminus Q_s)\subseteq C_s(M\sminus S)$. Suppose, by way of contradiction, that $S\notin \F_{sD}(M)$. Since $D\subseteq C_s(M\sminus S)$, it means that there exists an $S'\in \minlsepst{M}$ where $D\subseteq C_s(M\sminus S')\subset C_s(M\sminus S)$. By~\eqref{eq:MainThm2}, we have that $S'\in \minlsepst{G}$ where $C_s(G\sminus S')=C_s(M\sminus S')\cup C_s(G\sminus Q_s)\subset C_s(M\sminus S)\cup C_s(G\sminus Q_s)=C_s(G\sminus S)$. But then, $A\subseteq C_s(G\sminus S')\subset C_s(G \sminus S)$, contradicting the assumption that $S\in \F_{sA}(G)$. Therefore, $\F_{sA}(G)\subseteq \F_{sD}(M)$.
	
	Now, let $S\in \F_{sD}(M)$. From~\eqref{eq:MainThm2}, we have that $S\in \minlsepst{G}$ where $C_s(G\sminus S)=C_s(M\sminus S)\cup C_s(G\sminus Q_s)$. Since $D\subseteq C_s(M\sminus S)$, we have that $A\subseteq C_s(G\sminus S)$. If $S\notin \F_{sA}(G)$, then there exists an $S'\in \minlsepst{G}$ where $A\subseteq C_s(G\sminus S') \subset C_s(G\sminus S)$. From~\eqref{eq:MainThm1} and~\eqref{eq:MainThm2}, we have that $S'\in \minlsepst{M}$, where $C_s(G\sminus S')=C_s(M\sminus S')\cup C_s(G\sminus Q_s)$. Therefore, we have that $A\subseteq C_s(G\sminus S')=C_s(M\sminus S')\cup C_s(G\sminus Q_s)\subset C_s(M\sminus S)\cup C_s(G\sminus Q_s)=C_s(G\sminus S)$. In particular, this means that $D\subseteq C_s(M\sminus S')$ and $C_s(M\sminus S')\subset C_s(M\sminus S)$, contradicting the assumption that $S\in \F_{sD}(M)$.\qed }

	Since $\F_{sD}(M)=\F_{sA}(G)$, we are left to show that $|\F_{sD}(M)|\leq n$, and that $\F_{sD}(M)$ can be computed in time $O(nm)$.
	By definition of contraction, we have that $N_M(s) \supseteq N_G(C_s(G\sminus Q_s))=Q_s$, and that $C_t(M\sminus Q_s)=C_t(G\sminus Q_s)$. Consequently, we have that $Q_s\in \minlsepst{M}$ where $Q_s\subseteq N_M(s)$. By Lemma~\ref{lem:uniqueCloseVertexSet}, $Q_s$ is the unique minimal $s,t$-separator of $M$ that is close to $s$. For every $a\in A$, either $a\in C_s(G\sminus S_1)\subseteq C_s(G\sminus Q_s)$, or by Claim 4, $Q_s \subseteq  N_G(a) \subseteq N_M(a)$. By Lemma~\ref{lem:technicalLemmaCloseToA}, there are at most $|Q_s|\leq n$ minimal $s,t$-separators that are close to $sD$ in $M$, and they can be found in time $O(|Q_s|\cdot m)=O(nm)$.
\end{proof}

\eat{
\def\mainThm{
	Let $G$ be a good graph that is also AT-Free, and let $u \in \nodes(G){\setminus}\set{s,t}$. If $\exists S\in \minlsepst{G}$ such that $u\in C_s(G\sminus S)$, then: 
	$$\minlsep{su,t}{G}=\set{T\in \minlsepst{G}: u\in C_s(G\sminus T)}$$
}

\def\mainThmNoGood{
	Let $G$ be AT-Free, and let $u \in \nodes(G){\setminus}\set{s,t}$. Let $T_s, T_t\in \minlsepst{G}$ be the unique minimal $s,t$-separators where $T_s \subseteq N_G(s)$ and $T_t \subseteq N_G(t)$. 
	If $u\in C_s(G\sminus T_t)$ and $u\in C_t(G\sminus T_s)$, then:
	$$\minlsep{su,t}{G}=\set{T\in \minlsepst{G}: u\notin T\cup C_t(G\sminus T)}$$
}
\begin{theorem}
	\label{thm:mainGood}
	\mainThm
\end{theorem}

\begin{theorem}
	\label{thm:mainNoGood}
	\mainThmNoGood
\end{theorem}

An immediate Corollary from Theorem~\ref{thm:mainGood} is the following.
\begin{corollary}
	\label{corr:safeABSepATFree}
		Let $G$ be a good graph that is also AT-Free, and let $A,B \subseteq \nodes(G){\setminus}\set{s,t}$. If $\exists S_1\in \minlsepst{G}$ such that $A \subseteq C_s(G\sminus S_1)$ and $\exists S_2\in \minlsepst{G}$ such that $B\subseteq C_t(G\sminus S_2)$ then: 
	$$\minlsep{sA,tB}{G}=\set{T\in \minlsepst{G}: A\subseteq C_s(G\sminus T), B\subseteq C_t(G\sminus T)}$$
\end{corollary}
\begin{proof}
	Let $A=\set{a_1,\dots,a_k}$. We first show that $\minlsep{sA,t}{G}= \set{T\in \minlsepst{G}: A\subseteq C_s(G\sminus T)}$. Let $T\in \minlsepst{G}$ such that $A\subseteq C_s(G\sminus T)$. By Lemma~\ref{lem:simpAB}, we have that $T\in \minlsep{sA,t}{G}$. 
	
	By Theorem~\ref{thm:mainGood}, we have that 
\end{proof}
\def\lem23ComponentsGeneralGraph{
		Let $s,u,t\in \nodes(G)$ be distinct vertices, and let $T\in \minlsep{su,t}{G}$. If $u\notin C_s(G\sminus T)$, then $\cc(G\sminus T)$ contains three connected components $C_s,C_u$, and $C_t$ containing $s,u$, and $t$ respectively. Define:
		\begin{align*}
			Q_s \eqdef N_G(C_s) && \text{ and } && Q_u \eqdef N_G(C_u)
		\end{align*}
	 Then $T=Q_s\cup Q_u$,  and $Q_s\in \minlsepst{G}$ and $Q_u\in \minlsep{u,t}{G}$.
}
\begin{lemma}
	\label{lem:23ComponentsGeneralGraph}
	\lem23ComponentsGeneralGraph
\end{lemma}
\begin{proof}
	Since $T\in \minlsep{su,t}{G}$, then by definition, $u\notin C_t(G\sminus T)$. By the assumption of the lemma, $u\notin C_s(G\sminus T)$. Therefore, $G\sminus T$ contains at least three connected components, among them $C_s,C_u$, and $C_t$ containing vertices $s,u$, and $t$, respectively. By Lemma~\ref{lem:simpAB}, we have that $T\subseteq N_G(C_t)$, and by definition, we have that $N_G(C_t)\subseteq T$. Hence, $T=N_G(C_t)$. By definition, $Q_s \cup Q_u \subseteq T$. Suppose, by way of contradiction, that $Q_s\cup Q_u \subset T$, and let $v\in T\setminus (Q_s \cup Q_u)$. Since $v\notin (N_G(C_s)\cup N_G(C_u))$, then $T{\setminus}\set{v}$ remains an $su,t$-separator of $G$; a contradiction to the minimality of $T\in \minlsep{su,t}{G}$. Therefore, $T=Q_s\cup Q_u$. Since $Q_s=N_G(C_s)\subseteq T$, then $Q_s=N_G(C_s)\cap T=N_G(C_s)\cap N_G(C_t)$. Since $Q_s{\setminus}\set{v}$ connects $C_s$ and $C_t$, and hence $s$ and $t$ for every $v\in Q_s$, we get that $Q_s \in \minlsepst{G}$. Likewise, since $Q_u=N_G(C_u)\cap T=N_G(C_u)\cap N_G(C_t)$, we have that $Q_u\in \minlsep{u,t}{G}$.
\end{proof}

\def\lem23ComponentsGoodGraph{
	Let $G$ be a good graph, $s,u,t\in \nodes(G)$ distinct vertices, and $T\in \minlsep{su,t}{G}$. If $u\notin C_s(G\sminus T)$, then $\cc(G\sminus T)$ contains three connected components $C_s,C_u$, and $C_t$ containing $s,u$, and $t$ respectively. Define:
	\begin{align*}
		Q_s \eqdef N_G(C_s) && \text{ and } && Q_u \eqdef N_G(C_u)
	\end{align*}
	Then $T=Q_s\cup Q_u$, where $Q_s\in \minlsepst{G}\cap \minlsep{s,u}{G}$ and $Q_u\in \minlsep{u,t}{G}\cap \minlsep{s,u}{G}$. In addition, it holds that:
	\begin{align*}
		C_t(G\sminus Q_s)=\nodes(G){\setminus}(Q_s \cup C_s) && \text{ and } && C_t(G\sminus Q_u)=\nodes(G){\setminus}(Q_u \cup C_u)
	\end{align*}
}

\begin{reptheorem}{\ref{thm:mainGood}}
	\mainThm
\end{reptheorem}
\begin{proof}
	Let $S\in \minlsepst{G}$ such that $u \in C_s(G\sminus T)$. By definition, we have that $S\in \minlsep{su,t}{G}$. 
	
	Now, let $T\in \minlsep{su,t}{G}$. By Lemma~\ref{lem:23ComponentsGeneralGraph}, 
	exactly one of the following holds: 
	\begin{enumerate}
		\item $u\in C_s(G\sminus T)$ or
		\item $T=Q_s\cup Q_u$ where $Q_s\eqdef N_G(C_s)$, $Q_u\eqdef N_G(C_u)$, where $C_s,C_u \in \cc(G\sminus T)$ are the connected components of $G\sminus T$ that contain $s$ and $u$, respectively. In addition, $Q_s \in \minlsepst{G}$, and $Q_u \in \minlsep{u,t}{G}$. \eat{, and $C_t(G\sminus Q_s)=\nodes(G){\setminus}(C_s\cup Q_s)$, and $C_t(G\sminus Q_u)=\nodes(G){\setminus}(C_u\cup Q_u)$.}
	\end{enumerate}
	If $u\in C_s(G\sminus T)$, then by Lemma~\ref{lem:simpAB}, it holds that $T=N_G(C_t)\cap N_G(C_s(G\sminus T))$. By Lemma~\ref{lem:fullComponents}, we have that $T\in \minlsepst{G}$ where $u\in C_s(G\sminus T)$. 
	
	So, suppose that $u\notin C_s$. Since $N_G[s] \subseteq C_s \cup Q_s \subseteq C_s \cup T$, and since $u\notin T$, we get that $u\notin N_G(s)$. 
	We first claim that there exists a minimal $s,t$-separator $Q\in \minlsepst{G}$, such that $u \in C_t(G\sminus Q)$. Let $Q\in \minlsepst{G}$ such that $Q \subseteq N_G(s)$. By Lemma~\ref{lem:uniqueCloseVertex}, $Q$ is unique and can be found in polynomial time. If $u \in C_t(G\sminus Q)$, then we are done. Since $u\notin N_G(s)$, then $u\notin Q$. Since $G$ is good and $u\notin Q\cup C_t(G\sminus Q)$, then it must hold that $u \in C_s(G\sminus Q)$. Since $Q\subseteq N_G(s)$, then $C_s(G\sminus Q)\subseteq C_s(G\sminus S)$ for every $S\in \minlsepst{G}$. Therefore, $u\in C_s(G\sminus S)$ for every $S\in \minlsepst{G}$. But this means that $\minlsep{su,t}{G}=\minlsepst{G}$. Since $G$ is good then $G\sminus T$ contains exactly two connected components; a contradiction to the assumption that $G\sminus T$ contains at least three connected components.
	
	By the assumption of the lemma, we have that $\exists S\in \minlsepst{G}$ such that $u \in C_s(G\sminus S)$. Therefore, there is an $s,u$-path in $G$ that avoids $S\cup C_t(G\sminus S)$. Since $N_G[t] \subseteq S \cup C_t(G\sminus S)$, then $G$ has an $s,u$-path that avoids $N_G[t]$. By the previous, there exists an $S\in \minlsepst{G}$ such that $u \in C_t(G\sminus S)$. Therefore, there is a $u,t$-path in $G$ that avoids $S\cup C_s(G\sminus S)$. Since $N_G[s] \subseteq S \cup C_s(G\sminus S)$, then $G$ has a $u,t$-path that avoids $N_G[s]$.
	
	Since $Q_s\in \minlsepst{G}$ and $Q_u \in \minlsep{u,t}{G}$, then by Lemma~\ref{lem:goodGraphsNoContainment}, it holds that $Q_s\not\subseteq Q_u$. Let $v \in Q_s{\setminus}Q_u$. Since $v\in N_G(C_s)\cap N_G(C_t)$, then there is an $s,t$-path $P$ in $G$ where $\nodes(P)\subseteq \set{v} \cup C_t \cup C_s$. Since $v \in Q_s{\setminus}Q_u$, then $v\notin Q_u \cup C_u$. In particular, $v\notin N_G[u] \subseteq Q_u \cup C_u$.
	In other words, there is an $s,t$-path $P$ in $G$ that avoids $N_G[u]$. But then, $s,u$, and $t$ form an astroidal triple; a contradiction to the assumption that $G$ is AT-free.
	Therefore, it must hold that $u\in C_s(G\sminus T)$, in which case, $T\in \minlsepst{G}$.
\end{proof}

\begin{reptheorem}{\ref{thm:mainNoGood}}
	\mainThmNoGood
\end{reptheorem}
\begin{proof}
	Let $T\in \minlsepst{G}$ where $u\notin T\cup C_t(G\sminus T)$. Since $s,u \notin C_t(G\sminus T)$, then clearly $T$ is an $su,t$-separator. By Lemma~\ref{lem:fullComponents}, we have that $T=N_G(C_s(G\sminus T))=N_G(C_t(G\sminus T))$. By Lemma~\ref{lem:simpAB}, we have that $T\in \minlsep{su,t}{G}$.
	
	Now, let $T\in \minlsep{su,t}{G}$. By definition, $u \notin C_t(G\sminus T)$. If $T\in \minlsepst{G}$, then we are done.
	So, suppose that $T\notin \minlsepst{G}$. In that case, it must hold that $u\notin C_s(G\sminus T)$. Otherwise, by Lemma~\ref{lem:simpAB}, it must hold that $T=N_G(C_s(G\sminus T))= N_G(C_t(G\sminus T))$, in which case, by Lemma~\ref{lem;fullComponents}, we get that $T\in \minlsepst{G}$. 
	
	So, if $T\notin \minlsepst{G}$, then $u\notin C_s(G\sminus T)$. By Lemma~\ref{lem:23ComponentsGeneralGraph}, it holds that $T=Q_s\cup Q_u$ where $Q_s\eqdef N_G(C_s)$, $Q_u\eqdef N_G(C_u)$, where $C_s,C_u \in \cc(G\sminus T)$ are the connected components of $G\sminus T$ that contain $s$ and $u$, respectively. In addition, $Q_s \in \minlsepst{G}$, and $Q_u \in \minlsep{u,t}{G}$. By our assumption that $T\notin \minlsepst{G}$, then $Q_s\neq Q_u$. Therefore, either $Q_s\not\subseteq Q_u$ or $Q_u\not\subseteq Q_s$.
	If $Q_s \not\subseteq Q_u$, then there is a vertex $v\in Q_s{\setminus}Q_u$. Since $N_G[u] \subseteq C_u \cup Q_u$, we have that $v\notin N_G[u]$. Since $v\in Q_s\in \minlsepst{G}$, then by Lemma~\ref{lem:fullComponents} there is an $s,v$-path that resides entirely in $C_s\cup \set{v}$ 
\end{proof}
}

\section{Conclusion}
\label{sec:conclusion}
In this paper, we presented the first polynomial-time algorithm to find a connectivity-preserving, minimum-weight  A,B -separator in AT-free graphs, a general class encompassing interval, cocomparability, cobipartite, and trapezoid graphs. Our algorithm leverages key properties of minimal separators in AT-free graphs for an efficient solution. To our knowledge, this is also the first polynomial-time algorithm to find a connectivity-preserving  A,B -separator when  A  and  B  are unbounded in any non-trivial, infinite graph class. Additionally, our results on minimal separators in AT-free graphs may be of independent interest, offering insights applicable to other problems.
\eat{
In this paper, we presented the first polynomial-time algorithm that finds a connectivity-preserving, minimum-weight $A,B$-separator in AT-free graphs. AT-free graphs are a fairly general class, encompassing several important subclasses such as interval graphs, cocomparability graphs, cobipartite graphs, and trapezoid graphs. Our algorithm effectively leverages key properties of minimal separators in AT-free graphs, enabling an efficient solution to this problem. To the best of our knowledge, this is also the first algorithm that, in polynomial time, finds a connectivity-preserving $A,B$-separator when $A$ and $B$ are unbounded in size, in any non-trivial and infinite class of graphs. \eat{Despite the complexity of the problem, the algorithm remains straightforward to implement, as demonstrated by the pseudocode provided in the paper.}
In addition, our results on minimal separators in AT-free graphs may be of independent interest, offering new insights that could be applicable to other problems in this area.
}
\eat{
As part of future work, we plan to extend our investigation to other graph classes, focusing on characterizing the set of close minimal separators in these classes. This would not only broaden the applicability of our algorithm but also potentially uncover new structural properties that could lead to efficient solutions for related connectivity problems in more complex graph families.
}

\newpage 

\bibliography{main}

\begin{thebibliography}{10}

\bibitem{DBLP:conf/icalp/BentertDFGK24}
Matthias Bentert, P{\aa}l~Gr{\o}n{\aa}s Drange, Fedor~V. Fomin, Petr~A.
  Golovach, and Tuukka Korhonen.
\newblock Two-sets cut-uncut on planar graphs.
\newblock In Karl Bringmann, Martin Grohe, Gabriele Puppis, and Ola Svensson,
  editors, {\em 51st International Colloquium on Automata, Languages, and
  Programming, {ICALP} 2024, July 8-12, 2024, Tallinn, Estonia}, volume 297 of
  {\em LIPIcs}, pages 22:1--22:18. Schloss Dagstuhl - Leibniz-Zentrum f{\"{u}}r
  Informatik, 2024.
\newblock URL: \url{https://doi.org/10.4230/LIPIcs.ICALP.2024.22}, \href
  {https://doi.org/10.4230/LIPICS.ICALP.2024.22}
  {\path{doi:10.4230/LIPICS.ICALP.2024.22}}.

\bibitem{DBLP:journals/ijfcs/BerryBC00}
Anne Berry, Jean~Paul Bordat, and Olivier Cogis.
\newblock Generating all the minimal separators of a graph.
\newblock {\em Int. J. Found. Comput. Sci.}, 11(3):397--403, 2000.
\newblock \href {https://doi.org/10.1142/S0129054100000211}
  {\path{doi:10.1142/S0129054100000211}}.

\bibitem{BIENSTOCK199185}
Daniel Bienstock.
\newblock On the complexity of testing for odd holes and induced odd paths.
\newblock {\em Discret. Math.}, 90(1):85--92, 1991.
\newblock \href {https://doi.org/10.1016/0012-365X(91)90098-M}
  {\path{doi:10.1016/0012-365X(91)90098-M}}.

\bibitem{Chen2022}
Li~Chen, Rasmus Kyng, Yang~P. Liu, Richard Peng, Maximilian~Probst Gutenberg,
  and Sushant Sachdeva.
\newblock Maximum flow and minimum-cost flow in almost-linear time.
\newblock In {\em 2022 IEEE 63rd Annual Symposium on Foundations of Computer
  Science (FOCS)}, pages 612--623, 2022.
\newblock \href {https://doi.org/10.1109/FOCS54457.2022.00064}
  {\path{doi:10.1109/FOCS54457.2022.00064}}.

\bibitem{DBLP:journals/siamdm/CorneilOS97}
Derek~G. Corneil, Stephan Olariu, and Lorna Stewart.
\newblock Asteroidal triple-free graphs.
\newblock {\em {SIAM} J. Discret. Math.}, 10(3):399--430, 1997.
\newblock \href {https://doi.org/10.1137/S0895480193250125}
  {\path{doi:10.1137/S0895480193250125}}.

\bibitem{DBLP:journals/networks/CullenbineWN13}
Christopher Cullenbine, R.~Kevin Wood, and Alexandra~M. Newman.
\newblock Theoretical and computational advances for network diversion.
\newblock {\em Networks}, 62(3):225--242, 2013.
\newblock URL: \url{https://doi.org/10.1002/net.21514}, \href
  {https://doi.org/10.1002/NET.21514} {\path{doi:10.1002/NET.21514}}.

\bibitem{2b4537a3-d86e-332e-8839-fbf48c650a0f}
Norman~D. Curet.
\newblock The network diversion problem.
\newblock {\em Military Operations Research}, 6(2):35--44, 2001.
\newblock URL: \url{http://www.jstor.org/stable/43943674}.

\bibitem{cygan_solving_2014}
Marek Cygan, Marcin Pilipczuk, Michał Pilipczuk, and Jakub~Onufry Wojtaszczyk.
\newblock Solving the 2-{Disjoint} {connected} {subgraphs} {problem} {faster}
  than $2^n$.
\newblock {\em Algorithmica}, 70(2):195--207, October 2014.
\newblock \href {https://doi.org/10.1007/s00453-013-9796-x}
  {\path{doi:10.1007/s00453-013-9796-x}}.

\bibitem{DBLP:journals/corr/DuanJAX14}
Qi~Duan, Jafar~Haadi Jafarian, Ehab Al{-}Shaer, and Jinhui Xu.
\newblock On ddos attack related minimum cut problems.
\newblock {\em CoRR}, abs/1412.3359, 2014.
\newblock URL: \url{http://arxiv.org/abs/1412.3359}, \href
  {http://arxiv.org/abs/1412.3359} {\path{arXiv:1412.3359}}.

\bibitem{DBLP:journals/jcss/DuanX14}
Qi~Duan and Jinhui Xu.
\newblock On the connectivity preserving minimum cut problem.
\newblock {\em J. Comput. Syst. Sci.}, 80(4):837--848, 2014.
\newblock URL: \url{https://doi.org/10.1016/j.jcss.2014.01.003}, \href
  {https://doi.org/10.1016/J.JCSS.2014.01.003}
  {\path{doi:10.1016/J.JCSS.2014.01.003}}.

\bibitem{DBLP:books/daglib/0032640}
Shimon Even and Guy Even.
\newblock {\em Graph Algorithms, Second Edition}.
\newblock Cambridge University Press, 2012.

\bibitem{DBLP:journals/tcs/GolovachKP13}
Petr~A. Golovach, Dieter Kratsch, and Dani{\"{e}}l Paulusma.
\newblock Detecting induced minors in {AT-free} graphs.
\newblock {\em Theor. Comput. Sci.}, 482:20--32, 2013.
\newblock URL: \url{https://doi.org/10.1016/j.tcs.2013.02.029}, \href
  {https://doi.org/10.1016/J.TCS.2013.02.029}
  {\path{doi:10.1016/J.TCS.2013.02.029}}.

\bibitem{gray_removing_2012}
Chris Gray, Frank Kammer, Maarten Löffler, and Rodrigo~I. Silveira.
\newblock Removing local extrema from imprecise terrains.
\newblock {\em Computational Geometry}, 45(7):334--349, August 2012.
\newblock URL:
  \url{https://www.sciencedirect.com/science/article/pii/S0925772112000533},
  \href {https://doi.org/10.1016/j.comgeo.2012.02.002}
  {\path{doi:10.1016/j.comgeo.2012.02.002}}.

\bibitem{DBLP:conf/focs/KathuriaLS20}
Tarun Kathuria, Yang~P. Liu, and Aaron Sidford.
\newblock Unit capacity maxflow in almost {$O(m^{\frac{4}{3}})$} time.
\newblock In Sandy Irani, editor, {\em 61st {IEEE} Annual Symposium on
  Foundations of Computer Science, {FOCS} 2020, Durham, NC, USA, November
  16-19, 2020}, pages 119--130. {IEEE}, 2020.
\newblock \href {https://doi.org/10.1109/FOCS46700.2020.00020}
  {\path{doi:10.1109/FOCS46700.2020.00020}}.

\bibitem{DBLP:conf/ipco/KawarabayashiK08}
Ken{-}ichi Kawarabayashi and Yusuke Kobayashi.
\newblock The induced disjoint paths problem.
\newblock In {\em {IPCO}}, volume 5035 of {\em Lecture Notes in Computer
  Science}, pages 47--61. Springer, 2008.

\bibitem{kern_disjoint_2022}
Walter Kern, Barnaby Martin, Daniël Paulusma, Siani Smith, and Erik Jan~van
  Leeuwen.
\newblock Disjoint paths and connected subgraphs for {H}-free graphs.
\newblock {\em Theoretical Computer Science}, 898:59--68, 2022.
\newblock URL:
  \url{https://www.sciencedirect.com/science/article/pii/S0304397521006344},
  \href {https://doi.org/https://doi.org/10.1016/j.tcs.2021.10.019}
  {\path{doi:https://doi.org/10.1016/j.tcs.2021.10.019}}.

\bibitem{DBLP:conf/stoc/LiuS20}
Yang~P. Liu and Aaron Sidford.
\newblock Faster energy maximization for faster maximum flow.
\newblock In Konstantin Makarychev, Yury Makarychev, Madhur Tulsiani, Gautam
  Kamath, and Julia Chuzhoy, editors, {\em Proccedings of the 52nd Annual {ACM}
  {SIGACT} Symposium on Theory of Computing, {STOC} 2020, Chicago, IL, USA,
  June 22-26, 2020}, pages 803--814. {ACM}, 2020.
\newblock \href {https://doi.org/10.1145/3357713.3384247}
  {\path{doi:10.1145/3357713.3384247}}.

\bibitem{paulusma_partitioning_2011}
Daniël Paulusma and Johan M.~M. van Rooij.
\newblock On partitioning a graph into two connected subgraphs.
\newblock {\em Theoretical Computer Science}, 412(48):6761--6769, November
  2011.
\newblock URL:
  \url{https://www.sciencedirect.com/science/article/pii/S0304397511007675},
  \href {https://doi.org/10.1016/j.tcs.2011.09.001}
  {\path{doi:10.1016/j.tcs.2011.09.001}}.

\bibitem{DBLP:journals/dam/Takata10}
Ken Takata.
\newblock Space-optimal, backtracking algorithms to list the minimal vertex
  separators of a graph.
\newblock {\em Discret. Appl. Math.}, 158(15):1660--1667, 2010.
\newblock \href {https://doi.org/10.1016/j.dam.2010.05.013}
  {\path{doi:10.1016/j.dam.2010.05.013}}.

\bibitem{DBLP:conf/wg/TelleV13}
Jan~Arne Telle and Yngve Villanger.
\newblock Connecting terminals and 2-disjoint connected subgraphs.
\newblock In Andreas Brandst{\"{a}}dt, Klaus Jansen, and R{\"{u}}diger
  Reischuk, editors, {\em Graph-Theoretic Concepts in Computer Science - 39th
  International Workshop, {WG} 2013, L{\"{u}}beck, Germany, June 19-21, 2013,
  Revised Papers}, volume 8165 of {\em Lecture Notes in Computer Science},
  pages 418--428. Springer, 2013.
\newblock \href {https://doi.org/10.1007/978-3-642-45043-3\_36}
  {\path{doi:10.1007/978-3-642-45043-3\_36}}.

\bibitem{DBLP:conf/focs/Brand0PKLGSS23}
Jan van~den Brand, Li~Chen, Richard Peng, Rasmus Kyng, Yang~P. Liu,
  Maximilian~Probst Gutenberg, Sushant Sachdeva, and Aaron Sidford.
\newblock A deterministic almost-linear time algorithm for minimum-cost flow.
\newblock In {\em 64th {IEEE} Annual Symposium on Foundations of Computer
  Science, {FOCS} 2023, Santa Cruz, CA, USA, November 6-9, 2023}, pages
  503--514. {IEEE}, 2023.
\newblock \href {https://doi.org/10.1109/FOCS57990.2023.00037}
  {\path{doi:10.1109/FOCS57990.2023.00037}}.

\bibitem{DBLP:conf/stoc/BrandLLSS0W21}
Jan van~den Brand, Yin~Tat Lee, Yang~P. Liu, Thatchaphol Saranurak, Aaron
  Sidford, Zhao Song, and Di~Wang.
\newblock Minimum cost flows, {MDPs}, and {$\ell_1$-regression} in nearly
  linear time for dense instances.
\newblock In Samir Khuller and Virginia~Vassilevska Williams, editors, {\em
  {STOC} '21: 53rd Annual {ACM} {SIGACT} Symposium on Theory of Computing,
  Virtual Event, Italy, June 21-25, 2021}, pages 859--869. {ACM}, 2021.
\newblock \href {https://doi.org/10.1145/3406325.3451108}
  {\path{doi:10.1145/3406325.3451108}}.

\bibitem{van_t_hof_partitioning_2009}
Pim van~’t Hof, Daniël Paulusma, and Gerhard J.~Woeginger.
\newblock Partitioning graphs into connected parts.
\newblock {\em Theoretical Computer Science}, 410(47):4834--4843, November
  2009.
\newblock URL:
  \url{https://www.sciencedirect.com/science/article/pii/S0304397509004198},
  \href {https://doi.org/10.1016/j.tcs.2009.06.028}
  {\path{doi:10.1016/j.tcs.2009.06.028}}.

\end{thebibliography}

\newpage 

\appendix

\section{Minimal Separators Between Vertex-Sets - Proofs from Section~\ref{sec:minlsepsVertexSets}}
\label{sec:minsepsvertexsets}
\eat{
Let $A$ and $B$ be two disjoint, non-adjacent subsets of $\nodes(G)$. A vertex-set $X\subseteq \nodes(G)$ is called an \e{$A,B$-separator} if, in the graph $G\sminus X$, there is no path between $A$ and $B$. We say that $X$ is a minimal $A,B$-separator if no proper subset of $X$ has this property. We say that a subset $X\subseteq \nodes(G)\setminus AB$ is a minimum $A,B$-separator if, for every $A,B$-separator $S$, it holds that $|X|\leq |S|$. We denote by $\minsep_{A,B}(G)$ the set of minimum $A,B$-separators, and by $\kappa_{A,B}(G)$ the size of a minimum $A,B$-separator.
We denote by $\minlsep{A,B}{G}$  the set of minimal $A,B$-separators in $G$. 
\eat{
We say that an $A,B$-separator $X$ is 
\e{safe} if there are two distinct, connected components $C_A,C_B\in \cc_G(X)$, where $A\subseteq C_A$ and $B\subseteq C_B$.}
In this short section, we establish that finding a minimal or minimum $A,B$-separator, where $A,B\subseteq \nodes(G)$ are disjoint and non-adjacent, can be reduced to finding a minimal or minimum $a,b$-separator in $G'$, where $a\in A,b\in B$, and $G'$ is the graph that results from merging $A$ to $a$ and $B$ to $b$ (see~\eqref{eq:mergeDef}).
}

\begin{replemma}{\ref{lem:simpAB}}
	\simpABlemma
\end{replemma}
\begin{proof}
	If $S\in \minlsep{A,B}{G}$, then for every $w\in S$ it holds that $S{\setminus} \set{w}$ no longer separates $A$ from $B$. Hence, there is a path from some $a\in A$ to some $b\in B$ in $G\sminus (S{\setminus} \set{w})$. 
	Let $C_a$ and $C_b$ denote the connected components of $\cc(G\sminus S)$ containing $a\in A$ and $b\in B$, respectively. Since $C_a$ and $C_b$ are connected in $G\sminus (S{\setminus} \set{w})$, then $w\in N_G(C_a)\cap N_G(C_b)$.
	
	Suppose that for every $w\in S$, there exist two connected components $C_A,C_B\in \cc_G(S)$ such that $C_A\cap A\neq \emptyset$, $C_B\cap B\neq \emptyset$, and $w\in N_G(C_A)\cap N_G(C_B)$. If $S\notin \minlsep{A,B}{G}$, then $S{\setminus} \set{w}$ separates $A$ from $B$ for some $w\in S$. Since $w$ connects $C_A$ to $C_B$ in $G\sminus (S{\setminus} \set{w})$, no such $w\in S$ exists, and thus $S\in \minlsep{A,B}{G}$.
\end{proof}

Observe that Lemma~\ref{lem:simpAB} implies Lemma~\ref{lem:fullComponents}. By Lemma~\ref{lem:simpAB}, it holds that $S\in \minlsepst{G}$ if and only if $S$ is an $s,t$-separator and $S\subseteq N_G(C_s(G\sminus S))\cap N_G(C_t(G\sminus S))$. By definition, $N_G(C_s(G\sminus S))\subseteq S$ and $N_G(C_t(G\sminus S))\subseteq S$, and hence $S=N_G(C_s(G\sminus S))\cap N_G(C_t(G\sminus S))$, and $S=N_G(C_s(G\sminus S))=N_G(C_t(G\sminus S))$.

\begin{lemma}
	\label{lem:minlsepsupergraph}
	Let $G$ and $H$ be graphs where $\nodes(G)=\nodes(H)$ and $\edges(G)\subseteq \edges(H)$. Let $S\in \minlsep{A,B}{G}$. If $S$ is an $A,B$-separator in $H$, then $S\in \minlsep{A,B}{H}$.
\end{lemma}
\begin{proof}
	Since $S\in \minlsep{A,B}{G}$, then by Lemma~\ref{lem:simpAB}, for every $w\in S$ there exist $C_A^w(G\sminus S)\in \cc(G\sminus S)$ and $C_B^w(G\sminus S) \in \cc(G\sminus S)$ where $A\cap C_A^w(G\sminus S) \neq \emptyset$, $B\cap C_B^w(G\sminus S)\neq \emptyset$, and $w\in N_G(C_A^w(G\sminus S))\cap N_G(C_B^w(G\sminus S))$. Since $\edges(H)\supseteq \edges(G)$, and since $S$ is an $A,B$-separator in $H$, then $C_A^w(H\sminus S)\supseteq C_A^w(G\sminus S)$, and $C_B^w(H\sminus S)\supseteq C_B^w(G\sminus S)$. Therefore, $w\in N_H(C_A^w(H\sminus S))\cap N_H(C_B^w(H\sminus S)$) for every $w\in S$.
By Lemma~\ref{lem:simpAB}, it holds that $S\in \minlsep{A,B}{H}$.
\end{proof}

\begin{replemma}{\ref{lem:MinlsASep}}
	\lemMinlsASep
\end{replemma}
\begin{proof}
	Let $T \in \minlsep{sA,t}{G}$, and let $C_1,\dots,C_k$ denote the connected components of $\cc(G\sminus T)$ containing vertices from $sA$, and let $C_t \in \cc(G\sminus T)$ denote the connected component of $\cc(G\sminus T)$ that contains $t$. Assume wlog that $s\in C_1$. By definition, the edges added to $G$ to form $H$ are between $C_1$ and $C_1\cdots C_k \cup T$. Therefore, $T$ separates $sA$ from $t$ in $H$, and in particular, $T$ separates $s$ from $t$ in $H$. Since $\edges(H) \supseteq \edges(G)$, then by Lemma~\ref{lem:minlsepsupergraph}, if $T \in \minlsep{sA,t}{G}$ and $T$ is an $sA,t$-separator in $H$, then $T\in \minlsep{sA,t}{H}$. 
	Since, by construction, $A\subseteq N_{H}[s]{\setminus} T$ then $H\sminus T$ contains two connected components $C_{sA}$ that contains $sA$, and $C_t$ that contains $t$. By Lemma~\ref{lem:simpAB}, we have that $T = N_{H}(C_{sA})\cap N_{H}(C_t)$. By Lemma~\ref{lem:fullComponents}, we have that $T\in \minlsepst{H}$.
	
	Let $T\in \minlsepst{H}$. We first show that $T$ separates $sA$ from $t$ in $G$; if not, there is a path from $x\in sA$ to $t$ in $G\sminus T$. By definition of $H$, $x\in N_{H}[s]{\setminus}T$. This means that there is a path from $s$ to $t$ (via $x$) in $H \sminus T$, which is a contradiction. If $T \notin \minlsep{sA,t}{G}$, then there is a $T' \in \minlsep{sA,t}{G}$ where $T' \subset T$. By the previous direction,  $T'\in \minlsep{sA,t}{G} \subseteq \minlsepst{H}$, and hence $T'\in \minlsepst{H}$, contradicting the minimality of $T\in \minlsepst{H}$.
\end{proof}

\section{Proofs from Section~\ref{sec:MaintainingMinlseps}} %MinSafeSep_BasicProofs_stSeps
\label{sec:proofsFromPrelims}

\begin{replemma}{\ref{lem:contract}}
	\lemContract
\end{replemma} 
\begin{proof}
	We prove for the case where $A=\set{u}$ and $u\in N_G(s)$. The claim then follows by a simple inductive argument on $|A|$, the cardinality of $A$, by noticing that since $G[sA]$ is connected, then $A\cap N_G(s)\neq \emptyset$.
	
	Let $S\in \minlsepst{G}$ where $u\in C_s(G\sminus S)$. This means that $N_G[u] \subseteq C_s(G\sminus S) \cup S$. By definition of contraction, $\edges(H){\setminus}\edges(G)\subseteq \set{(s,v):v\in N_G[u]} \subseteq C_s(G\sminus S)\cup S$. In other words, every edge in $\edges(H){\setminus}\edges(G)$ is between $s$ and a vertex in $S\cup C_s(G\sminus S)$. Therefore, $S$ is an $s,t$-separator in $H$. For the same reason, we have that $C_t(H\sminus S)=C_t(G\sminus S)$, and in particular that $G[C_t(G\sminus S)\cup S]=H[C_t(H\sminus S)\cup S]$. Therefore, $S=N_G(C_t(G\sminus S))=N_H(C_t(H\sminus S))$.
	We claim that $S\subseteq N_H(C_s(H\sminus S))$. Since $S\in \minlsepst{G}$, then by Lemma~\ref{lem:fullComponents}, $S=N_G(C_s(G\sminus S))$. Take any $v\in S$, and let $x\in N_G(v)\cap C_s(G\sminus S)$. If $x\in C_s(H\sminus S)$, then $x\in N_H(v)\cap C_s(H\sminus S)$, and hence $v\in N_H(C_s(H\sminus S))$. Otherwise, $x=u$, and by definition of contraction, we get that $s\in N_H(v)$. Therefore, $S\subseteq N_H(C_s(H\sminus S))$. So, we have that $S$ is an $s,t$-separator of $H$ where $S\subseteq N_H(C_t(H\sminus S))\cap N_H(C_s(H\sminus S))$. By Lemma~\ref{lem:fullComponents}, $S\in \minlsepst{H}$.
	
	Now, let $S\in \minlsepst{H}$. 
	Since $u\notin \nodes(H)$, then $u\notin S$. Let $C_s,C_t \in \cc(H\sminus S)$ be the full connected components associated with $S$ in $H$ that contain $s$ and $t$ respectively. That is, $N_{H}(C_s)=N_{H}(C_t)=S$. 
	We claim that $C_s(G\sminus S)=C_s \cup \set{u}$.
	First, we show that $G[C_s \cup \set{u}]$ is connected. To see why, take any vertex $x\in C_s$. If there is no $s,x$-path in $G[C_s\cup \set{u}]$, it means that every $s,x$-path in $C_s(H\sminus S)$ uses an edge $(s,w)\in \edges(H){\setminus}\edges(G)\subseteq \set{(s,v):v\in N_G(u)}$. But then, $G[C_s\cup \set{u}]$ contains the subpath $s-u-w$, which means that $s$ and $x$ are connected in $G[C_s\cup \set{u}]$. Since all vertices in $C_s\cup \set{u}$ are connected to $s$ in $G$, and since $(C_s\cup \set{u})\cap S=\emptyset$, we get that $C_s \cup \set{u}$ is connected in $G\sminus S$. Therefore, $C_s\cup \set{u} \subseteq C_s(G\sminus S)$. For the other direction, take $x\in C_s(G\sminus S)$, and let $P_{s,x}$ be an $s,x$ path in $C_s(G\sminus S)$. If $u\notin \nodes(P_{s,x})$, then by definition of contraction, $P_{s,x}$ is an $s,x$ path in $H$ that avoids $S$, and hence $x\in C_s(H\sminus S)$. If $u\in \nodes(P_{s,x})$, then let $u'$ be the vertex that immediately follows $u$ on the path $P_{s,x}$. By definition of contraction, $(s,u')\in \edges(H)$. So, we have an $s,x$-path in $H$, via $u'$, that avoids $S\cup \set{u}$, and hence $x\in C_s(H\sminus S)$. Overall, we showed that $C_s(G\sminus S)=C_s(H\sminus S)\cup \set{u}$. 
	
	We claim that $S=N_G(C_s \cup \set{u})$. Since $C_s \cup \set{u}$ is a connected component of $G\sminus S$, then $N_G(C_s \cup \set{u})\subseteq S$. Now, take $v\in S$. Then $v\in N_{H}(x)$ for some vertex $x\in C_s$. If $v\in N_G(x)$, then $v\in N_G(C_s)$, and we are done. Otherwise, $x=s$ because all edges in $\edges(H){\setminus}\edges(G)$ have an endpoint in $s \in C_s$. Since $(s,v) \in \edges(H){\setminus}\edges(G)$, then $v\in N_G(u)$. Therefore, $v\in N_G(C_s\cup \set{u})$. So, we get that $S=N_G(C_s\cup \set{u})$. Since every edge in $\edges(H){\setminus}\edges(G)$ is between $s$ and a vertex in $C_s(H\sminus S)\cup \set{u}=C_s(G\sminus S)$, we have that $C_t=C_t(G\sminus S)$, and hence $S=N_G(C_t(G\sminus S))$. By Lemma~\ref{lem:fullComponents}, $S\in \minlsepst{G}$ where $C_s(G\sminus S)=C_s\cup \set{u}$. 
\end{proof}

\begin{replemma}{\ref{lem:inclusionCsCt}}
	\inclusionCsCt
\end{replemma}
\begin{proof}
	If $C_s(G\sminus S)\subseteq C_s(G\sminus T)$ then $N_G(C_s(G\sminus S))\subseteq C_s(G\sminus T)\cup N_G(C_s(G\sminus T))$. Since $S,T\in \minlsepst{G}$, then by Lemma~\ref{lem:fullComponents}, it holds that $S=N_G(C_s(G\sminus S))$ and $T=N_G(C_s(G\sminus T))$. Therefore, $S \subseteq C_s(G\sminus T)\cup T$. Hence, $C_s(G\sminus S)\subseteq C_s(G\sminus T) \Longrightarrow S \subseteq C_s(G\sminus T)\cup T$. 
	If $S \subseteq C_s(G\sminus T)\cup T$, then by definition, $S\cap C_t(G\sminus T)=\emptyset$. Therefore, $C_t(G\sminus T)$ is connected in $G\sminus S$. By definition, this means that $C_t(G\sminus T)\subseteq C_t(G\sminus S)$. Therefore, $N_G(C_t(G\sminus T))\subseteq C_t(G\sminus S) \cup N_G(C_t(G\sminus S))$. Since $S,T\in \minlsepst{G}$, then by Lemma~\ref{lem:fullComponents}, it holds that $S=N_G(C_t(G\sminus S))$ and $T=N_G(C_t(G\sminus T))$. Consequently, $T\subseteq S\cup C_t(G\sminus S)$. So, we have shown that
	$C_s(G\sminus S)\subseteq C_s(G\sminus T)  \Longrightarrow S\subseteq T\cup C_s(G\sminus T) \Longrightarrow T\subseteq S\cup C_t(G\sminus S)$.
	If $T\subseteq S\cup C_t(G\sminus S)$, then by definition, $T\cap C_s(G\sminus S)=\emptyset$. Therefore, $C_s(G\sminus S)$ is connected in $G\sminus T$. Consequently, $C_s(G\sminus S)\subseteq C_s(G\sminus T)$. \qed
\end{proof}

\section{Missing Proofs from Section~\ref{sec:AlgOverview}}
\label{sec:WGAppendixAlgOverview}
\begin{replemma}{\ref{lem:FsaExists}}
	\FsaExists
\end{replemma}
\begin{proof}
	By induction on $|C_s(G\sminus S)|$. If $|C_s(G\sminus S)|=|sA|$, then $C_s(G\sminus S)=sA$. By definition, $S\in \F_{sA}(G)$. Suppose the claim holds for the case where $|C_s(G\sminus S)|\leq k$ for some $k\geq |sA|$, we prove for the case where $|C_s(G\sminus S)|= k+1$. If $S\in \F_{sA}(G)$, then we are done. Otherwise, there exists a $S'\in \minlsepst{G}{\setminus}\set{S}$ where $A\subseteq C_s(G\sminus S')$ and $C_s(G\sminus S')\subseteq C_s(G\sminus S)$. Since $S'\neq S$, then $C_s(G\sminus S')\subset C_s(G\sminus S)$. Since $|C_s(G\sminus S')|< |C_s(G\sminus S)|=k+1$, then by the induction hypothesis, there exists a $T\in \F_{sA}(G)$ where $C_s(G\sminus T)\subseteq C_s(G\sminus S')\subset C_s(G\sminus S)$. \qed
\end{proof}

\begin{replemma}{\ref{lem:minSafeSepOverview1}}
	\lemminSafeSepOverview1
\end{replemma}
\begin{proof}
	Let $s\in A$, $t\in B$.	If $S\in \minlsepst{G}$ where $A\subseteq C_s(G\sminus S)$ and $B\subseteq C_t(G\sminus S)$ then clearly $S$ is a safe $A,B$-separator. 
	By Lemma~\ref{lem:fullComponents}, it holds that $S=N_G(C_s(G\sminus S))\cap N_G(C_t(G\sminus S))$.
	By Lemma~\ref{lem:simpAB}, it holds that $S$ is a minimal, safe $A,B$-separator.
	
	Now, let $S$ be a minimal, safe $A,B$-separator, where $C_A,C_B\in \cc(G\sminus S)$ contain $A$ and $B$ respectively. 
	By Lemma~\ref{lem:simpAB}, it holds that $S=N_G(C_A)\cap N_G(C_B)$. 
	By Lemma~\ref{lem:fullComponents}, $S\in \minlsepst{G}$ for every pair of vertices $s\in A$ and $t\in B$, where $C_s(G\sminus S)=C_A$ and $C_t(G\sminus S)=C_B$.\qed
\end{proof}
\section{Proofs from Section~\ref{sec:charMinlSeps}}
\label{sec:AppendixProofsEssentialMiinlstseps}
\begin{replemma}{\ref{lem:technicalLemmaCloseToA}}
	\technicalLemmaCloseToA
\end{replemma}
\begin{proof}
	Let $T\in \minlsepst{G}$ such that $A\subseteq C_s(G\sminus T)$. Since $T_s\subseteq N_G(s)$, then $T_s \subseteq T\cup C_s(G\sminus T)$. If $T_s \subseteq T$, then since $T,T_s \in \minlsepst{G}$, then $T_s=T$. But then, $A\not\subseteq C_s(G\sminus T)$; a contradiction. Therefore, for every $T\in \minlsepst{G}$ where $A\subseteq C_s(G\sminus T)$, it holds that $T_s \cap C_s(G\sminus T)\neq \emptyset$.
	
	For every $v\in T_s$, we have that $G[svA]$ is connected. Indeed, $T_s \subseteq N_G(s)$, and hence $(s,v)\in \edges(G)$. By the assumption of the lemma $T_s \subseteq N_G(a)$ for every $a\in A$. Therefore, $v\in \bigcap_{a\in A}N_G(a)$. By Corollary~\ref{corr:singleCloseSet}, there exists a unique minimal $s,t$-separator $S_v\in \minlsepst{G}$ that is close to $svA$. Let $T_s=\set{v_1,\dots,v_\ell}$, and let $S_i\in \minlsepst{G}$ denote the unique minimal $s,t$-separator that is close to $sv_iA$. We now show that for every $T\in \minlsepst{G}$ where $A\subseteq C_s(G\sminus T)$ it holds that $C_s(G\sminus S_i)\subseteq C_s(G\sminus T)$ for some $i\in \set{1,2,\dots,\ell}$. We have shown that $T_s\cap C_s(G\sminus T)\neq \emptyset$. Let $v_i \in C_s(G\sminus T)$. Therefore, $sv_iA \subseteq C_s(G\sminus T)$. Since $S_i\in \minlsepst{G}$ is the unique minimal $s,t$-separator that is close to $sv_iA$, then $C_s(G\sminus S_i) \subseteq C_s(G\sminus T)$. Since the $S_i$s are not necessarily distinct, there are at most $|T_s|$ minimal $s,t$-separators that are close to $sA$. Specifically, these are $\set{S\in \minlsepst{G}: S\subseteq N_G(sv_iA), v_i\in T_s}$. By Corollary~\ref{corr:singleCloseSet}, every $S\in \minlsepst{G}$ where $S\subseteq N_G(sv_iA)$ and $v_i\in T_s \subseteq N_G(s)$ is unique and can be found in time $O(m)$.\qed
\end{proof}

\begin{replemma}{\ref{lem:belongtoCs}}
	\belongtoCsLem
\end{replemma}
\begin{proof}
	Since $u\notin C_s(G\sminus S)$, then every path from $u$ to $s$ passes through a vertex in $S$. Now, let $T\in \minlsepst{G}{\setminus} \set{S}$. Since $S\subseteq N_G(t)$, then $S\subseteq T\cup C_t(G\sminus T)$. Therefore, every path from a vertex in $S$ to $s$ passes through a vertex in $T$. Consequently, every path from $u$ to $s$, which passes through a vertex in $S$, must also pass through a vertex in $T$. Therefore, $u\notin C_s(G\sminus T)$.
\end{proof}

\begin{replemma}{\ref{lem:lemNotCsOrCt}}
	\lemNotCsOrCt
\end{replemma}
\begin{proof}
	Let $v\in T_D$, and let $S\in \minlsepst{G}$. If $v \notin C_s(G\sminus S)$, then since $v\in T_D\subseteq T_S\subseteq N_G(s)$, then $v\in S$. 
	Therefore, $$\minlsepst{G}{\setminus}\left(\bigcup_{v\in T_D}\set{S\in \minlsepst{G}: v\in C_s(G\sminus S)}\right)=\set{S\in \minlsepst{G}: T_D\subseteq S}.$$
	To prove the claim of the lemma, we show that the complement sets are equal. 
	\begin{equation}
		\label{eq:NotCsOrCt1}
		\set{S\in \minlsepst{G}: A\not\subseteq C_s(G\sminus S)}=\set{S\in \minlsepst{G}: T_D \subseteq S}
	\end{equation}
	Let $S\in \minlsepst{G}$. Since $T_D$ is an $s,A$-separator for every $A\subseteq D$, then if $T_D \subseteq S$, then $A\not\subseteq C_s(G\sminus S)$. 
	
	For containment in the other direction, take $S\in \minlsepst{G}$ where $A\not\subseteq C_s(G\sminus S)$. Let $a\in A\subseteq D$ such that $a\notin C_s(G\sminus S)$. Since $a\in D$ then, by Lemma~\ref{lem:fullComponents}, $T_D\in \minlsep{s,a}{G}$ where $T_D \subseteq T_s\subseteq N_G(s)$. By Lemma~\ref{lem:uniqueCloseVertexSet}, $T_D$ is the unique minimal $s,a$-separator that is close to $s$ where $D=C_a(G\sminus T_D)$ (see illustration in Figure~\ref{fig:illustrationlemNotCsOrCt}). Since $C_s(G\sminus T_D)\subseteq C_s(G\sminus T)$ for every $T\in \minlsep{s,a}{G}$, then by Lemma~\ref{lem:inclusionCsCt}, it holds that $T\subseteq T_D \cup D$. 
	By Lemma~\ref{lem:uniqueCloseVertexSet}, $T_s$ is the unique minimal $s,t$-separator that is close to $s$. Therefore, $S\subseteq T_s\cup C_t(G\sminus T_s)$ for every $S\in \minlsepst{G}$. 
	
	If $a\notin C_s(G\sminus S)$, then $S\supseteq T$ for some $T\in \minlsep{s,a}{G}$. Since $T\subseteq T_D\cup D$, then we can express $T=T_1\cup T_2$ where $T_1 \eqdef T\cap T_D$ and $T_2 \eqdef T \cap D$. Likewise, since $S\subseteq T_s \cup C_t(G\sminus T_s)$, then we can write $S=S_1 \cup S_2$, where $S_1 \eqdef S\cap T_s$ and $S_2\eqdef S\cap C_t(G\sminus T_s)$. Since $S\supseteq T$, then $S_1 \cup S_2 \supseteq T_1 \cup T_2$. Since $T_2 \subseteq D$, then $T_2\cap S\subseteq D\cap (T_s\cup C_t(G\sminus T_s))=\emptyset$. Therefore, if $T_1\cup T_2\subseteq S$, then $T_2=\emptyset$. This means that $T = T_1 \subseteq T_D$. Since $T,T_D\in \minlsep{s,a}{G}$, then $T=T_D$. Therefore, if $T\subseteq S$ for some $T\in \minlsep{s,a}{G}$, then $T_D\subseteq S$.  
	So, we showed that if $S\in \minlsepst{G}$ where $a\notin C_s(G\sminus S)$ for some $a\in A$, then $T_D \subseteq S$.\qed
\end{proof}

\begin{replemma}{\ref{lem:heirarchical}}
	\lemheirarchical
\end{replemma}
\begin{proof}
	If $C_1=C_2$ the claim clearly holds, so we assume the two components are distinct.
	By definition, $N_G(C_1)\cup N_G(C_2) \subseteq T_s$. By Lemma~\ref{lem:fullComponents}, it holds that $T_s=N_G(C_t(G\sminus T_s))$. Therefore, if $C_1=C_t(G\sminus T_s)$ or $C_2=C_t(G\sminus T_s)$, then the claim clearly holds. So, we assume that $C_1,C_2\in \cc(G\sminus T_s){\setminus}\set{C_s(G\sminus T_s), C_t(G\sminus T_s)}$. 
	
	Suppose, by way of contradiction, that $N_G(C_1)\not\subseteq N_G(C_2)$ and $N_G(C_2)\not\subseteq N_G(C_1)$. Let $v_1 \in N_G(C_1){\setminus}N_G(C_2)$ and $v_2 \in N_G(C_2){\setminus}N_G(C_1)$. Also, let $u_1\in C_1$ and $u_2 \in C_2$. By our assumption, $v_1\not\in C_2 \cup N_G(C_2)$, and hence $v_1 \notin N_G[u_2]$. Likewise, $v_2\not\in C_1 \cup N_G(C_1)$, and hence $v_2 \notin N_G[u_1]$ (see illustration in Figure~\ref{fig:illustrationlemheirarchical}). Since $v_1,v_2\in T_s$, then by Lemma~\ref{lem:fullComponents}, it holds that $v_1,v_2 \in N_G(C_t(G\sminus T_s))$. Therefore, there is a $u_1,t$-path $P_{u_1,t}$ via $v_1$ such that $\nodes(P_{u_1,t})\subseteq C_1 \cup \set{v_1} \cup C_t(G\sminus T_s)$, and hence $\nodes(P_{u_1,t})\cap N_G[u_2]=\emptyset$. Likewise, there is a $u_2,t$-path $P_{u_2,t}$ via $v_2$ such that $\nodes(P_{u_2,t})\subseteq C_2 \cup \set{v_2} \cup C_t(G\sminus T_s)$, and hence $\nodes(P_{u_2,t})\cap N_G[u_1]=\emptyset$ (see illustration in Figure~\ref{fig:illustrationlemheirarchical}). Finally, since $v_1,v_2\in T_s\subseteq N_G(s)$, then there is a $u_1,u_2$-path contained entirely in $C_1\cup C_2 \cup \set{s,v_1,v_2}$. Since, by our assumption, $T_s\cap N_G[t]=\emptyset$, then this path, denoted $P_{u_1,u_2}$ (see Figure~\ref{fig:illustrationlemheirarchical}) avoids $N_G[t]$. But then, $u_1,u_2,t$ form an asteroidal triple in $G$, a contradiction (see Figure~\ref{fig:illustrationlemheirarchical2}). \qed
\end{proof}
\begin{figure}[H]
	\centering
	\begin{minipage}{0.58\textwidth}
		\centering
		\resizebox{0.7\textwidth}{!}{
			\begin{tikzpicture}
				% Draw a vertical elongated thin ellipse (T_s)
				\draw[thick] (0,0) ellipse (0.3cm and 1.3cm);
				
				% Add five vertices (black dots) inside the ellipse, vertically spaced
				\foreach \y in {-1,-0.5,0,0.5,1} {
					\fill[black] (0, \y) circle (2pt);
				}
				
				% Add vertex s to the left of the ellipse
				\node[fill=black, circle, inner sep=2pt, label=left:$s$] (s) at (-1,0) {};
				
				% Add edges from s to each dot in the ellipse
				\foreach \y in {-1,-0.5,0,0.5,1} {
					\draw (s) -- (0, \y);
				}
				
				\draw[thick,green](s)--(0,-1);
				\draw[thick,green](s)--(0,0.5);
				
				% Label for the ellipse (T_s)
				\node at (0.05,-1.6) {$T_s$};
				
				% Draw the new ellipse with new center (3, 0)
				\draw[thick] (3, 0) ellipse (1.5cm and 1cm);
				
				% Label for the new ellipse C_s(G - T_s)
				\node at (3, -1.5) {$C_t(G \sminus T_s)$};
				
				% Coordinates on the left surface of the ellipse C_t(G - T_s)
				\coordinate (p1) at (1.85, 0.64);
				\coordinate (p2) at (1.59, 0.34);
				\coordinate (p3) at (1.50, 0);
				\coordinate (p4) at (1.59, -0.34);
				\coordinate (p5) at (1.85, -0.64);
				
				% Add edges from each vertex in T_s to the corresponding points on the left surface of C_t(G - T_s)
				\draw (0, 1) -- (p1);
				\draw[thick, blue] (0, 0.5) -- (p2);
				\draw (0, 0) -- (p3);
				\draw (0, -0.5) -- (p4);
				\draw[thick, red] (0, -1) -- (p5);
				
				% Label the bottom vertex as v_1
				\node at (0, -0.81) {\footnotesize{$v_1$}};
				
				% Label the fourth vertex as v_2
				\node at (0, 0.66) {\footnotesize{$v_2$}};
				
				% Draw a new ellipse for C_1
				\draw[thick] (-1,-1.8) ellipse (0.5cm and 0.3cm);
				\node at (-1.7,-1.8) {$C_1$};
				
				% Draw a new ellipse for C_2
				\draw[thick] (1.2,-2) ellipse (0.4cm and 0.5cm);
				\node at (0.78,-2.5) {$C_2$};
				
				% Coordinates on the top-right face of C_1, spaced farther apart
				\coordinate (q1) at (-0.51, -1.74);
				\coordinate (q2) at (-0.56, -1.66);
				\coordinate (q3) at (-0.69, -1.57);
				
				% Add edges from the bottom three vertices in T_s to the coordinates on C_1
				\draw[thick, bend right,red] (0, -1) to (q1);
				\draw[bend right] (0, -0.5) to (q2);
				\draw[bend right] (0, 0) to (q3);
				
				% Coordinates for x1 and x2 on the top-right surface of C_2
				\coordinate (x1) at (1.48,-1.64);
				\coordinate (x2) at (1.23,-1.50);
				
				% Draw curved lines from (0, -0.5) to x2 and from (0, 0.5) to x1
				\draw[bend left] (0, -0.5) to (x2);
				\draw[thick, blue,bend left] (0, 0.5) to (x1);

				\coordinate (u1) at (-1,-1.75);
				\coordinate (u2) at (1.2,-1.9);
				
				% Add vertex s to the left of the ellipse
				\node[fill=black, circle, inner sep=0.6pt] at (u1) {};
				\node at (-1.17,-1.75) {\footnotesize{$u_1$}};
				
				\node[fill=black, circle, inner sep=0.6pt] at (u2) {};
				\node at (1.2,-2.05) {\footnotesize{$u_2$}};
				
				% Rightmost point of the ellipse C_t(G - T_s), labeled t
				\node[fill=black, circle, inner sep=2pt, label=right:$t$] at (4.5,0) {};
				
				% Draw a red squiggly line from u1 to q1
				\draw[thick, red, decorate, decoration={snake, amplitude=0.5mm}] (u1) -- (q1);
				
				% Draw a red squiggly line from u1 to q1
				\draw[thick, blue, decorate, decoration={snake, amplitude=0.5mm}] (u2) -- (x1);
				
				% Draw a red squiggly line from u1 to q1
				\draw[thick, red, decorate, decoration={snake, amplitude=0.5mm}] (p5) -- (4.5,0);
				
				% Draw a red squiggly line from u1 to q1
				\draw[thick, blue, decorate, decoration={snake, amplitude=0.5mm}] (p2) -- (4.5,0);
				
				% Label P_{u_1,t} below the red squiggly line inside C_t(G - T_s)
				\node[red, rotate=22] at (3.2, -0.55) {\footnotesize{$P_{u_1,t}$}};
				\node[blue, rotate=-5] at (3.4, 0.37) {\footnotesize{$P_{u_2,t}$}};
				
				\node[green] at (-0.95, -0.5) {\footnotesize{$P_{u_1,u_2}$}};
				
			\end{tikzpicture}
		}
		\caption{Illustration--Lemma~\ref{lem:heirarchical}.\label{fig:illustrationlemheirarchical}}
	\end{minipage}
	\hfill
	\begin{minipage}{0.4\textwidth}
		\resizebox{0.77\textwidth}{!}{
			\begin{tikzpicture}
				% Coordinates for the triangle vertices
				\coordinate (u1) at (0, 0);
				\coordinate (u2) at (4, 0);
				\coordinate (t) at (2, 3);
				
				% Draw squiggly green line between u1 and u2, labeled P_{u_1,u_2}
				\draw[thick, green, decorate, decoration={snake, amplitude=0.5mm}] (u1) -- (u2);
				\node[green] at (2,-0.3) {\footnotesize{$P_{u_1,u_2}$}};
				
				% Draw squiggly red line between u1 and t, labeled P_{u_1,t}
				\draw[thick, red, decorate, decoration={snake, amplitude=0.5mm}] (u1) -- (t);
				\node[blue, rotate=-45] at (3.4,1.5) {\footnotesize{$P_{u_2,t}$}};
				
				% Draw squiggly blue line between u2 and t, labeled P_{u_2,t}
				\draw[thick, blue, decorate, decoration={snake, amplitude=0.5mm}] (u2) -- (t);
				\node[red, rotate=45] at (0.58,1.5) {\footnotesize{$P_{u_1,t}$}};
				
				% Draw vertices u_1, u_2, t
				\fill[black] (u1) circle (2pt) node[left] {$u_1$};
				\fill[black] (u2) circle (2pt) node[right] {$u_2$};
				\fill[black] (t) circle (2pt) node[above] {$t$};
			\end{tikzpicture}
		}
		\caption{Illustration--Lemma~\ref{lem:heirarchical}.\label{fig:illustrationlemheirarchical2}}
	\end{minipage}
\end{figure}

\begin{repcorollary}{\ref{corr:heirarchical2}}
	\corrheirarchical2
\end{repcorollary}
\begin{proof}
	Since $\emptyset \subset A \subseteq \nodes(G){\setminus}(C_s(G\sminus T_s)\cup T_s\cup C_t(G\sminus T_s))$, then $\ell \geq 1$.
	Assume wlog that $|N_G(C_1)|\leq |N_G(C_2)|\leq \cdots \leq |N_G(C_\ell)|$. Since $G$ is AT-free, and $T_s\cap N_G[t]=\emptyset$, then by Lemma~\ref{lem:heirarchical}, we have that $N_G(C_1)\subseteq N_G(C_2)\subseteq \cdots \subseteq N_G(C_\ell)$. Therefore, $S^*=N_G(C_1)$. 
	
	Let $S\in \minlsepst{G}$ where $v\in C_s(G\sminus S)$ for some $v\in S^*$. Since $v\in S^*\eqdef \bigcap_{i=1}^\ell N_G(C_i)$, then by Lemma~\ref{lem:lemNotCsOrCt}, we have that $S\in \minlsepst{G}$ where $C_i \subseteq C_s(G\sminus S)$ for every $i\in \set{1,2,\dots,\ell}$. Since $A\subseteq \bigcup_{i=1}^\ell C_i$, then $A\subseteq C_s(G\sminus S)$.

	Suppose, by way of contradiction, that there exists an $S\in \minlsepst{G}$ where $A\subseteq C_s(G\sminus S)$ and $C_s(G\sminus S)\cap S^*=C_s(G\sminus S)\cap N_G(C_1)=\emptyset$. Since $S^* \subseteq T_s \subseteq N_G(s)$, it means that $S^* \subseteq S$. Since $S^*$ is, by definition, an $s,C_1$-separator and $S^*\subseteq S$, then $S$ is an $s,C_1$-separator. Therefore, $C_1\cap C_s(G\sminus S)=\emptyset$. In particular, $(C_1\cap A)\cap C_s(G\sminus S)=\emptyset$. Since $C_1 \cap A\neq \emptyset$, then $A\not\subseteq C_s(G\sminus S)$; a contradiction.
\end{proof}
\section{Missing Details from Section~\ref{sec:MainThm}: Pseudocode and Runtime Analysis of the $\algname{CloseTo}$ Procedure}
\label{sec:CloseToPseudocode}
In this Section, we describe the algorithm $\algname{CloseTo}$ (Figure~\ref{fig:closeTo}) that receives as input a weighted, AT-free graph $G$, two distinct vertices $s,t\in \nodes(G)$, and a subset $A\subseteq \nodes(G)$, and returns the set $\F_{sA}(G)$ of minimal $s,t$-separators that are close to $sA$ (Definition~\ref{def:closeToA}).

Before describing algorithm $\algname{CloseTo}$, we complete the proof of Claim 5, as part of the proof of Theorem~\ref{thm:mainThmATFree}, detailed in Section~\ref{sec:MainThm}.

\subsection{Completing the proof of Theorem~\ref{thm:mainThmATFree}: Claim 5.}
\noindent {\bf Claim 5}: $\F_{sD}(M)=\F_{sA}(G)$.\newline
\noindent{\bf Proof}:
Let $S\in \F_{sA}(G)$. By definition, $\F_{sA}(G)\subseteq \set{S\in \minlsepst{G}: A\subseteq C_s(G\sminus S)}$. From eq.~\eqref{eq:MainThm1} and~\eqref{eq:MainThm2}, we have that $S\in \minlsepst{M}$, where $C_s(G\sminus S)=C_s(M\sminus S)\cup C_s(G\sminus Q_s)$. Since $A\subseteq C_s(G\sminus S)$, then $D=A{\setminus}C_s(G\sminus Q_s)\subseteq C_s(M\sminus S)$. Suppose, by way of contradiction, that $S\notin \F_{sD}(M)$. Since $D\subseteq C_s(M\sminus S)$, it means that there exists an $S'\in \minlsepst{M}$ where $D\subseteq C_s(M\sminus S')\subset C_s(M\sminus S)$. By~\eqref{eq:MainThm2}, we have that $S'\in \minlsepst{G}$ where $C_s(G\sminus S')=C_s(M\sminus S')\cup C_s(G\sminus Q_s)\subset C_s(M\sminus S)\cup C_s(G\sminus Q_s)=C_s(G\sminus S)$. But then, $A\subseteq C_s(G\sminus S')\subset C_s(G \sminus S)$, contradicting the assumption that $S\in \F_{sA}(G)$. Therefore, $\F_{sA}(G)\subseteq \F_{sD}(M)$.

Now, let $S\in \F_{sD}(M)$. From~\eqref{eq:MainThm2}, we have that $S\in \minlsepst{G}$ where $C_s(G\sminus S)=C_s(M\sminus S)\cup C_s(G\sminus Q_s)$. Since $D\subseteq C_s(M\sminus S)$, we have that $A\subseteq C_s(G\sminus S)$. If $S\notin \F_{sA}(G)$, then there exists an $S'\in \minlsepst{G}$ where $A\subseteq C_s(G\sminus S') \subset C_s(G\sminus S)$. From~\eqref{eq:MainThm1} and~\eqref{eq:MainThm2}, we have that $S'\in \minlsepst{M}$, where $C_s(G\sminus S')=C_s(M\sminus S')\cup C_s(G\sminus Q_s)$. Therefore, we have that $A\subseteq C_s(G\sminus S')=C_s(M\sminus S')\cup C_s(G\sminus Q_s)\subset C_s(M\sminus S)\cup C_s(G\sminus Q_s)=C_s(G\sminus S)$. In particular, this means that $D\subseteq C_s(M\sminus S')$ and $C_s(M\sminus S')\subset C_s(M\sminus S)$, contradicting the assumption that $S\in \F_{sD}(M)$.

\subsection{Pseudocode and Runtime Analysis of the $\algname{CloseTo}$ Procedure}
Algorithm $\algname{CloseTo}$ (Figure~\ref{fig:closeTo})  receives as input a weighted, AT-free graph $G$, two distinct vertices $s,t\in \nodes(G)$, and a subset $A\subseteq \nodes(G)$, and returns the set $\F_{sA}(G)$ of minimal $s,t$-separators that are close to $sA$ (Definition~\ref{def:closeToA}).

If $\F_{sA}\neq \emptyset$, then $sA\cap N_G[t]=\emptyset$ (line~\ref{line:returnnull}). If $S\in \F_{sA}(G)$, then $L\eqdef N_G(sA)\cap N_G(t)\subseteq S$. Therefore, the algorithm processes $G'\eqdef G\sminus L$, which is also AT-free (line~\ref{line:L}).  
Let $T_t,T_s\in \minlsepst{G'}$, where $T_t\subseteq N_{G'}(t)$ and $T_s\subseteq N_{G'}(s)$ which, by Lemma~\ref{lem:uniqueCloseVertexSet}, are unique and can be computed in time $O(m)$ (lines~\ref{line:Tt} and~\ref{line:Ts}). 
If $sA\not\subseteq C_s(G'\sminus T_t)$, then by Lemma~\ref{lem:belongtoCs}, it holds that $A\not\subseteq C_s(G'\sminus S)$ for every $S\in \minlsepst{G'}$. Hence, $\F_{sA}(G)=\emptyset$ is returned in line~\ref{line:returnnull2}. 
If $sA \subseteq C_s(G'\sminus T_s)$, then since $T_s\subseteq N_{G'}(s)$, then $T_s \subseteq S\cup C_s(G'\sminus S)$ for every $S\in \minlsepst{G'}$. By Lemma~\ref{lem:inclusionCsCt}, $sA\subseteq C_s(G'\sminus T_s)\subseteq C_s(G'\sminus S)$ for every $S\in \minlsepst{G'}$. By Definition~\ref{def:closeToA} of minimal separator close to $sA$, we get that $T_s$ is the unique minimal $s,t$-separator that is close to $sA$. Therefore, $\set{(T_s\cup L)}$ it is returned in line~\ref{line:inCsTs}.

By Corollary~\ref{corr:heirarchical2} (see also~\eqref{eq:mainThm22}), we have that  $\F_{sA}(G')=\bigcup_{v\in S^*}\F_{sA_v}(G')$ where $A_v\eqdef A_1\cup \set{v}$, and where $A_1\eqdef A\cap(C_s(G\sminus T_s)\cup T_s \cup C_t(G\sminus T_s))$, and $S^*$ is computed in line~\ref{line:sstar}.
Therefore, the algorithm iterates over all $v\in S^*$ in lines~\ref{line:loopStart}-\ref{line:loopEnd}, and computes $\F_{sA_v}(G')$ for every $v\in S^*$. In lines~\ref{line:getH} and~\ref{line:getS1}, the algorithm computes $S_1\in \minlsepst{H}=\minlsep{sA,t}{G'}$ according to Lemma~\ref{lem:MinlsASep}. If $A_v \subseteq C_s(H\sminus S_1)$, then $\F_{sA_v}(G)=\set{(S_1\cup L)}$ (Case 1 in the proof of Theorem~\ref{thm:mainThmATFree}). Otherwise, the algorithm generates the graph $M$ where $Q_s \subseteq N_M(s)$ in line~\ref{line:genM}. By~\eqref{eq:MainThm1} and~\eqref{eq:MainThm2}, we have that $\F_{sA_v}(G')=\F_{sD_v}(M)\subseteq \minlsepst{M}$, where $D_v\eqdef A_v{\setminus}C_s(G\sminus Q_s)$. By Lemma~\ref{lem:technicalLemmaCloseToA}, there are at most $|Q_s|$ minimal $s,t$-separators that are close to $sD_v$ in $M$; one for every $w\in Q_s$ that are generated in the loop in lines~\ref{line:loopwStart}-\ref{line:loopwEnd}.     

The pseudocode of Figure~\ref{fig:closeTo} presents the algorithm in the case where $S^*\neq \emptyset$. If $S^*=\emptyset$, then $A\subseteq C_s(G'\sminus T_s)\cup T_s \cup C_t(G'\sminus T_s)$, and the algorithm will execute the pseudocode in lines~\ref{line:loopStartp1}-\ref{line:loopwEnd} just once where $A_v=A$.

\noindent{\bf Runtime.} The runtime of the procedure is $\max\set{1,|S^*|}\cdot O(n \cdot m)$, and hence $O(n^2m)$.

\renewcommand{\algorithmicrequire}{\textbf{Input: }}
\renewcommand{\algorithmicensure}{\textbf{Output: }}
\def\F{\mathcal{F}}
\begin{algserieswide}
	{H}{Algorithm for returning the minimal $s,t$-separators that are close to $sA$ according to Definition~\ref{def:closeToA}. \label{fig:closeTo}}
	\begin{insidealgwide}{CloseTo}{$G$, $s$, $t$, $A$}
		\REQUIRE{AT-free graph $G$, $s,t\in \nodes(G)$, and $A\subseteq \nodes(G){\setminus}\set{s,t}$.}
		\ENSURE{$\F_{sA}(G)$ (Definition~\ref{def:closeToA})}.
		\IF{$sA\cap N_G[t]\neq \emptyset$}
		\RETURN $\emptyset$ \label{line:returnnull}
		\ENDIF
		\STATE $L\eqdef N_G(sA) \cap N_G(t)$
		\STATE $G'\eqdef G\sminus L$ \label{line:L}
		\STATE Compute $T_t\in \minlsepst{G'}$ where $T_t\subseteq N_{G'}(t)$ \COMMENT{Lemma~\ref{lem:uniqueCloseVertexSet}} \label{line:Tt}
		\IF{$sA\not\subseteq C_s(G'\sminus T_t)$}
		\RETURN $\emptyset$ \label{line:returnnull2} \COMMENT{Lemma~\ref{lem:belongtoCs}}
		\ENDIF
		\STATE Compute $T_s\in \minlsepst{G'}$ where $T_s\subseteq N_{G'}(s)$ \COMMENT{Lemma~\ref{lem:uniqueCloseVertexSet}} \label{line:Ts}
		\IF{$sA \subseteq C_s(G'\sminus T_s)$}
		\RETURN $\set{(T_s\cup L)}$ \label{line:inCsTs}
		\ENDIF
		\STATE Let $\set{C_1,\dots,C_\ell}\eqdef \set{C\in \cc(G'\sminus T_s): A\cap C\neq \emptyset, s\notin C, t\notin C}$
		\STATE $S^* \eqdef \bigcap_{i=1}^\ell N_{G'}(C_i)$ \label{line:sstar}
		\STATE $\F \gets \emptyset$
		\FORALL{$v\in S^*$} \label{line:loopStart}
		\STATE $A_v \eqdef (A\cap(C_s(G\sminus T_s)\cup T_s \cup C_t(G\sminus T_s)))\cup \set{v}$  \label{line:loopStartp1}
		\STATE Let $H$ be the graph where $\nodes(H)\eqdef \nodes(G')$ and $\edges(H)\eqdef \edges(G')\cup\set{(s,z): z\in N_{G'}[A_v]}$ \label{line:getH}
		\STATE Let $S_1\in \minlsepst{H}$ where $S_1\subseteq N_H(s)$ \COMMENT{By Lemma~\ref{lem:MinlsASep}, $S_1\in \minlsep{sA,t}{G'}$} \label{line:getS1}
		%	\STATE $C_s\eqdef C_s(H\sminus S_1)$
		\IF{$A_v \subseteq C_s(H\sminus S_1)$}
		\STATE $\F \gets \F \cup \set{(S_1\cup L)}$
		\ELSE
		\STATE $Q_s\gets N_{G'}(C_s(H\sminus S_1))\cap S_1$ \COMMENT{$Q_s\in \minlsepst{G'}$. By Claim 4, $Q_s \subseteq N_{G'}(a)$, for every $a\in A_v{\setminus}C_s$}
		\STATE Let $M$ be the graph that results from $G'$ by contracting $C_s(G'\sminus Q_s)$ to vertex $s$ \label{line:genM} \COMMENT{See~\eqref{eq:MainThm1} and~\eqref{eq:MainThm2}.}
		%		\STATE Let $M$ be the graph where $\nodes(M)=\nodes(G')$ and $\edges(M)=\edges(G')\cup \set{(s,v):v\in Q_s}$ \label{line:genM} \COMMENT{See~\eqref{eq:MainThm1} and~\eqref{eq:MainThm2}.}
		\FORALL{$w \in Q_s$} \label{line:loopwStart}
		\STATE Let $M_w$ be the graph that results from $M$ by contracting $(s,w)$ to $s$. \COMMENT{Lemma~\ref{lem:contract}}
		\STATE Let $T_w \in \minlsepst{M_w}$ where $T_w \subseteq N_{M_w}(s)$
		\STATE $\F \gets \F \cup \set{(T_w \cup L)}$ \label{line:loopwEnd}
		\ENDFOR
		\ENDIF
		\ENDFOR \label{line:loopEnd}
		\RETURN $\F$
	\end{insidealgwide}
\end{algserieswide}

\section{Hardness of \textsc{Min-Safe Separator}}
\label{sec:hardnessOfMinSafeSep}
\eat{
\subsection{Minimal Separators and the 2-Disjoint Connected Subgraphs Problem}
\label{sec:inducedDisjointPath}
}
\eat{
Let $A$ and $B$ be two disjoint, non-adjacent subsets of $\nodes(G)$. A vertex-set $X\subseteq \nodes(G)$ is called an \e{$A,B$-separator} if, in the graph $G[\nodes(G)\setminus X]$, there is no path between $A$ and $B$. We say that $X$ is a minimal $A,B$-separator if no proper subset of $X$ has this property. 
We denote by $\minlsep{A,B}{G}$  the set of minimal $A,B$-separators in $G$. We say that an $A,B$-separator $X$ is 
\e{safe} if there are two distinct, connected components $C_A,C_B\in \cc_G(X)$, where $A\subseteq C_A$ and $B\subseteq C_B$.
We say that a subset $X\subseteq \nodes(G)\setminus (A\cup B)$ is a minimum $A,B$-separator if, for every $A,B$-separator $S$, it holds that $|X|\leq |S|$. We denote by $\minsep_{A,B}(G)$ the set of minimum $A,B$-separators, and by $\kappa_{A,B}(G)$ the size of a minimum $A,B$-separator.
}
The \textsc{2-disjoint connected subgraphs} problem is an intensively studied problem defined as follows.
The input is an undirected graph $G$ together with two disjoint subsets of vertices $A,B\subseteq \nodes(G)$. The goal is to decide whether there exist two disjoint subsets $A_1,B_1\subseteq \nodes(G)$, such that $A\subseteq A_1$, $B\subseteq B_1$, and $G[A_1]$ and $G[B_1]$ are connected. 
For two disjoint subsets $A,B\subseteq \nodes(G)$ in an undirected graph $G$, we denote by $\texttt{2Dis}(G,A,B)$ the instance of the \textsc{2-Disjoint Connected Subgraph} problem in $G$ with vertex-subsets $A$ and $B$.

The \textsc{2-Disjoint Connected Subgraphs} problem is NP-complete~\cite{van_t_hof_partitioning_2009}, and remains so even if one of the input vertex-sets contains only two vertices, or if the input graph contains a $P_4$~\cite{van_t_hof_partitioning_2009}. Motivated by an application in computational-geometry, Gray et al.~\cite{gray_removing_2012} show that the \textsc{2-Disjoint Connected Subgraphs} problem is NP-complete even for the class of planar graphs.
A na\"ive brute-force algorithm that tries all $2$-partitions of the vertices in $\nodes(G)\setminus (A\cup B)$ runs in time $O(2^nn^{O(1)})$.
Cygan et al.~\cite{cygan_solving_2014} were the first to present an exponential time algorithm for general graphs that is faster than the trivial $O(2^nn^{O(1)})$ algorithm, and runs in time $O^*(1.933^n)$ (i.e., excluding poly-logarithmic terms). This result was later improved by Telle and Villanger~\cite{DBLP:conf/wg/TelleV13}, that presented an enumeration-based algorithm that runs in time $O^*(1.7804^n)$. 

Restricting the input to the \textsc{2-Disjoint Connected Subgraphs} problem to special graph classes has been the focal point of previous research efforts. This approach has led to the discovery of islands of tractability and improved our understanding of its difficulty.
\eat{For this reason, much previous work on the \textsc{2-Disjoint Connected Subgraphs} problem focused on restricting the input graph.}
For example, in~\cite{van_t_hof_partitioning_2009}, the authors presented an algorithm that runs in polynomial time on co-graphs, and in time
$O((2-\varepsilon(\ell))^n)$ for $P_\ell$-free graphs (i.e., graphs that do not contain an induced path of size $\ell$). In subsequent work~\cite{paulusma_partitioning_2011}, the authors show that the \textsc{2-Disjoint Connected Subgraphs} problem can be solved in time $O(1.2501^n)$ on $P_6$-free graphs. More recently, Kern et al.~\cite{kern_disjoint_2022} studied the \textsc{2-Disjoint Connected Subgraphs} problem on $H$-free graphs (i.e., all graphs that do not contain the graph $H$ as an induced subgraph).
Golovach, Kratsch, and Paulusma show that \textsc{2-Disjoint Connected Subgraphs} can be solved in polynomial time in AT-free graphs~\cite{DBLP:journals/tcs/GolovachKP13}. However, their algorithms has a prohibitive runtime of $O(n^{15})$.

Deciding whether a safe separator exists remains NP-hard even if $|A|=|B|=2$. We show this by reduction from
the \textsc{Induced disjoint paths} problem. The input to this problem is an undirected graph $G$ and a collection of $k$ vertex pairs $\set{(s_1,t_1),\dots,(s_k,t_k)}$ where $s_i\neq t_i$ and $k\geq 2$. The goal is to determine whether $G$ has a set of $k$ paths that are mutually induced (i.e., they have neither common vertices nor adjacent vertices). The \textsc{Induced disjoint paths} problem remains NP-hard even if $k=2$~\cite{BIENSTOCK199185}. However, when $G$ is a planar graph and $k=2$, the problem can be solved in polynomial time, as shown by 
Kawarabayashi and Kobayashi~\cite{DBLP:conf/ipco/KawarabayashiK08}.

\eat{
	\begin{theorem}
		\label{thm:reduction2Diss}
		Let $A,B \subseteq \nodes(G)$ be disjoint. Let $a\in A$ and $b\in B$. There exists a solution to the instance $\texttt{2Dis}(G,A,B)$ of the 2-Disjoint Connected Subgraph problem 
		if and only if there exists a minimal $ab$-separator $S\in \minlsep{ab}{G}$, such that $S$ is an $AB$-partitioning.
	\end{theorem}
}
\begin{theorem}
	\label{thm:reduction2Diss}
	\sc{Min Safe Separator} is NP-Hard.
\end{theorem}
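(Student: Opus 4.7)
The plan is to reduce from the \textsc{2-Disjoint Connected Subgraphs} problem, which is NP-complete even on restricted graph classes. Given an instance $\texttt{2Dis}(G,A,B)$, I would construct $G'$ by subdividing every edge $e=(u,v)\in \edges(G)$: replace $e$ with a new vertex $x_e$ and edges $(u,x_e),(x_e,v)$. Observe that in $G'$ the sets $A,B$ remain disjoint and are now non-adjacent (every original $A$--$B$ edge, if any, is now a length-two path), and $G'$ is connected whenever $G$ is. The claim to prove is that $\texttt{2Dis}(G,A,B)$ admits a solution if and only if $G'$ admits a safe $A,B$-separator, which immediately yields NP-hardness of \textsc{Min Safe Separator} (since the optimization problem is at least as hard as deciding whether a safe separator exists).

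For the forward direction, suppose $(A_1,B_1)$ is a solution to $\texttt{2Dis}(G,A,B)$ and let $V'\eqdef \nodes(G)\sm(A_1\cup B_1)$. I would define
$$
S \eqdef V' \cup \setof{x_e}{e=(u,v)\in \edges(G),\; u\in A_1,\, v\in B_1} \cup \setof{x_e}{e\in \edges(G) \text{ has an endpoint in } V'}.
$$
After removing $S$ from $G'$, the surviving original vertices are exactly $A_1\cup B_1$ and the surviving subdivision vertices $x_e$ are exactly those for which $e\in \edges(G[A_1])\cup \edges(G[B_1])$. Since $G[A_1]$ is connected, the induced subgraph on $A_1$ together with $\setof{x_e}{e\in \edges(G[A_1])}$ is a connected subgraph of $G'\sm S$ that contains $A$; symmetrically for $B$. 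Thus $S$ is a safe $A,B$-separator.

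For the reverse direction, let $S$ be a safe $A,B$-separator in $G'$ with full components $C_A,C_B\in \cc(G'\sm S)$ containing $A$ and $B$ respectively, and set $A_1\eqdef C_A\cap \nodes(G)$, $B_1\eqdef C_B\cap \nodes(G)$. Disjointness and $A\subseteq A_1$, $B\subseteq B_1$ are immediate. The key observation is that $G'$ is bipartite between $\nodes(G)$ and the subdivision vertices, so any $u,v$-path in $G'\sm S$ with $u,v\in \nodes(G)$ alternates $u=z_0,x_{e_1},z_1,x_{e_2},\ldots,z_\ell=v$, with each $z_i\in \nodes(G)\cap C_A=A_1$ and each pair $(z_i,z_{i+1})$ corresponding to an edge $e_{i+1}\in \edges(G)$. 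This yields a $u,v$-path in $G[A_1]$, so $G[A_1]$ is connected; symmetrically $G[B_1]$ is connected, giving a valid 2-Dis solution.

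I do not anticipate a genuine obstacle here: the reduction is gadget-free and the alternation property of subdivided graphs does all the work. The only points requiring mild care are verifying non-adjacency of $A,B$ in $G'$ (automatic from subdivision), ensuring $S\cap(A\cup B)=\emptyset$ in the forward construction (automatic since $S$ avoids $A_1\cup B_1\supseteq A\cup B$), and confirming that $G'$ inherits connectivity from $G$ so that the problem instance is well-formed. The statement then follows since \textsc{2-Disjoint Connected Subgraphs} is known to be NP-complete~\cite{van_t_hof_partitioning_2009}.
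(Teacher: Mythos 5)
Your reduction is correct and is essentially the same as the paper's: subdivide every edge of the \textsc{2-Disjoint Connected Subgraphs} instance, map a solution $(A_1,B_1)$ to the complement of $A_1\cup B_1$ together with the subdivision vertices of $\edges(G[A_1])\cup\edges(G[B_1])$, and recover a solution from a safe separator by intersecting the two components with $\nodes(G)$. The only difference is cosmetic — you spell out the bipartite alternation argument for the reverse direction, which the paper leaves implicit.
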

\begin{proof}
	We prove by reduction from the \textsc{2-Disjoint Connected Subgraph} problem.
	Given the instance $\texttt{2Dis}(G,A,B)$, create the graph $G'$ by subdividing every edge in $G$. In other words, we replace every edge $(u,v)\in \edges(G)$ with the two-path $(u,e_{uv},v)$ in $G'$. Now, let $A_1,B_1\subseteq \nodes(G)$ be a solution to the instance $\texttt{2Dis}(G,A,B)$. That is, $A\subseteq A_1$, $B\subseteq B_1$ and $G[A_1]$ and $G[B_1]$ are connected. Let $A_1'\subseteq \nodes(G')$ be $A_1$ plus the set of all vertices in $G'$ that correspond to edges in $G[A_1]$. Similarly, define $B_1'$ to be $B_1$ plus the set of all vertices in $G'$ that correspond to edges in $G[B_1]$. Since $G[A_1]$ and $G[B_1]$ are connected, then $G'[A_1']$ and $G'[B_1']$ are connected, and contain the sets $A$ and $B$, respectively. Clearly, $\nodes(G')\setminus (A_1'\cup B_1')$ is a safe $A,B$-separator in $G'$.\eat{ Therefore, some subset $S\subseteq \nodes(G')\setminus (A_1'\cup B_1')$ is a minimal $a,b$-separator in $G'$ that is also an  $A,B$-partitioning.}
	
	Now, let $S\subseteq \nodes(G')$ be a safe $A,B$-separator in $G'$, and let $C_A(G'\sminus S)$ and $C_B(G'\sminus S)$ be the connected components of $\cc(G'\sminus S)$ that contain $A$ and $B$, respectively.
	Let $A_1$ and $B_1$ be the vertices in $C_A(G'\sminus S)\cap \nodes(G)$ and $C_B(G'\sminus S)\cap \nodes(G)$ respectively, that correspond to the vertices of $G$ (i.e., drop the vertices of $G'$ that correspond to edges of $G$). Then $A_1$ and $B_1$ are disjoint connected vertex-sets of $\nodes(G)$, that contain $A$ and $B$, respectively, and hence a solution to $\texttt{2Dis}(G,A,B)$.\qed
	\eat{
		Now, let $A_1,B_1\subseteq \nodes(G)$ be a solution to the instance $\texttt{2Dis}(G,A,B)$. That is, $A\subseteq A_1$, $B\subseteq B_1$ and $G[A_1]$ and $G[B_1]$ are connected.
		Define $A_1'\eqdef A_1$, and repeat the following procedure. While $N_G(A_1')\setminus N_G[B_1]\neq \emptyset$, take $v\in N_G(A_1')\setminus N_G[B_1]$ and add it to $A_1'$ (i.e., $A_1' \gets A_1' \cup \set{v}$). Since $G[A_1]$ is connected, then at the end of the process $G[A_1']$  is connected as well. Also, at the end of this process $N_G(A_1')\subseteq N_G(B_1)$. We claim that $S\eqdef N_G(A_1')$ is a minimal $ab$-separator of $G$. Since $S=N_G(A_1')$ where $a\in A_1'$ and $G[A_1']$ is connected, then $S$ separates $a$ from $b$ in $G$. On the other hand, since $S\subseteq N_G[B_1]$, then for every $w\in S$, there is a path from $A_1'$ to $B_1$, and hence from $a\in A_1'$ to $b\in B_1$ in $G-(S-w)$ for any $w\in S$. Therefore, $S\in \minlsep{a,b}{G}$. Define $B_1'$ to be the connected components that contains $b$ in $G-S$. Since $A\subseteq A_1 \subseteq A_1'$, and $B \subseteq B_1 \subseteq B_1'$, then $S$ is an $AB$-partitioning.
	}
\end{proof}
\eat{An immediate consequence of Theorem~\ref{thm:reduction2Diss} is that the problem of deciding whether there exists a separator that is a partitioning is NP-hard. }
We show that \textsc{Min Safe Separator} is NP-hard even if $|A|=|B|=2$. This is done by reduction from the \textsc{Induced Disjoint Paths} Problem~\cite{DBLP:conf/ipco/KawarabayashiK08}.
A set of paths $P_1,\dots,P_k$ are said to be \e{mutually induced}~\cite{DBLP:conf/ipco/KawarabayashiK08} if they have neither common vertices, nor adjacent vertices, for every pair of distinct paths $P_i,P_j$.
\begin{citeddefinition}{\cite{DBLP:conf/ipco/KawarabayashiK08}, \sc{Induced Disjoint Paths} Problem}
	\label{def:DPP}	
	Let $G$ be an undirected graph, and $\set{(s_1,t_1),\dots,(s_k,t_k)}$ a collection of vertex pairs where, for all $i\in [1,k]$, $s_i\neq t_i$. The problem is to decide whether $G$ has a set of $k$ mutually induced paths $P_1,\dots,P_k$ such that $P_i$ is an $(s_i,t_i)$ path for $i\in [1,k]$.
\end{citeddefinition}
\begin{citedtheorem}{\cite{DBLP:conf/ipco/KawarabayashiK08}}
	\label{thm:DPP}
	{\sc{Induced Disjoint Paths}} is NP-hard when $k=2$ and $G$ is a general undirected graph. \eat{, and solvable in polynomial time when $G$ is a planar graph, and $k$ is fixed.}
\end{citedtheorem}

\begin{theorem}
	\label{thm:ABHard}
	{\sc{Min Safe Separator}} is NP-Hard when each input vertex-set contains exactly two vertices (i.e., $|A|=|B|=2$).
\end{theorem}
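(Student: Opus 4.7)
The plan is to reduce from \textsc{Induced Disjoint Paths} with $k=2$, which is NP-hard by Theorem~\ref{thm:DPP}. The construction will be direct: given an instance $(G,(s_1,t_1),(s_2,t_2))$, I would set $A \eqdef \{s_1,t_1\}$ and $B \eqdef \{s_2,t_2\}$, leave $G$ unchanged (with unit vertex weights), and ask whether a safe $A,B$-separator exists in $G$. Before treating this as a valid instance of \textsc{Min Safe Separator}, I would dispose of the degenerate cases in polynomial time: if $\{s_1,t_1\}\cap\{s_2,t_2\}\neq\emptyset$, or if some vertex of $\{s_1,t_1\}$ is adjacent in $G$ to some vertex of $\{s_2,t_2\}$, then no pair of mutually induced $(s_1,t_1)$- and $(s_2,t_2)$-paths can exist and we answer ``no''; if $G$ is disconnected I would restrict to the connected component(s) meeting $A\cup B$ by straightforward case analysis. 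Thus, without loss of generality, the reduced instance will be a connected graph in which $A$ and $B$ are disjoint and non-adjacent, as required by the problem definition.

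For the forward direction (mutually induced paths $\Rightarrow$ safe separator), given two mutually induced paths $P_1,P_2$ I would take $S \eqdef \nodes(G)\setminus(\nodes(P_1)\cup\nodes(P_2))$. Since $A\cup B\subseteq \nodes(P_1)\cup\nodes(P_2)$, the set $S$ is disjoint from $A\cup B$. In $G\sminus S$, $\nodes(P_1)$ induces a connected subgraph containing $A$ and $\nodes(P_2)$ induces a connected subgraph containing $B$; by mutual inducedness there is no edge between them, so $A$ and $B$ lie in distinct connected components of $G\sminus S$. Hence $S$ is a safe $A,B$-separator.

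For the reverse direction (safe separator $\Rightarrow$ mutually induced paths), given a safe $A,B$-separator $S$ with connected components $C_A\supseteq A$ and $C_B\supseteq B$ in $G\sminus S$, I would pick any $s_1,t_1$-path $P_1$ in $G[C_A]$ and any $s_2,t_2$-path $P_2$ in $G[C_B]$; these exist because $G[C_A]$ and $G[C_B]$ are connected. The paths are vertex-disjoint because $C_A\cap C_B=\emptyset$, and no edge of $G$ joins them because they lie in different connected components of $G\sminus S$. Hence $P_1$ and $P_2$ are mutually induced, as needed.

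I do not anticipate a serious technical obstacle: the only subtlety is ensuring that the reduced instance satisfies the non-adjacency hypothesis on $A$ and $B$, which is precisely what the preprocessing above guarantees. The reduction clearly runs in polynomial time, so combined with Theorem~\ref{thm:DPP} it yields NP-hardness of the decision version of \textsc{Min Safe Separator} already when $|A|=|B|=2$, which immediately implies NP-hardness of the optimization version stated in the theorem.
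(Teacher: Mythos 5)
Your reduction is correct and is essentially the same as the paper's own proof: a reduction from \textsc{Induced Disjoint Paths} with $k=2$, taking $A=\{s_1,t_1\}$, $B=\{s_2,t_2\}$, with the complement of the two paths serving as the safe separator in one direction and paths inside the two components $C_A$, $C_B$ in the other. The only difference is that you spell out the preprocessing of degenerate instances (coinciding or adjacent terminals, disconnectedness), which the paper simply assumes away by taking the vertex pairs to be disjoint and non-adjacent.
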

\begin{proof}
	We prove by reduction from \textsc{Induced Disjoint Paths}.
	Consider an instance of the induced disjoint path problem where $k=2$, and let $\set{(s_1,t_1),(s_2,t_2)}$ be the pair of non-adjacent, disjoint vertex-pairs. Define $A=\set{s_1,t_1}$ and $B=\set{s_2,t_2}$.
	Let $P_1$ and $P_2$ be mutually induced paths between $s_1$ and $t_1$, and $s_2$ and $t_2$, respectively.
	That is, the pair $P_1$, $P_2$ is a solution to \textsc{Induced Disjoint Paths}. Then $\nodes(G)\setminus (\nodes(P_1)\cup \nodes(P_2))$ is a safe $A,B$-separator.
	
	Now, suppose that $S\subseteq \nodes(G)$ is a safe $A,B$-separator of $G$.
	By definition, there exist two non-adjacent, disjoint, connected components $Z_A,Z_B\in \cc(G\sminus S)$ where $Z_A \supseteq A$ and $Z_B \supseteq B$. Since $G[Z_A]$ ($G[Z_B]$) are connected, then $G[Z_A]$ contains a path $P_1$ from $s_1$ to $t_1$. Likewise, $G[Z_B]$ contains a path $P_2$ from $s_2$ to $t_2$.
	Since $Z_A$ and $Z_B$ are disjoint and non-adjacent, $\nodes(P_1)\subseteq Z_A$, and $\nodes(P_2)\subseteq Z_B$, then the pair of paths $P_1,P_2$ is a solution to \textsc{Induced disjoint paths}. \qed
\end{proof}

\end{document}